\newcommand{\eg}{e.\,g., }
\definecolor{reportcolor}{HTML}{188781}
\definecolor{papercolor}{HTML}{536872}
\newcommand{\reportmode}{\toggletrue{REPORT}\togglefalse{DEBUG}}
\newcommand{\reportonly}[1]{\iftoggle{DEBUG}{{\color{reportcolor} (Report) #1}}{\iftoggle{REPORT}{#1}{}}}
\newcommand{\reportorpaper}[2]{\iftoggle{DEBUG}{{\color{reportcolor} (Report) #1} {\color{papercolor} (Paper) #2}}{\iftoggle{REPORT}{#1}{#2}}}
\newcommand{\authormode}{\togglefalse{ANONYMOUS}}
\newcommand{\authoronly}[1]{\iftoggle{ANONYMOUS}{}{#1}}
\newcommand{\nocommentmode}{\togglefalse{COMMENTS}}
\newcommand{\toggleablecomment}[3]{%
  \iftoggle{COMMENTS}{{\color{#1}{\textbf{#2}: #3}}}{}}
\newcommand{\steffen}[1]{\toggleablecomment{Red}{Steffen}{#1}}
\newcommand{\mref}[2]{\hyperref[#2]{#1~\ref*{#2}}}
\definecolor{ultralightgray}{HTML}{eeeeee}
\definecolor{darkgray}{gray}{0.20}
\definecolor{shaclextendcolor}{HTML}{004bb3}
\definecolor{shacltranslatecolor}{gray}{0.0}
\definecolor{rdfstarextendcolor}{HTML}{00b368}
\definecolor{lightgray}{gray}{0.40}
\definecolor{meta1color}{HTML}{0066cc}
\definecolor{meta2color}{HTML}{cc0066}
\definecolor{typecolor}{gray}{0.55}
\newlength\vgap
\newlength\hgap
\newlength\kvgap
\newcommand{\wheretolook}{extended version\xspace}
\newcommand{\sccq}{q}
\newcommand{\sccqname}{SCCQ\xspace}
\newcommand{\sccquery}[2]{\{ #1 \} \gets \{ #2 \}}
\newcommand{\sccqpattern}{P}
\newcommand{\sccqtemplate}{H}
\newcommand{\sccqformal}{\sccqtemplate \gets \sccqpattern}
\newcommand{\vconcept}[1]{V_{#1}}
\newcommand{\targetquery}{\psi}
\newcommand{\rdftype}{\text{rdf:type}}
\newcommand{\construct}{\texttt{CONSTRUCT}\xspace}
\newcommand{\where}{\texttt{WHERE}\xspace}
\newcommand{\qvar}[1]{\textit{#1}}
\newcommand{\var}{\operatorname{var}}
\newcommand{\allshapes}{\mathcal{S}}
\newcommand{\shapesin}{\mathcal{S}_{\mathrm{in}}}
\newcommand{\shapesout}{\mathcal{S}_{\text{out}}}
\newcommand{\shapesoutopt}{\mathcal{S}_{\text{out-opt}}}
\newcommand{\shapescandidates}{\mathcal{S}_{\text{can}}}
\newcommand{\shaclvalid}[2]{\operatorname{valid}(#1,#2)}
\newcommand{\sparqleval}[2]{\llbracket #1 \rrbracket_{#2}}
\newcommand{\DLogics}{\mathcal{ALCHOI}}
\newcommand{\atomC}[2]{#1{:}#2}
\newcommand{\atomP}[3]{(#1,#2){:}#3}
\newcommand{\ABox}{\mathcal{A}}
\newcommand{\TBox}{\mathcal{T}}
\newcommand{\Int}{\mathcal{I}}
\newcommand{\KB}{\mathcal{K}}
\newcommand{\obj}{\operatorname{ind}}
\newcommand{\nest}{\operatorname{ndep}}
\newcommand{\ConceptNames}{\mathbf{C}}
\newcommand{\IndividualNames}{\mathbf{I}}
\newcommand{\RoleNames}{\mathbf{R}}
\newcommand{\Variables}{\mathbf{V}}
\newcommand{\AllConceptNames}{\bar{\mathbf{C}}}
\newcommand{\AllIndividualNames}{\bar{\mathbf{I}}}
\newcommand{\AllRoleNames}{\bar{\mathbf{R}}}
\newcommand{\AllVariables}{\bar{\mathbf{V}}}
\newcommand{\voc}{\operatorname{voc}}
\newcommand{\graphout}{G_{\mathrm{out}}}
\newcommand{\graphin}{G_{\mathrm{in}}}
\newcommand{\problemLUB}{\textsc{OutputShapes}\xspace}
\newcommand{\problemLUBType}{$(\shapesin, q) \mapsto \shapesout$\xspace}
\newcommand{\problemLUBTypeOpt}{$(\shapesin, q) \mapsto \shapesoutopt$\xspace}
\newcommand{\problemLUBIn}{%
  A finite set of shapes $\shapesin$ and a \sccqname $q$.
}
\newcommand{\problemSE}{\textsc{IsOutputShape}\xspace}
\newcommand{\decision}{\{\textsc{yes}, \textsc{na}\}\xspace}
\newcommand{\problemSEType}{$(\shapesin, q, s) \mapsto \decision$\xspace}
\newcommand{\vkb}{\Sigma_\mathrm{vkb}}
\newcommand{\map}{\Sigma_\mathrm{map}}
\newcommand{\propsub}{\Sigma_\mathrm{prop}}
\newcommand{\ssin}{\Sigma_\mathrm{in}}
\newcommand{\UNA}{\operatorname{UNA}}
\newcommand{\CWA}{\operatorname{CWA}}
\newcommand{\Concept}[1]{\operatorname{C}_{#1}}
\newcommand{\MapAxioms}{\operatorname{MA}}
\newcommand{\MapAxiomsSin}{\operatorname{MA}_{\shapesin}}
\newcommand{\PropertyAxioms}{\operatorname{RS}}
\newcommand{\dom}{\operatorname{dom}}
\newcommand{\ins}{\operatorname{ins}}
\newcommand{\vcg}{\operatorname{vcg}}
\newcommand{\graphmed}{G_{\mathrm{med}}}
\newcommand{\graphext}{G_{\mathrm{ext}}}
\newcommand{\graphvar}{G_{\Variables}}
\newcommand{\himage}[1]{h(#1)}
\theoremstyle{plain}
\newtheorem{proposition}{Proposition}
\newtheorem{lemma}{Lemma}
\newtheorem{theorem}{Theorem}
\newtheorem{corollary}{Corollary}
\newtheorem*{proposition*}{Proposition}
\newtheorem*{lemma*}{Lemma}
\newtheorem*{theorem*}{Theorem}
\newtheorem*{corollary*}{Corollary}
\theoremstyle{definition}
\newtheorem{definition}{Definition}
\newtheorem{example}{Example}
\newtheorem*{definition*}{Definition}
\newtheorem*{example*}{Example}
\newcommand{\anotherexample}[5]{%
    \begin{example}
        With input $q_{#1} = #2$ and $S_{#1} = \{#3\}$, we obtain the set $\{#4\}$ as output.
        
        #5
    \end{example}%
}
\tikzset{new node/.style={circle,minimum size=1.0em,inner sep=0.0em}}
\definecolor{colorgin}{HTML}{f58231}
\definecolor{colorgmed}{HTML}{800000}
\definecolor{colorgout}{HTML}{e6194B}
\definecolor{colorgvar}{HTML}{4363d8}
\newcommand{\namein}[1]{\textcolor{colorgin}{#1}}
\newcommand{\namemed}[1]{\textcolor{colorgmed}{\dot{#1}}}
\newcommand{\nameout}[1]{\textcolor{colorgout}{\ddot{#1}}}
\newcommand{\nameoutnodots}[1]{\textcolor{colorgout}{#1}}
\newcommand{\namevar}[1]{\textcolor{colorgvar}{#1}}
\newcommand{\eA}{A} %
\newcommand{\eB}{B} %
\newcommand{\eE}{E} %
\newcommand{\ep}{p} %
\newcommand{\er}{r} %
\newcommand{\ea}{a} %
\newcommand{\eb}{b} %
\newcommand{\ee}{e} %
\newcommand{\ex}{\qvar{x}} %
\newcommand{\ew}{\qvar{w}} %
\newcommand{\ey}{\qvar{y}} %
\newcommand{\ez}{\qvar{z}} %
\newcommand{\eS}{S_1}
\newcommand{\eSout}{S_{1\textrm{-out}}}
\newcommand{\esonen}{s_1}
\newcommand{\estwon}{s_2}
\newcommand{\esthreen}{s_3}
\newcommand{\esone}{\eA \sqsubseteq \exists \ep.\eB}
\newcommand{\estwo}{\exists \er.\top \sqsubseteq \eB}
\newcommand{\esthree}{\eB \sqsubseteq \eE}
\newcommand{\eGone}{G_1}
\newcommand{\eGtwo}{G_2}
\newcommand{\eqonen}{q_1}
\newcommand{\eqone}{\sccquery{%
\atomC{\ey}{\eEon},%
\atomC{\ez}{\eBon},%
\atomP{\ey}{\ez}{\epon}%
}{%
\atomP{\ew}{\ey}{\epi},%
\atomC{\ey}{\eBi},%
\atomP{\ex}{\ez}{\epi},%
\atomC{\ez}{\eEi}}%
}
\newcommand{\eqtwon}{q_2}
\newcommand{\eqtwo}{\sccquery{%
\atomP{\ex}{\ey}{\epon},%
\atomC{\ez}{\eAon}%
}{%
\atomP{\ex}{\ey}{\epi},%
\atomC{\ez}{\eAi}}
}
\newcommand{\eAi}{\namein{\eA}}
\newcommand{\eBi}{\namein{\eB}}
\newcommand{\eEi}{\namein{\eE}}
\newcommand{\epi}{\namein{\ep}}
\newcommand{\eri}{\namein{\er}}
\newcommand{\esonei}{\eAi \sqsubseteq \exists \epi.\eBi}
\newcommand{\estwoi}{\exists \eri.\top \sqsubseteq \eBi}
\newcommand{\esthreei}{\eBi \sqsubseteq \eEi}
\newcommand{\eAm}{\namemed{\eA}}
\newcommand{\eBm}{\namemed{\eB}}
\newcommand{\eEm}{\namemed{\eE}}
\newcommand{\epm}{\namemed{\ep}}
\newcommand{\epmi}{\namemed{\ep^-}}
\newcommand{\eAon}{\nameoutnodots{\eA}}
\newcommand{\eBon}{\nameoutnodots{\eB}}
\newcommand{\eEon}{\nameoutnodots{\eE}}
\newcommand{\epon}{\nameoutnodots{\ep}}
\newcommand{\epoin}{\nameoutnodots{\ep^-}}
\newcommand{\eAo}{\nameout{\eA}}
\newcommand{\eBo}{\nameout{\eB}}
\newcommand{\eEo}{\nameout{\eE}}
\newcommand{\epo}{\nameout{\ep}}
\newcommand{\epoi}{\nameout{\ep^-}}
\newcommand{\ewv}{\namevar{\vconcept{w}}}
\newcommand{\exv}{\namevar{\vconcept{x}}}
\newcommand{\eyv}{\namevar{\vconcept{y}}}
\newcommand{\ezv}{\namevar{\vconcept{z}}}
\lstdefinelanguage{SHACL}{
    keywords = {NodeShape,targetClass,property,path,class,minCount,targetSubjectsOf,a}
}
\lstdefinelanguage{SPARQL}{
    keywords = {CONSTRUCT,WHERE,a}
}
\begin{document}

\reportorpaper{
  \title[From Shapes to Shapes (Extended Version)]{%
  \texorpdfstring{From Shapes to Shapes: Inferring SHACL Shapes for Results of SPARQL \texttt{CONSTRUCT} Queries (Extended Version)}%
                 {From Shapes to Shapes: Inferring SHACL Shapes for Results of SPARQL CONSTRUCT Queries (Extended Version)}}
}{
  \title[From Shapes to Shapes]{%
  \texorpdfstring{From Shapes to Shapes: Inferring SHACL Shapes for Results of SPARQL \texttt{CONSTRUCT} Queries}%
                 {From Shapes to Shapes: Inferring SHACL Shapes for Results of SPARQL CONSTRUCT Queries}}
}

\authoronly{
  \author{Philipp Seifer}
  \orcid{0000-0002-7421-2060}
  \affiliation{%
    \institution{University of Koblenz}
    \city{Koblenz}
    \country{Germany}
  }
  \email{pseifer@uni-koblenz.de}

  \author{Daniel Hernández}
  \orcid{0000-0002-7896-0875}
  \affiliation{%
    \institution{University of Stuttgart}
    \city{Stuttgart}
    \country{Germany}
  }
  \email{daniel.hernandez@ki.uni-stuttgart.de}

  \author{Ralf Lämmel}
  \orcid{0000-0001-9946-4363}
  \affiliation{%
    \institution{University of Koblenz}
    \city{Koblenz}
    \country{Germany}
  }
  \email{laemmel@uni-koblenz.de}

  \author{Steffen Staab}
  \orcid{0000-0002-0780-4154}
  \affiliation{%
    \institution{University of Stuttgart}
    \city{Stuttgart}
    \country{Germany}
  }
  \affiliation{%
    \institution{University of Southampton}
    \city{Southampton}
    \country{UK}
  }
  \email{steffen.staab@ki.uni-stuttgart.de}
}

\begin{abstract}
    SPARQL \construct{} queries allow for the specification of data processing pipelines that transform given input graphs into new output graphs.
    It is now common to constrain graphs through SHACL shapes allowing users to understand which data they can expect and which not.
    However, it becomes challenging to understand what graph data can be
    expected at the end of a data processing pipeline without knowing the 
    particular input data:
    Shape constraints on the input graph may affect the output graph, but may no longer apply literally, and new shapes may be imposed by the query template.
    In this paper, we study the derivation of shape constraints that hold on all possible output graphs of a given SPARQL \construct{} query.
    We assume that the SPARQL \construct{} query is fixed, e.g., being part of a program, whereas the input graphs adhere to input shape constraints but may otherwise vary over time and, thus, are mostly unknown.
    We study a fragment of SPARQL \construct{} queries (\sccqname) and a fragment of SHACL (Simple SHACL).
    We formally define the problem of deriving the most restrictive set of Simple SHACL shapes that constrain the results from evaluating a \sccqname{} over any input graph restricted by a given set of Simple SHACL shapes.
    We propose and implement an algorithm that statically analyses input SHACL shapes and \construct{} queries and prove its soundness and complexity.
\end{abstract}

\begin{CCSXML}
<ccs2012>
   <concept>
       <concept_id>10002951.10002952.10002953.10010146</concept_id>
       <concept_desc>Information systems~Graph-based database models</concept_desc>
       <concept_significance>300</concept_significance>
       </concept>
   <concept>
       <concept_id>10002951.10003260.10003309.10003315.10003314</concept_id>
       <concept_desc>Information systems~Resource Description Framework (RDF)</concept_desc>
       <concept_significance>500</concept_significance>
       </concept>
   <concept>
       <concept_id>10002951.10002952.10003197</concept_id>
       <concept_desc>Information systems~Query languages</concept_desc>
       <concept_significance>300</concept_significance>
       </concept>
   <concept>
       <concept_id>10002951.10002952.10003219.10003215</concept_id>
       <concept_desc>Information systems~Extraction, transformation and loading</concept_desc>
       <concept_significance>300</concept_significance>
       </concept>
 </ccs2012>
\end{CCSXML}
\ccsdesc[300]{Information systems~Graph-based database models}
\ccsdesc[500]{Information systems~Resource Description Framework (RDF)}
\ccsdesc[300]{Information systems~Query languages}
\ccsdesc[300]{Information systems~Extraction, transformation and loading}

\keywords{SHACL; semantic queries; SPARQL CONSTRUCT; data pipelines}

\reportonly{%
  \settopmatter{printfolios=true}
}

\maketitle

\section{Introduction}
\label{sec:intro}

Shape description languages like SHACL~\cite{bibshacl} or ProGS~\cite{DBLP:conf/semweb/SeiferLS21} can play two different, but equally important roles. 
\emph{Normatively} they impose schematic constraints on the evoluion of a graph, such that a triple store may automatically reject illegitimate configurations.
Used \emph{informatively}, they aid software developers in understanding graphs, or inform downstream applications, e.g.,~\cite{DBLP:conf/semweb/LeinbergerSSLS19}.
\steffen{Daniel mentioned another downstream application of SHACL having to do with usage, e.g., of functional dependencies. It would be good to cite a second downstream application.}

Graph query languages like SPARQL \construct or G-CORE~\cite{DBLP:conf/sigmod/AnglesABBFGLPPS18} allow for the composition of queries into data processing pipelines.
To execute a given pipeline, the developer must understand what it may output, regardless of its inputs.
Even if the possible inputs are well-described using a shape language like SHACL, it becomes very challenging to understand which shapes apply after one or several querying steps:
SHACL constraints that apply on the input graph may or may no longer apply, \eg existential quantification may become inapplicable because the corresponding relationship might not be part of the \where clause, and new constraints may or may not be imposed by the \construct template.
The developer may hold misconceptions about the structure of the result graph, which might even seem to be endorsed by one particular graph instance, but can lead to errors (\eg when processing query results within a program) for other valid instances of input graphs.

In this paper, we define the problem of computing a set of SHACL shapes characterizing the possible output graphs of a SPARQL \construct query based on (1) the set of shapes applicable to input graphs, and (2) the graph patterns and the template of the query.
We present an algorithm for constructing a sound upper approximation by statically analyzing shapes and query, relying on an encoding in description logics, and \emph{without} referring to any specific input graph.
Thus, our approach allows for investigating data processing pipelines regardless of what (valid) data will be encountered.

\paragraph*{Outline} 
The paper is structured as follows.
\Cref{sec:foundations} introduces foundations, including the fragment of SPARQL queries, SHACL shapes, and description logics we rely upon.
In \Cref{sec:approach} we formalize our validation problem. 
Throughout \Cref{sec:cand}, \Cref{sec:extended}, and \Cref{sec:algorithm} we break the validation problem down into subproblems, and present algorithms for solving them.
We discuss related work in \Cref{sec:related} and conclude in \Cref{sec:conclusion}.
\reportorpaper{%
	For an overview of the appendix, which includes full proofs, extended examples, 
	an overview of our implementation\footnote{\url{https://github.com/softlang/s2s}} including a feasibility experiment thereof, 
  as well as details on how the approach can be generalized to a larger fragment of SHACL, see \Cref{a:structure}.
}{%
	The extended version\footnote{\url{https://arxiv.org/abs/XXXX.XXXXX} (placeholder)} contains full proofs, additional examples,
	an overview of our implementation~\cite{darus-3977_2024} (also available on GitHub\footnote{\url{https://github.com/softlang/s2s}}),
	and discusses how our approach can be generalized to a much larger fragment of SHACL.
}

\section{Foundations}
\label{sec:foundations}

We interpret all RDF classes, instances, and properties as description logic concepts, individuals, and roles, respectively.
For clarity, we always use description logic terminology, e.g., we will refer to ``concept'' rather than ``RDF class''.
We assume that $\ConceptNames\subset\AllConceptNames$, $\IndividualNames\subset\AllIndividualNames$, $\RoleNames\subset\AllRoleNames$,
and $\Variables\subset\AllVariables$ are \emph{finite} subsets of the four infinite, 
pairwise disjoint sets $\AllConceptNames$, $\AllRoleNames$, $\AllIndividualNames$, and $\AllVariables$.
We assume these finite sets to be given as (sufficiently large) inputs to our problem in order to simplify definitions. 
This is not a restriction as their size is arbitrary.
We use $A,B,E\in \ConceptNames$ for description logic \emph{concept names},
$a,b,e\in \IndividualNames$ for description logic \emph{individual names},
$p,r\in \RoleNames$ for description logic \emph{role names},
and $w,x,y,z\in \Variables$ as SPARQL variables.

\subsection[The Description Logic ALCHOI]{The Description Logic $\DLogics$}

We use the description logic $\DLogics$ to define the semantics of RDF graphs
and SHACL shapes following the formalism by \citet{DBLP:conf/lpnmr/BogaertsJB22}.
We next present the standard $\DLogics$ syntax \reportorpaper{and semantics}{and assume standard semantics as} defined in~\citet{DBLP:conf/dlog/2003handbook}.

\begin{definition}[$\DLogics$ concept descriptions]
  $\DLogics$ \emph{concept descriptions} are defined by the following grammar
  \begin{align*}
    C &\Coloneqq
     \top \mid \bot \mid A \mid \neg C \mid \{a\} \mid C \sqcap C \mid C \sqcup C
     \mid \exists \rho.C 
     \mid \forall \rho.C \\
    \rho &\Coloneqq p \mid p^-
  \end{align*}
  where the symbols $\top$ and $\bot$ are two special concept names, and $A$, $a$, and $p$ stand for concept names, individual names, and role names, respectively.
  Given two concept descriptions $C$ and $D$, two individual names $a,b \in \IndividualNames$, and two role descriptions $\rho_1, \rho_2$ (as defined above), $C \sqsubseteq D$ and $\rho_1 \sqsubseteq \rho_2$ are \emph{axioms}, $\atomC{a}{C}$ is a \emph{concept assertion} and $\atomP{a}{b}{p}$ is a \emph{role assertion}.
  We write $C \equiv D$ as an abbreviation for two axioms $C \sqsubseteq D$ and $D \sqsubseteq C$, and likewise for $\rho_1 \equiv \rho_2$.
\end{definition}

\reportonly{
\noindent An $\DLogics$ \emph{knowledge base} $\mathcal{K}$ is a pair $(\TBox,\ABox)$ where $\TBox$ is a finite set of axioms and $\ABox$ is a finite set of assertions.
In a slight abuse of notation, given an ABox $\ABox$, we write $\ABox$ to refer to the knowledge base $(\emptyset, \ABox)$, and given a TBox $\TBox$ we write $\TBox$ to refer to $(\TBox, \emptyset)$.
An \emph{interpretation} $\Int$ is a pair $(\Delta^\Int, \cdot^\Int)$ consisting of a set $\Delta^\Int$, called the \emph{domain}, and
a function $\cdot^\Int$ such that we have
for each individual name $a \in \IndividualNames$, an element $a^\Int \in \Delta^\Int$;
for each concept name $A \in \ConceptNames$, a subset $A^\Int \subseteq \Delta^\Int$; and
for each role name $p \in \RoleNames$, a relation $p^\Int \subseteq \Delta^\Int \times \Delta^\Int$.

The function $\cdot^\Int$ is extended to concept descriptions as follows: $\bot^\Int = \emptyset$, $\top^\Int = \Delta^\Int$, $\{a\}^\Int = \{a^\Int\}$, $(C \sqcap D)^\Int = C^\Int \cap D^\Int$, $(C \sqcup D)^\Int = C^\Int \cup D^\Int$, $(\neg C)^\Int = \top^\Int \setminus C^\Int$,
\begin{equation*}
  \begin{aligned}[t]
    (\exists p.C)^\Int
    &=
      \{ d \in \Delta^\Int \mid (d,e) \in p^\Int\ \text{with}\ e \in C^\Int  \},\\ 
    (\exists p^-.C)^\Int
    &=
      \{ d \in \Delta^\Int \mid (e,d) \in p^\Int\ \text{with}\ e \in C^\Int\},\\
    (\forall p.C)^\Int
    &=
      \{ d \in \Delta^\Int \mid \text{for all } e \in \Delta^\Int,
      \\ & \qquad \qquad \qquad \text{if } (d,e) \in p^\Int \text{ then } e \in C^\Int \},\\
    (\forall p^-.C)^\Int
    &=
      \{ d \in \Delta^\Int \mid \text{for all } e \in \Delta^\Int,
      \\ & \qquad \qquad \qquad \text{if } (e,d) \in p^\Int \text{ then } e \in C^\Int \}.
  \end{aligned}
\end{equation*}

An interpretation $\Int$ is a \emph{model} of a knowledge base $\KB = (\TBox,\ABox)$ if and only if $C^\Int \subseteq D^\Int$ for every axiom $C \sqsubseteq D$ in $\TBox$, $p^\Int \subseteq r^\Int$ for every axiom $p \sqsubseteq r$ in $\TBox$, $a^\Int \in C^\Int$ for every assertion $\atomC{a}{C}$ in $\ABox$, and $(a^\Int, b^\Int) \in p^\Int$ for every assertion $\atomP{a}{b}{p}$ in $\ABox$. 
Given two knowledge bases $\KB_1$ and $\KB_2$, $\KB_1$ \emph{entails} $\KB_2$, denoted $\KB_1 \models \KB_2$, if and only if every model $\Int$ of $\KB_1$ is also a model of $\KB_2$.
}

\subsection{Simple RDF Graphs}
\label{sec:simple-rdf-graphs}

According to the RDF specification~\cite{rdf}, an RDF graph is a finite set of triples whose elements belong to three pairwise disjoint sets: IRIs, blank nodes, and literals. 
For convenience, we assume the fragment of RDF graphs, called \emph{Simple RDF graphs}, that only considers triples whose elements are IRIs. 
Furthermore, we assume that IRIs are partitioned in the four sets $\ConceptNames$, $\IndividualNames$, $\RoleNames$, and $\{\rdftype\}$, and that an RDF triple has either the form $(a,p,b)$ or $(a,\rdftype,A)$ where $a,b \in \IndividualNames$, $A \in \ConceptNames$, and $p \in \RoleNames$.
We interpret each triple $(a,p,b)$ as an assertion $\atomP{a}{b}{p}$, and each triple $(a,\rdftype,A)$ as an assertion $\atomC{a}{A}$.
With these assumptions we define Simple RDF graphs, and introduce running examples in \Cref{fig:running}.

\begin{definition}[Simple RDF Graph Syntax]
  A \emph{Simple RDF graph} (or just \emph{graph}) is an $\DLogics$ ABox $G$ where the concept description of each concept assertion in $G$ is a concept name $A \in \ConceptNames$.
\end{definition}

\citet{DBLP:conf/lpnmr/BogaertsJB22} highlight that RDF graphs have two different semantics, depending on the inference task we want to perform: 
If the task is \emph{deduction}, the semantics of a graph is given by an ABox, and following the no-unique-name, no-domain-closure and open-world assumptions.
If the task is \emph{validation}, the semantics is given by a model.
Instead of relying on a model-theoretic semantics for validation, our approach benefits from a proof-theoretic semantics.
As \citet{DBLP:conf/db-workshops/Reiter82} suggests, the model-theoretic semantics of databases can be defined in proof theoretic terms:
A database can be seen as a set of formulas instead of a model, where queries are formulae to be proven, and satisfaction of constraints is defined in terms of consistency.
We can therefore extend the deduction semantics of Simple RDF graphs with axioms that encode these assumptions, which are based on the proof-theoretic semantics for relational databases by \citet{DBLP:conf/db-workshops/Reiter82}.

\Cref{prop:equivalent-formalisms} below implies the equivalence of the \citet{DBLP:conf/lpnmr/BogaertsJB22} model-theoretic SHACL semantics (\Cref{def:canonical-interpretation}) and our proof theoretic SHACL semantics (\Cref{def:validation-rdf-semantics}).

\begin{definition}[Graph Interpretation \citep{DBLP:conf/lpnmr/BogaertsJB22}]
  \label{def:canonical-interpretation}
  The \emph{canonical interpretation} of a Simple RDF graph $G$ is the interpretation $\Int_G$ such that $\Delta^{\Int_G} = \IndividualNames$; for each $a \in \IndividualNames$, $a^{\Int_G} = a$; for every concept name $A \in \ConceptNames$, $A^{\Int_G} = \{ a \mid \atomC{a}{A} \in G \}$; and for every role name $r \in \RoleNames$, $r^{\Int_G} = \{(a,b) \mid \atomP{a}{b}{r} \in G\}$. A graph $G$ is \emph{model-valid} according to a set $\Sigma$ of $\DLogics$ axioms if and only if $\Int_G$ is a model of $\Sigma$.
\end{definition}

\begin{definition}[Simple RDF Graph Validation Semantics]%
  \label{def:validation-rdf-semantics}
  The axioms of a Simple RDF graph $G$, denoted $\TBox_G$, are the TBox consisting of the following $\DLogics$ axioms:
  \begin{enumerate}
  \item \emph{Domain Closure Assumption (DCA)}:
    $\top \equiv \bigsqcup_{a \in \IndividualNames}\{a\}$.
  \item \emph{Unique Name Assumption (UNA)}:
   $\{a\} \sqcap \{b\} \equiv \bot$, for each pair of distinct individual names $a, b \in \IndividualNames$.
  \item \emph{Closed-World Assumption (CWA)}: 
    \begin{itemize}
      \item $A \equiv \bigsqcup_{\atomC{a}{A}\in G}\{a\}$, for each concept name $A \in \ConceptNames$, 
      \item $\exists p.\{a\} \equiv \bigsqcup_{\atomP{b}{a}{p}\in G} \{b\}$, and
      \item $\exists p^-.\{a\} \equiv \bigsqcup_{\atomP{a}{b}{p}\in G} \{b\}$, for each role name $p \in \RoleNames$ and each individual name $a \in \IndividualNames$.
    \end{itemize}
  \end{enumerate}
  $(\TBox_G, G)$ is the \emph{validation knowledge base} of $G$. 
  A graph $G$ is \emph{proof-valid} according to a set $\Sigma$ of $\DLogics$ axioms if and only if $\Sigma$ is consistent with the validation knowledge base of $G$ (i.e., the knowledge base $(\TBox_G \cup \Sigma, G)$ admits a model).
\end{definition}

\begin{proposition}\label{prop:equivalent-formalisms}
  For a graph $G$ and set of $\DLogics$ axioms $\Sigma$, the following statements are equivalent: (i) $G$ is model-valid according to $\Sigma$, (ii) $G$ is proof-valid according to $\Sigma$, and (iii) $G$ is proof-valid according to $\{\varphi\}$ for every $\varphi \in \Sigma$.
\end{proposition}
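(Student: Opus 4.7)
The plan is to establish the three equivalences through a single pivotal lemma: every model of the validation knowledge base $(\TBox_G, G)$ is isomorphic to the canonical interpretation $\Int_G$. Once that is in hand, all three statements collapse to the assertion ``$\Int_G$ satisfies every axiom of $\Sigma$'', and the equivalences drop out.

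First I would prove the structural lemma. Let $\Int$ be any model of $(\TBox_G, G)$. The DCA axiom $\top \equiv \bigsqcup_{a \in \IndividualNames}\{a\}$ forces $\Delta^\Int = \{a^\Int \mid a \in \IndividualNames\}$, and the UNA axioms force $a \mapsto a^\Int$ to be injective, so this yields a canonical bijection $\Delta^\Int \leftrightarrow \IndividualNames = \Delta^{\Int_G}$. The concept-CWA axiom $A \equiv \bigsqcup_{\atomC{a}{A}\in G}\{a\}$ then pins $A^\Int$ down to exactly the set of $a^\Int$ with $\atomC{a}{A} \in G$, matching $A^{\Int_G}$ under the bijection. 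For role names, the axioms $\exists p.\{a\} \equiv \bigsqcup_{\atomP{b}{a}{p}\in G}\{b\}$ (ranging over all $a \in \IndividualNames$), combined with the DCA, determine membership of every pair $(b^\Int, a^\Int)$ in $p^\Int$, so $p^\Int$ agrees with $p^{\Int_G}$ on the nose. Hence $\Int \models \varphi \iff \Int_G \models \varphi$ for any $\DLogics$ axiom $\varphi$, and in particular $\Int_G$ is itself a model of $(\TBox_G, G)$.

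With the lemma, the equivalence (i) $\Leftrightarrow$ (ii) follows almost immediately. If $\Int_G \models \Sigma$, then $\Int_G$ witnesses consistency of $(\TBox_G \cup \Sigma, G)$, giving (ii); conversely, any model $\Int$ of $(\TBox_G \cup \Sigma, G)$ is a model of $(\TBox_G, G)$, hence matches $\Int_G$ by the lemma, and so $\Int_G \models \Sigma$, giving (i). For (i) $\Leftrightarrow$ (iii), note that by the standard definition of model satisfaction, $\Int_G \models \Sigma$ iff $\Int_G \models \{\varphi\}$ for every $\varphi \in \Sigma$; combining this with (i) $\Leftrightarrow$ (ii) applied to each singleton $\{\varphi\}$ yields the desired equivalence.

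The step I expect to require the most care is the role case of the structural lemma. The CWA axioms for roles are stated per individual name $a$ and, on their face, only constrain the concepts $\exists p.\{a\}$ and $\exists p^-.\{a\}$ rather than $p^\Int$ directly. The trick is to exploit the DCA and UNA to conclude that every domain element is uniquely some $a^\Int$, so that the axioms for $\{b\}$ collectively determine which pairs $(a^\Int, b^\Int)$ lie in $p^\Int$; only the pairs corresponding to triples in $G$ survive, exactly matching $p^{\Int_G}$. Once this interplay between DCA, UNA, and the two CWA clauses is made precise, the rest of the argument is routine.
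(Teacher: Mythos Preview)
Your proposal is correct and follows essentially the same approach as the paper: the paper also isolates the key lemma that every model of $(\TBox_G,G)$ is isomorphic to $\Int_G$ (proved via DCA for surjectivity, UNA for injectivity, and the CWA axioms for concept and role extensions), and then derives all three equivalences from the resulting uniqueness-up-to-isomorphism. Your anticipation that the role case requires the most care is apt; the paper handles it in exactly the way you sketch.
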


\begin{figure}[t]
\centering
\begin{subfigure}[b]{0.4\columnwidth}
  \centering
  \begin{tikzpicture}[
    every label/.style={inner sep=0.0em}
]
    \node[new node,label={180:$\eAi$}] (a) {$\ea$};
    \node[new node,label={0:$\eBi,\eEi$}] (b) [right = 0.8cm of a]  {$\eb$};
    \path[->]
        (a) edge [bend left=20] node [above] {$\epi$} (b)
        (b) edge [bend left=20] node [below] {$\epi,\eri$} (a);
\end{tikzpicture}
  \caption{$\eGone$}
  \label{fig:rone}
\end{subfigure}%
\begin{subfigure}[b]{0.6\columnwidth}
  \centering
  \begin{tikzpicture}[
    every label/.style={inner sep=0.0em}
]
    \node[new node,label={[label distance=0.2em]90:$\eAi$}] (a) {$\ea$};
    \node[new node,label={[label distance=0.2em]90:$\eBi,\eEi$}] (b) [right = 0.8cm of a]  {$\eb$};
    \node[new node,label={[label distance=0.2em]90:$\eEi$}] (e) [left = 0.8cm of a]  {$\ee$};
    \path[->]
        (a) edge [] node [below] {$\epi$} (b)
        (a) edge [] node [below] {$\epi$} (e);
\end{tikzpicture}
  \caption{$\eGtwo$}
  \label{fig:rtwo}
\end{subfigure}%
\caption{Two example graphs, where we visualize $\rdftype$ edges as floating labels next to nodes (e.g., $\eAi \,\, \ea$ for $\atomC{\ea}{\eAi}$).}
\label{fig:running}
\end{figure}
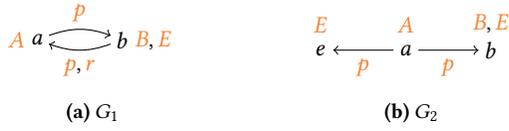

\subsection{Simple SHACL Shapes}

Following the idea that a SHACL schema is a description logic TBox~\cite{DBLP:conf/lpnmr/BogaertsJB22}, a SHACL shape is an axiom of the form $\targetquery \sqsubseteq \phi$ where $\targetquery$ and $\phi$ are concept descriptions, called the \emph{target query} and the \emph{shape constraint}, respectively.
We restrict $\DLogics$ axioms to an essential subset for the sake of simplification, as defined below.
We lift this restriction in the \wheretolook.

\begin{definition}[Simple SHACL Syntax]%
  \label{def:siple-shacl-syntax}
  A \emph{Simple SHACL shape} (or just a \emph{shape}) is an $\DLogics$ axiom $\targetquery \sqsubseteq \phi$ such that the concept expressions $\targetquery$ and $\phi$ are defined by:
  \begin{equation*}
    \begin{aligned}[t]
    \targetquery \Coloneqq A \mid \exists \rho.\top \qquad\qquad \phi \Coloneqq A \mid \exists \rho.A \mid \forall \rho.A 
    \end{aligned}
  \end{equation*}
  
  A \emph{Simple SHACL schema} $\allshapes$ is an $\DLogics$ TBox that consists of a finite set of Simple SHACL shapes.
\end{definition}

Given that shapes are defined in terms of $\DLogics$ axioms, their semantics is defined in terms of the semantics of $\DLogics$ axioms over the validation knowledge base of a graph.

\begin{definition}[Simple SHACL Semantics]
  A graph $G$ is \emph{valid} for a set $S$ of Simple SHACL shapes, denoted $\shaclvalid{G}{S}$, if and only if $G$ is proof-valid according to $S$.
\end{definition}

\begin{example} 
  \label{ex:sone}
  Consider the set of shapes $\eS = \{\esonen, \estwon, \esthreen\}$ where $\esonen = \esonei$, $\estwon = \estwoi$ and $\esthreen = \esthreei$.
  Shape $\esonei$, for example, targets all individuals that are instances of $\eAi$, and requires that there exists at least one edge $\epi$ to a $\eBi$.
  Both graphs in \Cref{fig:running} are \emph{valid} with respect to $\eS$.
\end{example}

\subsection{Simple Conjunctive \construct Queries}
\label{sec:sparql-queries}

This section defines the SPARQL fragment we consider, called \emph{Simple Conjunctive \construct Queries} (\sccqname, or just \emph{queries}). 
This fragment follows the semantics proposed by Kostylev
et al.~\cite{DBLP:conf/icdt/KostylevRU15} and is restricted to basic
graph patterns generated by adding variables for individual names on Simple RDF graphs.

\begin{definition}[\sccqname Syntax]
  An \emph{atomic pattern} $t$ is defined by the following grammar:
  \[
    t \Coloneqq \atomC{a}{A} \mid \atomC{x}{A} \mid \atomP{a}{b}{p} \mid \atomP{x}{a}{p} \mid \atomP{a}{x}{p} \mid \atomP{x}{y}{p}
  \]
  where $A$ stands for concept names, $a$ and $b$ for individual names, $p$ for role names, and $x$ and $y$ for variables. 
  A finite set of atomic patterns is a \emph{simple graph pattern}.
  Given a simple graph pattern $\sccqpattern$, we write $\var(\sccqpattern)$ and $\obj(\sccqpattern)$ to denote the respective sets of variables and individual names occurring in pattern $\sccqpattern$. 
  Given two simple graph patterns $\sccqpattern$ and $\sccqtemplate$, where $\var(\sccqtemplate) \subseteq \var(\sccqpattern)$, the expression $\sccqformal$ is a \sccqname, where $\sccqtemplate$ and $\sccqpattern$ are called the \emph{template} and the \emph{pattern} of the query, respectively.
\end{definition}

A \emph{valuation} of a simple graph pattern $\sccqpattern$ is a function $\mu:\Variables\cup\IndividualNames \to \IndividualNames$ such that $\mu(a)=a$ for every $a \in \IndividualNames$.
In a slight abuse of notation, given two elements $u,v \in \Variables \cup \IndividualNames$ and a simple graph pattern $P$, we write $\mu(\atomC{u}{A}) = \atomC{\mu(u)}{A}$, $\mu(\atomP{u}{v}{p}) = \atomP{\mu(u)}{\mu(v)}{p}$, and $\mu(\sccqpattern)=\{\mu(t) \mid t \in \sccqpattern\}$. 
Intuitively, a valuation substitutes variables in a pattern by individual names. The semantics of \sccqname is defined below.

\begin{definition}[\sccqname Semantics]
  The result of evaluating a \sccqname $\sccqformal$ over a Simple RDF graph $G$ is the Simple RDF graph, denoted $\sparqleval{\sccqformal}{G}$, defined as follows:
  \[
    \sparqleval{\sccqformal}{G} = \bigcup_{\mu(\sccqpattern) \subseteq G} \mu(\sccqtemplate).
  \]
\end{definition}

Intuitively, the pattern $P$ retrieves valuations $\mu$ such that $\mu(P)$ is a subgraph of $G$, which are used to generate the output graph by replacing variables in the template.

\begin{example}
  \label{ex:qone}
  Let $\eqonen = \sccqformal = $
  \[\eqone\]
  For evaluation over the first example graph, $\sparqleval{\eqonen}{\eGone}$, we need to find valuations $\mu$ where $\mu(P) \subseteq \eGone$.
  This holds for $\mu$ where $\mu(\ew) = \ea$, $\mu(\ex) = \ea$, $\mu(\ey) = \eb$, and $\mu(\ez) = \eb$.
  Hence, the result is the graph $
    \sparqleval{\sccqformal}{\eGone}
    = \mu(H)
    = \{\atomC{\eb}{\eEon},\atomC{\eb}{\eBon},\atomP{\eb}{\eb}{\epon}\}
  $.
  Similarly, evaluation $\sparqleval{\eqonen}{\eGtwo} = \{\atomC{\eb}{\eEon},\atomC{\eb}{\eBon},\atomC{\ee}{\eBon},\atomP{\eb}{\eb}{\epon},\atomP{\eb}{\ee}{\epon}\}$.
\end{example}

\section{Formal Problem Statement}
\label{sec:approach}

We aim to construct shapes characterizing the possible result graphs of a query where the input is constrained by shapes as well.

\begin{definition}[Input and Output Graph]%
  \label{def:query-graphs}
  A graph $\graphin$ is an \emph{input graph} with respect to a finite set of shapes $\shapesin$ if $\shaclvalid{\graphin}{\shapesin}$. 
  A graph $\graphout$ is an \emph{output graph} for a query $q$ and a finite set of shapes $\shapesin$ if there exists an input graph $\graphin$ such that $\graphout = \sparqleval{q}{\graphin}$.
\end{definition}

\begin{definition}[Vocabulary]%
  \label{def:shape-voc}
  A \emph{vocabulary} is the set of concept and role names that occur in a concept description $C$, shape $s$, graph $G$, or template of a query $\sccq$, denoted $\voc(C)$, $\voc(s)$, $\voc(G)$, or $\voc(\sccq)$, respectively.
\end{definition}

\begin{definition}[Relevancy]%
  \label{def:irrelevant-shape}
  Shape $s = \targetquery \sqsubseteq \phi$ is \emph{relevant} for query $\sccq$ if there exists a graph $G_+$ with $\voc(G_+) \subseteq \voc(q)$ such that $\shaclvalid{G_+}{\{s\}}$ and $(\TBox_{G_+},G_+) \not\models \psi \sqsubseteq \bot$, 
  and a graph $G_-$ with $\voc(G_-) \subseteq \voc(q)$ such that not $\shaclvalid{G_-}{\{s\}}$.
\end{definition}

\makeatletter
\renewcommand*{\ALG@name}{Problem}

\makeatother

Problem \problemLUB formalizes the set of shapes that best characterize the possible output graphs of a \sccqname.
The first restriction on the solution ensures only relevant shapes are in the output, 
i.e., shapes that validate some graphs in the vocabulary $\voc(q)$, but not all of them (\Cref{def:irrelevant-shape}).
This excludes, for example, shapes with targets outside the vocabulary (which are thereby vacuously satisfied), or shapes with constraints requiring concept or role names outside the vocabulary, which can never be satisfied.
The second restriction states that $\shapesoutopt$ defines an upper bound for the set of output graphs, while the third requires this upper bound to be minimal.
Later in this paper, we will present a sound, but not complete, algorithm for solving this problem, i.e., an algorithm that satisfies the first two, but not the last condition.
{
\renewcommand{\thealgorithm}{} %
\begin{algorithm}[ht]
  \caption{\problemLUB $:$ \problemLUBTypeOpt}
  \label{p:lub}
  \begin{algorithmic}[1]
      \REQUIRE \problemLUBIn
      \ENSURE A set of shapes $\shapesoutopt$ such that:\\
      1. every $s \in \shapesoutopt$ is relevant for $q$,\\
      2. for every $G$ with $\shaclvalid{G}{\shapesin}$ and $\graphout = \sparqleval{q}{G}$, $\shaclvalid{\graphout}{\shapesoutopt}$,\\
      3. the set of graphs $G$ such that $\shaclvalid{G}{\shapesoutopt}$ is minimal.
  \end{algorithmic}
\end{algorithm}
}

\begin{example}
  \label{ex:formalrunning}
  Consider $\eqonen$ (\Cref{ex:qone}) and $\eS$ (\Cref{ex:sone}).
  The shapes $\eEon \sqsubseteq \exists \epon . \eBon, \eEon \sqsubseteq \eBon \in \eSout$ constrain the results of evaluating $\eqonen$ on \emph{any} graph that is valid with respect to $\eS$, e.g., the example graphs in \Cref{fig:running}.
  Shape $\eEon \sqsubseteq \exists \epon . \eBon \in \eSout$ follows directly from the query template,
  whereas shape $\eEon \sqsubseteq \eBon$ is only contained in $\eSout$ because $\esthreei$ holds on all input graphs and we can thus infer that all bindings for $\ey$ are also bindings for $\ez$.
\end{example}

Simple SHACL shapes are not sufficiently expressive to rule out \emph{all} impossible output graphs of a query.
For example, we know for $\eqonen$ and $\eS$ that each instance of $\eEon$ has a $\epon$ edge to itself.
Simple SHACL shapes cannot express reflexivity, so graphs without reflexive $\epon$ cannot be ruled out.

\section{Computing Candidate \problemLUB}
\label{sec:cand}

We break down Problem~\problemLUB into two subproblems: 
The generation of a finite set of candidate shapes $\shapescandidates$ -- a superset of the solution -- and
the filtering of this set (Problem~\problemSE).

\begin{algorithm}[ht]
  \caption{\problemSE $:$ \problemSEType}
  \label{p:main}
  \begin{algorithmic}[1]
      \REQUIRE A finite set of shapes $\shapesin$, a \sccqname $\sccq = \sccqformal$, and a shape $s$ that is relevant for this query $q$.
      \ENSURE Does $\shaclvalid{\sparqleval{q}{\graphin}}{\{s\}}$ hold for every graph $\graphin$ where $\shaclvalid{\graphin}{\shapesin}$? 
  \end{algorithmic}
\end{algorithm}

\makeatletter
\renewcommand*{\ALG@name}{Algorithm}
\makeatother
\setcounter{algorithm}{0}

\Cref{alg:outshapes} outlines this approach, by referring to Problem $\problemSE$.
In \Cref{sec:algorithm}, we will define a sound, but not complete, algorithm solving this problem (\Cref{alg:main}).
Thus, \Cref{alg:outshapes} is a sound approximation of problem \problemLUB satisfying its first two, but not the third condition (minimality).
In the following we use $\shapesout$ to refer to such an approximation of $\shapesoutopt$.

\begin{algorithm}[h]
  \caption{\problemLUB $:$ \problemLUBType}
  \label{alg:outshapes}
  \begin{algorithmic}[1]
    \REQUIRE \problemLUBIn
    \ENSURE The set of output shapes $\shapesout$.
    \STATE $\shapesout \gets \emptyset$, $\shapescandidates \gets$ the finite set of shapes over $\voc(q)$
    \FORALL{$s \in \shapescandidates$}
    \IF{$\problemSE(\shapesin, q, s) = \textsc{YES}$}
    \STATE $\shapesout \gets \shapesout\; \cup \{s\}$
    \ENDIF
    \ENDFOR
    \STATE \textbf{return} $\shapesout$
  \end{algorithmic}
\end{algorithm}

In order to obtain a finite set of candidates $\shapescandidates$, \Cref{prop:relevant-shapes} allows us to discard shapes that do not describe output graphs and
limit thus the search space of \Cref{alg:outshapes} to the shapes that are built from the vocabulary of the query.

\begin{proposition}\label{prop:relevant-shapes}
  If a shape $s = \psi \sqsubseteq \phi$ is relevant for a \sccqname $q$, then $s$ satisfies one of the following two conditions: 
    (i) $\voc(s) \subseteq \voc(q)$, or
    (ii) $\voc(\psi) \subseteq \voc(q)$ and $\phi$ is either $\forall p.A$ or $\forall p^-.A$ where $p \in \voc(q)$ and $A \notin \voc(q)$.
\end{proposition}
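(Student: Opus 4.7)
The plan is to prove the contrapositive: assuming $s = \psi \sqsubseteq \phi$ satisfies neither (i) nor (ii), I would show that $s$ is not relevant for $q$, because one of the two witnesses $G_+$ or $G_-$ in \Cref{def:irrelevant-shape} cannot exist. The case analysis proceeds on where the out-of-vocabulary symbol sits inside $s$, using the syntactic shape of $\psi$ and $\phi$ from \Cref{def:siple-shacl-syntax} together with the CWA axioms of \Cref{def:validation-rdf-semantics}.

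First I would treat the case $\voc(\psi) \not\subseteq \voc(q)$. Since $\psi$ is either a concept name $A$ or an expression $\exists \rho.\top$, and any graph $G$ with $\voc(G) \subseteq \voc(q)$ contains no assertions mentioning that missing concept or role, the CWA equivalences in $\TBox_G$ force $\psi^{\Int_G} = \emptyset$. Consequently $\shaclvalid{G}{\{s\}}$ holds vacuously for \emph{every} such $G$, so no witness $G_-$ exists and $s$ is not relevant.

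Next, assume $\voc(\psi) \subseteq \voc(q)$. If also $\voc(\phi) \subseteq \voc(q)$, then $\voc(s) \subseteq \voc(q)$ and (i) would hold, contradicting the hypothesis; hence $\voc(\phi) \not\subseteq \voc(q)$. I would split on the form of $\phi$. If $\phi = A$ with $A \notin \voc(q)$, or $\phi = \exists \rho.A$ with either the role of $\rho$ or the concept $A$ outside $\voc(q)$, then for any $G$ with $\voc(G) \subseteq \voc(q)$ the CWA again forces $\phi^{\Int_G} = \emptyset$ (no $A$-instances, respectively no $\rho$-edges or no $A$-target for them). Validity of $s$ then requires $\psi^{\Int_G} = \emptyset$, i.e.\ $(\TBox_G, G) \models \psi \sqsubseteq \bot$, which rules out the positive witness $G_+$ and hence relevance. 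The remaining subcase is $\phi = \forall \rho.A$; since (ii) is excluded, and the base shape of $\rho$ is the only possibility for being outside $\voc(q)$ other than $A$, a short Boolean check (using $\voc(\phi) \not\subseteq \voc(q)$ together with the negation of (ii)) pins down that the role of $\rho$ is not in $\voc(q)$. Then every $G$ with $\voc(G) \subseteq \voc(q)$ has no $\rho$-edges, so $\forall \rho.A$ is vacuously satisfied at every individual, $\shaclvalid{G}{\{s\}}$ holds universally, and no witness $G_-$ exists.

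The only subtle point, and where I expect the main obstacle in presentation, is the asymmetry between $\exists$/atomic and $\forall$ constraints: missing vocabulary in the former destroys the positive witness $G_+$ (the constraint becomes unsatisfiable whenever the target is non-empty), whereas in the latter it destroys the negative witness $G_-$ (the constraint becomes trivially satisfied). This explains precisely why clause (ii) is needed as a second escape hatch alongside (i), and why it takes exactly the form stated — only a universal constraint with its role inside $\voc(q)$ but its filler outside can still have both witnesses and thus remain relevant.
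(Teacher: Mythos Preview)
Your proposal is correct and follows essentially the same contrapositive case analysis as the paper: the paper also splits on whether the out-of-vocabulary symbol lies in $\psi$, in an existential/atomic $\phi$, or in a universal $\phi$, and uses the CWA to collapse the offending concept to $\bot$ or $\top$. The only cosmetic differences are that the paper packages the ``$\psi$ or existential $\phi$ becomes $\bot$'' and ``universal $\phi$ becomes $\top$'' arguments into two small preparatory lemmas rather than arguing inline, and in the case $\voc(\psi)\not\subseteq\voc(q)$ it points to the failure of the $G_+$ witness (via $\psi\sqsubseteq\bot$) whereas you point to the failure of $G_-$; both are valid since either suffices.
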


In order to cover all relevant shapes that satisfy condition (i), we can include the finite combinations of elements in the vocabulary of the query.
Condition (ii) requires special care:
Each role name $p \in \voc(q)$ defines a family of shapes of the form $\psi \sqsubseteq \forall p.A$ or $\psi \sqsubseteq \forall p^-.A$, where $A \notin \voc(q)$.
To explore this family, it suffices to consider a representative by including in the set of candidate shape constraints for each role name $p \in \voc(q)$ the two concept descriptions $\forall p.A$ and $\forall p^-.A$, such that $A \not\in \voc(q)$.

The search space is therefore bounded by the vocabulary of the query, which is relatively small.
In the \wheretolook we show that there are $(n + 2m)(n + 4nm + 2m) - n$ candidate shapes if $\voc(q)$ contains $n$ concept names and $m$ role names.

\section{Axiomatizations Over Executions}
\label{sec:extended}

A query $\sccq = \sccqformal$ works on any input graph $\graphin$ defined by $\shapesin$ (\Cref{def:query-graphs}) and returns a result graph $\graphout$ in two steps:
By matching $\sccqpattern$ with $\graphin$, determining valuations $\mu$ where $\mu(\sccqpattern) \subseteq \graphin$,
and then by replacing variables in $\sccqtemplate$ with these valuations producing $\graphout$.
As a result, multiple occurrences of the same concept names do not have the same extensions: As an example, consider the query 
$\sccquery{%
\atomC{\ex}{\eAo},%
\atomC{\ey}{\eAo},%
\atomC{\ey}{\eEo}%
}{%
\atomC{\ex}{\eAm},%
\atomC{\ex}{\eBm},%
\atomC{\ey}{\eEm}%
}$
and the input shape $\eAi \sqsubseteq \eEi$, where we mark different occurrences
of the same names with zero, one, or two dots.
The extension of $\eAm$ includes only individuals matched by the query pattern, which requires $\eBm$ as well. Thus, $\eAm$ may be subsumed by and may be unequal to $\eAi$. 
Similarly, $\eAo$ now also includes all bindings of variable $\ey$, unlike $\eAi$ or $\eAm$.

We now want to axiomatize how all possible $\graphin$ are connected with their corresponding $\graphout$.
Virtually putting these axiomatizations together creates an \emph{extended graph} that holds axioms from these two steps allowing us to prove statements about $\graphout$.
We distinguish inputs and step outcomes by a syntactic trick that rewrites input symbols $A,p$ into \emph{fresh} symbols $\dot{A},\dot{p}$ after the first step, and into $\ddot{A},\ddot{p}$ after the second step.
We also write, e.g., $\dot{G}$, meaning substitution of all symbols $A,p$ in graph $G$ with $\dot{A}, \dot{p}$.

These rewritten symbols allow us to encode assertions that are valid for only specific states of query execution.
Variable bindings, on the other hand, hold throughout:
We codify a variable binding $\mu(x)=a$ as a concept assertion $\atomC{a}{\vconcept{x}}$, where $\vconcept{x}$ is a fresh concept name.
Note, that we assume that all concept names and role names with dots, as well as concept names for variable concepts, exist as fresh names in $\ConceptNames$ and $\RoleNames$.
\Cref{ex:ext-graph} illustrates the construction of such an extended graph, which is defined in \Cref{def:extended-graph}.

\begin{definition}[Extended Graph]\label{def:extended-graph}
  Given an input graph $\graphin$ and a query $\sccqformal$, the following graphs are defined with correspondences to the query execution steps:
  \begin{enumerate}
  \item The \emph{intermediate graph} $\graphmed \coloneqq \bigcup_{\mu(P) \subseteq \graphin} \mu(P)$.
  \item The \emph{variable concept graph} $\graphvar$ containing an assertion $\atomC{a}{\vconcept{x}}$ if and only if there exists a valuation $\mu$ such that $\mu(P) \subseteq \graphin$ and $\mu(x)=a$.
  \item The \emph{output graph} $\graphout \coloneqq \sparqleval{q}{\graphin}$.
  \item The \emph{extended graph} $\graphext \coloneqq \graphin \cup \dot{G}_{\mathrm{med}} \cup \graphvar \cup \ddot{G}_{\mathrm{out}}$.
  \end{enumerate}
\end{definition}

\begin{example}\label{ex:ext-graph}
  Consider $\eqonen$ (\Cref{ex:qone}), $\eS$ (\Cref{ex:sone}), and the graph $\eGone$ (\Cref{fig:running}) as one possible input graph for $\eqonen$.
  The respective extended graph and its components are given in \Cref{fig:gext}.
  Note, that these graphs satisfy different axioms (in different namespaces), e.g., $\exists \epm^-.\top \sqsubseteq \eEm$ is valid in $\dot{G}_{\mathrm{med}}$ but $\exists \epi^-.\top \sqsubseteq \eEi$ is not valid in $\graphin$.
  A range of axioms are valid for $\graphext$, such as $\eEm \sqsubseteq \eEi$ or $\eyv \sqsubseteq \ezv$.
  Indeed, these axioms are valid on \emph{every} extended graph of $\eqonen$, as long as $\shaclvalid{\graphin}{\eS}$, e.g., $\graphin = \eGone$ or $\graphin = \eGtwo$ (\Cref{fig:running}).

  \begin{figure*}[t]
    \centering
    \begin{subfigure}[t]{.30\textwidth}
      \centering
      \begin{tikzpicture}[
    every label/.style={inner sep=0.0em}
]
    \node[new node,label={180:$\eAi,\ewv,\exv$}] (a) {$\ea$};
    \node[new node,label={[align=left]0:$\eBi,\eBm,\eBo,\eyv,$ \\%
        $\eEi,\eEm,\eEo,\ezv$}] (b) [right = 1.1cm of a]  {$\eb$};
    \path[->]
        (a) edge [bend left=60] node [above] {$\epi,\epm$} (b)
        (b) edge [bend left=60] node [below] {$\epi,\eri$} (a)
        (b) edge [loop left=20] node [left] {$\epo$} (b);
\end{tikzpicture}
    \end{subfigure}%
    \hfill%
    \begin{subfigure}[b]{.18\textwidth}
      \centering
      \begin{tikzpicture}[
    every label/.style={inner sep=0.0em}
]
    \node[new node,label={180:$\eAi$}] (a) {$\ea$};
    \node[new node,label={0:$\eBi,\eEi$}] (b) [right = 0.8cm of a]  {$\eb$};
    \path[->]
        (a) edge [bend left=20] node [above] {$\epi$} (b)
        (b) edge [bend left=20] node [below] {$\epi,\eri$} (a);
\end{tikzpicture}
      \caption{$\graphin$}
      \label{fig:gin}
    \end{subfigure}%
    \begin{subfigure}[b]{.18\textwidth}
      \centering
      \begin{tikzpicture}[
    every label/.style={inner sep=0.0em}
]
    \node[new node] (a) {$\ea$};
    \node[new node,label={0:$\eBm,\eEm$}] (b) [right = 0.8cm of a]  {$\eb$};
    \path[->]
        (a) edge [bend left=20] node [above] {$\epm$} (b)
        (b) edge [white,bend left=13] node [below] {$\ep,\er$} (a);
\end{tikzpicture}
      \caption{$\dot{G}_{\mathrm{med}}$}
      \label{fig:gmed}
    \end{subfigure}%
    \begin{subfigure}[b]{.13\textwidth}
      \centering
      \begin{tikzpicture}[
    every label/.style={inner sep=0.0em}
]
    \node[new node,label={0:$\ewv,\exv$}] (a) {$\ea$};
    \node[new node] (c) [below = 0.4cm of a] {};
    \node[new node,label={0:$\eyv,\ezv$}] (b) [below = 0.2cm of a]  {$\eb$};
\end{tikzpicture}
      \caption{$\graphvar$}
      \label{fig:gvar}
    \end{subfigure}%
    \begin{subfigure}[b]{.14\textwidth}
      \centering
      \begin{tikzpicture}[
    every label/.style={inner sep=0.0em}
]
    \node[new node,label={[label distance=0.2em]0:$\eBo,\eEo$}] (b) {$\eb$};
    \node[new node,white] (a) [below = 0.0cm of b]  {};
    \path[->]
        (b) edge [loop left=20] node [left] {$\epo$} (b);
\end{tikzpicture}
      \caption{$\ddot{G}_{\mathrm{out}}$}
      \label{fig:gout}
    \end{subfigure}
    \caption{On the left the graph $\graphext$ as the union of $\namein{\graphin}$ (\subref{fig:gin}), $\textcolor{colorgmed}{\dot{G}_{\mathrm{med}}}$ (\subref{fig:gmed}), $\namevar{\graphvar}$ (\subref{fig:gvar}), and $\textcolor{colorgout}{\ddot{G}_{\mathrm{out}}}$ (\subref{fig:gout}).}
    \label{fig:gext}
  \end{figure*}
\end{example}

Assertions added per step are sound, but not sufficient to fully characterize what happens at each query execution step.
Therefore, axioms we can find to characterize the relationships between $\graphin$ and $\graphout$ will be sound but incomplete.
In the following sections, we will introduce additional axioms per step to extend possible inferences and thus determine a tighter description by output shapes.

\Cref{prop:reduction-extended-graph} shows that axioms valid on any of the graphs  $\graphin$, $\graphmed$, $\graphvar$ and $\graphout$ are valid on the extended graph when applying syntactic rewriting, and vice versa.

\begin{proposition}\label{prop:reduction-extended-graph}
  Given a graph $\graphin$ and a query $\sccq$,
  let the graphs $\graphmed$, $\graphout$, and $\graphext$ be defined according to Definition~\ref{def:extended-graph}.
  For every axiom $\varphi$ that does not include names with dots (e.g., $\dot{A}$, $\ddot{A}$, $\dot{p}$, $\ddot{p}$),   the following equivalences hold:
  \begin{enumerate}
  \item
    $\shaclvalid{\graphin}{\{\varphi\}}$ if and only if $\shaclvalid{\graphext}{\{\varphi\}}$.
  \item
    $\shaclvalid{\graphmed}{\{\varphi\}}$ if and only if $\shaclvalid{\graphext}{\{\dot{\varphi}\}}$.
  \item
    $\shaclvalid{\graphout}{\{\varphi\}}$ if and only if $\shaclvalid{\graphext}{\{\ddot{\varphi}\}}$.
  \end{enumerate}
\end{proposition}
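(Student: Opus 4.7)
The plan is to reduce each equivalence, via Definition~\ref{def:validation-rdf-semantics} and Proposition~\ref{prop:equivalent-formalisms}, to a statement about consistency of a validation knowledge base of the form $(\TBox_G \cup \{\varphi\}, G)$ for the relevant graph $G$. The structural observation that drives the proof is that the four components of $\graphext$, namely $\graphin$, $\dot{G}_{\mathrm{med}}$, $\graphvar$, and $\ddot{G}_{\mathrm{out}}$, use pairwise disjoint concept- and role-name vocabularies (undotted names, single-dotted names, variable concepts $\vconcept{x}$, and double-dotted names, respectively), while sharing only the individuals in $\IndividualNames$. Consequently, the DCA and UNA axioms of $\TBox_{\graphext}$ coincide with those of $\TBox_{\graphin}$, $\TBox_{\graphmed}$, and $\TBox_{\graphout}$ (they only mention elements of $\IndividualNames$), and every CWA axiom of $\TBox_{\graphext}$ whose name belongs to a single component's vocabulary is syntactically identical to the corresponding CWA axiom in that component's own validation TBox.

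For item~(1), I would prove the forward direction by taking a model $\Int$ of $(\TBox_{\graphin} \cup \{\varphi\}, \graphin)$ and extending it by interpreting the dotted and variable-concept symbols exactly as prescribed by $\dot{G}_{\mathrm{med}}$, $\graphvar$, and $\ddot{G}_{\mathrm{out}}$. The CWA axioms of $\TBox_{\graphext}$ for these fresh symbols then hold by construction; the CWA axioms for undotted symbols continue to hold, because by the observation above they agree with those of $\TBox_{\graphin}$; and $\varphi$ survives the extension because it contains no fresh symbols. The backward direction is immediate: any model of $(\TBox_{\graphext} \cup \{\varphi\}, \graphext)$ restricted to the undotted vocabulary is already a model of $(\TBox_{\graphin} \cup \{\varphi\}, \graphin)$, since the CWA clauses coincide and the extra dotted interpretations are irrelevant to $\varphi$.

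Items~(2) and~(3) reduce to item~(1) via the renaming bijection $A \mapsto \dot{A}$, $p \mapsto \dot{p}$ (respectively $A \mapsto \ddot{A}$, $p \mapsto \ddot{p}$). Because renaming preserves validity, $\shaclvalid{\graphmed}{\{\varphi\}}$ holds iff $\shaclvalid{\dot{G}_{\mathrm{med}}}{\{\dot{\varphi}\}}$ holds, after which the same disjoint-vocabulary argument — now with $\dot{G}_{\mathrm{med}}$ playing the role that $\graphin$ played for item~(1) — yields $\shaclvalid{\dot{G}_{\mathrm{med}}}{\{\dot{\varphi}\}}$ iff $\shaclvalid{\graphext}{\{\dot{\varphi}\}}$; the argument for $\ddot{G}_{\mathrm{out}}$ is analogous.

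The main obstacle is careful bookkeeping of the CWA axioms, which are graph-dependent and a priori could differ between $\TBox_{G_*}$ and $\TBox_{\graphext}$. The crux is to verify precisely that, because each component's concept and role names occur in no other component, every closure expression $A \equiv \bigsqcup_{\atomC{a}{A} \in \graphext}\{a\}$ and $\exists \rho.\{a\} \equiv \bigsqcup_{\ldots}\{b\}$ in $\TBox_{\graphext}$ indexed by symbols from a single component reduces syntactically to the corresponding clause in that component's validation TBox. Once this identification is pinned down, the extension-and-restriction of interpretations in both directions is routine.
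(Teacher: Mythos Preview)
Your proposal is correct and rests on the same structural insight as the paper's proof---that the four components of $\graphext$ occupy pairwise disjoint concept/role vocabularies, so each ``slice'' of $\graphext$ is determined entirely by one component---but you take a somewhat longer route to exploit it. You work through the proof-theoretic definition of validity (consistency of $(\TBox_G \cup \{\varphi\}, G)$) and then argue by extending and restricting arbitrary models across the vocabulary boundary. The paper instead invokes Proposition~\ref{prop:equivalent-formalisms} to pass immediately to the \emph{canonical} models $\Int_{\graphmed}$ and $\Int_{\graphext}$, and simply observes that $A^{\Int_{\graphmed}} = \dot{A}^{\Int_{\graphext}}$ and $p^{\Int_{\graphmed}} = \dot{p}^{\Int_{\graphext}}$ by construction of $\graphext$, whence $C^{\Int_{\graphmed}} = \dot{C}^{\Int_{\graphext}}$ for every concept description $C$; the equivalence then drops out in one line. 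Because the validation knowledge base has a unique model up to isomorphism (Lemma~\ref{lemma:isomorphic-models}), your model-extension argument inevitably reconstructs the canonical model anyway, so the two arguments are equivalent in content; the paper's version just short-circuits the bookkeeping you flag as ``the main obstacle.'' One small point worth tightening in your write-up: in the backward direction you say the model ``restricted to the undotted vocabulary'' is a model of $(\TBox_{\graphin} \cup \{\varphi\}, \graphin)$, but $\TBox_{\graphin}$ contains CWA axioms forcing the dotted names to be empty, so you must actually \emph{reinterpret} those names (as $\emptyset$), not merely ignore them---this is harmless for $\varphi$, as you note, but should be stated explicitly.
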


\section{Checking Whether \problemSE}
\label{sec:algorithm}

\Cref{alg:main} (\problemSE) checks for a given shape $s$ if the rewritten shape $\ddot{s}$ is entailed by a set of axioms $\Sigma$ valid for every extended graph $\graphext$ and derived from $\sccq$ and $\shapesin$ ($\textsc{YES}$). 
If this entailment can not be proven, the algorithm returns no answer ($\textsc{NA}$).
In the remainder of this section, we construct $\Sigma$: 
We start by inferring the assumptions of the validation knowledge base of $\graphext$ based on the atoms of the input query ($\vkb$).
Next, we identify subsumptions between query variables in different components of the input query by establishing a mapping between them ($\map$).
Finally, we include subsumptions between role names by considering the query variables constraining them ($\propsub$).
In the \wheretolook we prove that Problem \problemSE is NP-hard.

\Cref{cor:reduction-algo1} establishes the formal foundation for \problemSE based on \Cref{prop:reduction-extended-graph}.

\begin{corollary}\label{cor:reduction-algo1}
  Let $\sccq$ be a \sccqname, $\Sigma$ a set of $\DLogics$ axioms such that $\shaclvalid{\graphext}{\Sigma}$ for every extended graph $\graphext$ of $\sccq$, and $s$ a shape including no names with dots.
  If $\Sigma \models \ddot{s}$, then $\shaclvalid{\graphout}{\{s\}}$ for every output graph $\graphout$ of $\sccq$ .
\end{corollary}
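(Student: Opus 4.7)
The plan is to take an arbitrary output graph $\graphout$ of $\sccq$, build its associated extended graph $\graphext$ (per \Cref{def:extended-graph}), and chain two facts: (a) $\graphext$ is valid with respect to $\Sigma$ by hypothesis, and (b) by \Cref{prop:reduction-extended-graph}(3), validity of $\graphext$ under the rewritten shape $\ddot{s}$ transfers back to validity of $\graphout$ under $s$. The only real work is bridging $\shaclvalid{\graphext}{\Sigma}$ and $\Sigma \models \ddot{s}$ to conclude $\shaclvalid{\graphext}{\{\ddot{s}\}}$.

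First I would fix an output graph $\graphout$ arising from some input $\graphin$ via $\graphout = \sparqleval{\sccq}{\graphin}$, and assemble the corresponding extended graph $\graphext$. By the hypothesis on $\Sigma$, $\shaclvalid{\graphext}{\Sigma}$ holds, which by \Cref{def:validation-rdf-semantics} means that the validation knowledge base $(\TBox_{\graphext} \cup \Sigma, \graphext)$ admits a model $\Int$. Since $\Sigma \models \ddot{s}$ (note $\ddot{s}$ is a single $\DLogics$ axiom, which is what semantic entailment ranges over), that same model $\Int$ also satisfies $\ddot{s}$, so $(\TBox_{\graphext} \cup \{\ddot{s}\}, \graphext)$ is consistent, giving $\shaclvalid{\graphext}{\{\ddot{s}\}}$.

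Now I would apply \Cref{prop:reduction-extended-graph}(3) with $\varphi = s$, which is admissible because $s$ by assumption contains no dotted names. This yields $\shaclvalid{\graphout}{\{s\}}$, and since $\graphout$ was an arbitrary output graph of $\sccq$, the conclusion follows.

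The proof is essentially bookkeeping once \Cref{prop:reduction-extended-graph} is in hand; the only subtle point is the step from a semantic entailment $\Sigma \models \ddot{s}$ over all $\DLogics$ models to proof-validity of $\graphext$ under $\{\ddot{s}\}$, which requires reading validity in its consistency-based form and transporting the witnessing model. \Cref{prop:equivalent-formalisms} makes this transparent, since it tells us that proof-validity under a set is equivalent to proof-validity under each of its members, so validity of $\graphext$ under $\Sigma$ implies validity under any $\DLogics$ consequence of $\Sigma$, in particular $\ddot{s}$.
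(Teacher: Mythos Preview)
Your proof is correct and follows the same route as the paper's own argument: both reduce to case~3 of \Cref{prop:reduction-extended-graph} after first establishing $\shaclvalid{\graphext}{\{\ddot{s}\}}$ from the hypotheses. You spell out in more detail the bridging step the paper leaves implicit, namely that a model witnessing consistency of $(\TBox_{\graphext} \cup \Sigma, \graphext)$ also witnesses consistency of $(\TBox_{\graphext} \cup \{\ddot{s}\}, \graphext)$ via the entailment $\Sigma \models \ddot{s}$, which is a welcome clarification.
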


\begin{algorithm}[ht]
  \caption{\problemSE $:$ \problemSEType}
  \label{alg:main}
  \begin{algorithmic}[1]
      \REQUIRE A finite set of shapes $\shapesin$, a \sccqname $\sccq = \sccqformal$, and a shape $s$ that is relevant for this query $q$.
      \ENSURE Does $\shaclvalid{\sparqleval{q}{\graphin}}{\{s\}}$ hold for every graph $\graphin$ where $\shaclvalid{\graphin}{\shapesin}$? 
      \STATE $\ssin \gets \shapesin$
      \STATE $\vkb \gets \UNA(\sccq) \cup \CWA(\sccq)$\hfill\emph{(\Cref{ss:una})}\hspace{1em}\phantom{.}
      \STATE $\map \gets \MapAxiomsSin(P)$\hfill\emph{(\Cref{ss:components} and \ref{ss:extend})}\hspace{1em}\phantom{.}
      \STATE $\propsub \gets \PropertyAxioms(q)$\hfill \emph{(\Cref{ss:properties})}\hspace{1em}\phantom{.}
      \STATE $\Sigma \gets \ssin \cup \vkb \cup \map \cup \propsub$
      \RETURN \textbf{if} $\Sigma \models \ddot{s}$ \textbf{then} $\textsc{yes}$ \textbf{else} $\textsc{na}$
  \end{algorithmic}
\end{algorithm}

\subsection{Axiomatizations from the Validation KB}\label{ss:una}

We first utilize the assumptions of the validation knowledge base (see \Cref{def:validation-rdf-semantics}) to infer axioms from a query $\sccq$
that are valid on \emph{any} extended graph of $\sccq$.
Since we do not know all individual names in the extended graphs, we limit the UNA-encoding to individual names that appear in the query (\Cref{def:una}), 
which are in any non-empty extended graph per definition (see \Cref{def:extended-graph}).

\begin{definition}[UNA-encoding]\label{def:una}
    The \emph{UNA-encoding} of a query $q$, denoted $\UNA(q)$, is the set of $\DLogics$ axioms of the form $\{a\} \sqcap \{b\} \equiv \bot$ 
    for every pair of distinct individual names $a, b$ in $\sccq$.
\end{definition}

\begin{proposition}
  \label{prop:una}
  For every extended graph $\graphext$ of a \sccqname $q$, it holds that $\shaclvalid{\graphext}{\UNA(q)}$.
\end{proposition}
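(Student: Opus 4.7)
The plan is to observe that $\UNA(q)$ is already implicitly part of the validation knowledge base of $\graphext$, so validity is essentially immediate.

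First, I would unfold definitions. By \Cref{def:validation-rdf-semantics}, proof-validity of $\graphext$ with respect to $\UNA(q)$ means that the knowledge base $(\TBox_{\graphext} \cup \UNA(q), \graphext)$ admits a model. The TBox $\TBox_{\graphext}$ already contains, by its DCA/UNA/CWA clauses, the axiom $\{a\} \sqcap \{b\} \equiv \bot$ for every pair of distinct individual names $a, b \in \IndividualNames$. Since every individual name occurring in the query $q$ lies in $\IndividualNames$ by the syntax of \sccqname, we have $\UNA(q) \subseteq \TBox_{\graphext}$. Hence $(\TBox_{\graphext} \cup \UNA(q), \graphext) = (\TBox_{\graphext}, \graphext)$.

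Next, I would argue that $(\TBox_{\graphext}, \graphext)$ admits a model. To invoke \Cref{prop:equivalent-formalisms}, I need to check that $\graphext$ is a Simple RDF graph in the sense of \Cref{sec:simple-rdf-graphs}. By \Cref{def:extended-graph}, $\graphext = \graphin \cup \dot{G}_{\mathrm{med}} \cup \graphvar \cup \ddot{G}_{\mathrm{out}}$; each component is an ABox over individuals from $\IndividualNames$ (valuations $\mu$ map into $\IndividualNames$, and the assertions $\atomC{a}{\vconcept{x}}$ in $\graphvar$ attach individuals to fresh concept names), and only concept names (possibly dotted, but still in $\ConceptNames$ by the convention that dotted names are reserved fresh names) and role names appear. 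Thus $\graphext$ is a Simple RDF graph and its canonical interpretation $\Int_{\graphext}$ is a model of $(\TBox_{\graphext}, \graphext)$.

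Combining the two observations, $(\TBox_{\graphext} \cup \UNA(q), \graphext)$ admits a model, which by \Cref{def:validation-rdf-semantics} and the definition of $\shaclvalid{\cdot}{\cdot}$ is exactly $\shaclvalid{\graphext}{\UNA(q)}$. The only subtle point—and the one I would state explicitly—is the verification that every individual name mentioned by $\UNA(q)$ really is an individual name from $\IndividualNames$ and therefore covered by the UNA clause baked into $\TBox_{\graphext}$; once this bookkeeping is done, there is no further technical obstacle.
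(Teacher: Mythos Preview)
Your proposal is correct and follows essentially the same approach as the paper: the key observation in both is that $\UNA(q) \subseteq \TBox_{\graphext}$ because the UNA clause of \Cref{def:validation-rdf-semantics} already covers all pairs of distinct individual names in $\IndividualNames$. The paper's proof is a single sentence to this effect; your version additionally spells out why this containment yields proof-validity (consistency of $(\TBox_{\graphext},\graphext)$ via the canonical interpretation), which the paper leaves implicit.
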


We do not infer any axioms based on the DCA because a SCCQ does not determine the set of individual names $\IndividualNames$.
Concerning the CWA, a query imposes restrictions on concept names that appear 
in the query pattern (e.g., $\dot{A}$), 
the query template (e.g., $\ddot{A}$), 
variables (e.g. $\vconcept{x}$), 
and individual names (e.g., $\{a\}$).
All other concept names are irrelevant (see \Cref{prop:relevant-shapes}).

We define the following utility concepts $\Concept{u}$ (\Cref{def:ct}) for referring to the nominal concept \emph{or} variable concept for an individual name \emph{or} variable~$u$, 
and $\vcg(q)$, referring to the variable connectivity graph of a query $q$.

\begin{definition}
  \label{def:ct}
  For each individual name or variable $u$, $\Concept{u}$ is $\{a\}$ if $u$ is an individual name $a$, 
  or $\Concept{u}$ is $\vconcept{x}$ if $u$ is a variable $x$.
\end{definition}

\begin{definition}[Variable Connectivity Graph]\label{def:vcg}
  The \emph{variable connectivity graph} of query pattern $\sccqpattern$, denoted $\vcg(P)$, 
  is the graph whose nodes are the atoms in $\sccqpattern$, 
  and which has an undirected edge $\{t_1,t_2\}$ if and only if atoms $t_1$ and $t_2$ share a variable. 
\end{definition}

A \sccqname imposes restrictions on concept names in extended graphs, by definition of $\graphmed$, $\graphout$, and $\graphvar$.
For example, each atom $\atomC{x}{A} \in P$ implies $\vconcept{x} \sqsubseteq \dot{A}$, since concept $\vconcept{x}$ is defined from all individual names referred to by $x$, which according to the evaluation semantics of \sccqname result from filtering $A$.
More generally, all atoms $\atomC{x}{A}$ and $\atomP{x}{y}{p}$ in~$\sccqpattern$ (\Cref{ex:ext-graph}) restrict the instances of variable concept $\vconcept{x}$.
These observations can be combined over all atoms in a query, leading to \Cref{def:cwa}.

\begin{definition}[CWA-encoding]\label{def:cwa}
  The \emph{CWA-encoding} for a \sccqname $q = (\sccqformal)$, denoted $\CWA(q)$, is the minimal set of $\DLogics$ axioms including:

    \noindent 1. $\,$ For each concept name $A$ in $\sccqpattern$, $\dot{A} \equiv A \sqcap \bigsqcup_{\atomC{u}{A} \in P} \Concept{u}$.

    \noindent 2. $\,$ For each concept name $A$ in $\sccqtemplate$, $\ddot{A} \equiv \bigsqcup_{\atomC{u}{A} \in H} \Concept{u}$.

    \noindent 3. $\,$ For each variable $x$ in $\var(q)$ the axiom
        \begin{align*}
          \vconcept{x} \sqsubseteq \textstyle 
          & {\bigsqcap_{\atomC{x}{A} \in P}} A 
            \sqcap {\bigsqcap_{\atomP{x}{u}{p} \in P}} \exists p.\Concept{u}
            \sqcap {\bigsqcap_{\atomP{u}{x}{p} \in P}} \exists p^-.\Concept{u},
        \end{align*}

        and if $\vcg(P)$ is acyclic w.r.t $x$, then also the axiom

        \begin{align*}
        \vconcept{x} \sqsupseteq \textstyle 
          & {\bigsqcap_{\atomC{x}{A} \in P}} A 
          \sqcap {\bigsqcap_{\atomP{x}{u}{p} \in P}} \exists p.\Concept{u}
          \sqcap {\bigsqcap_{\atomP{u}{x}{p} \in P}} \exists p^-.\Concept{u}.
        \end{align*}
    
      \noindent 4. $\,$ For each role name $p$ in pattern $\sccqpattern$ the axioms
        \begin{align*}
          \exists \dot{p}.\Concept{v} &\equiv \textstyle \bigsqcup_{\atomP{u}{v}{p} \in P} \Concept{u}, &
          \exists \dot{p}.\top &\equiv {\bigsqcup_{\atomP{u}{v}{p} \in P}} \Concept{u} \sqcap \exists \dot{p}.\Concept{v},\\
          \exists \dot{p}^-.\Concept{u} &\equiv \textstyle \bigsqcup_{\atomP{u}{v}{p} \in P} \Concept{v}, &
          \exists \dot{p}^-.\top &\equiv {\bigsqcup_{\atomP{u}{v}{p} \in P}} \Concept{v} \sqcap \exists \dot{p}^-.\Concept{u}.
        \end{align*}

      \noindent 5. $\,$ For each role name $p$ in template $\sccqtemplate$ the axioms
        \begin{align*}
          \exists \ddot{p}.\Concept{v} &\equiv \textstyle \bigsqcup_{\atomP{u}{v}{p} \in H} \Concept{u}, &
          \exists \ddot{p}.\top &\equiv {\bigsqcup_{\atomP{u}{v}{p} \in H}} \Concept{u} \sqcap \exists \ddot{p}.\Concept{v},\\
          \exists \ddot{p}^-.\Concept{u} &\equiv \textstyle \bigsqcup_{\atomP{u}{v}{p} \in H} \Concept{v}, &
          \exists \ddot{p}^-.\top &\equiv {\bigsqcup_{\atomP{u}{v}{p} \in P}} \Concept{v} \sqcap \exists \ddot{p}^-.\Concept{u}.
        \end{align*}

\end{definition}

Observe, that unlike in the definition for concepts $\dot{A}$ (\Cref{def:cwa}, 1.), the definition for concepts $\ddot{A}$ (\Cref{def:cwa}, 2.) does not include $A$, since elements of $\dot{A}$ are the result of filtering $A$, whereas $\ddot{A}$ is newly constructed for the query template $\sccqtemplate$.
We first demonstrate the general meaning of these axioms in \Cref{ex:cwa}.

\begin{example}\label{ex:cwa}
  Consider again the query $\eqonen = \eqone$ (\Cref{ex:qone}).
  Then, $\CWA(\eqonen)$ consists of the following axioms:

  \begin{enumerate}
  \item
    $\{ \eBm \equiv \eBi \sqcap \eyv,$
    $   \eEm \equiv \eEi \sqcap \ezv \}$,
    because, e.g., concept $\eBm$ in the extended graph is defined by filtering $\eBi$ with variable $\eyv$, based on the query pattern $\atomC{\ey}{\eBi}$ in $\eqonen$.

  \item
    $\{ \eBo \equiv \ezv,$
    $   \eEo \equiv \eyv \}$,
    because, e.g., concept $\eBo$ in the extended graph is defined by $\ezv$, since it only occurs in the single construct pattern $\atomC{\ez}{\eBon}$.
    If there were multiple occurences, it would be defined by the union of all variables, instead.

  \item
    $\{ \ewv \sqsubseteq \exists \epi . \eyv,$
    $   \exv \sqsubseteq \exists \epi . \ezv,$
    $   \eyv \sqsubseteq \exists \epi . \ewv \sqcap \eBi,$
    $   \ezv \sqsubseteq \exists \epi . \exv \sqcap \eEi \}$,
    because variable concepts are defined by constraints to the variable in the query pattern.
    For example, $\eyv$ is constrained by patterns $\atomP{\ew}{\ey}{\epi}$ and $\atomC{\ey}{\eBi}$ in $\eqonen$, 
    and thus bound by $\exists \epi . \ewv \sqcap \eBi$.
    This is a crucial step, since concept and role names in the extended graph are defined in terms of these variable concepts.
    The inverse cases are included, because $\vcg(\sccqpattern)$ is acyclic:
    $\{ \ewv \sqsupseteq \exists \epi . \eyv,$
    $   \exv \sqsupseteq \exists \epi . \ezv,$
    $   \eyv \sqsupseteq \exists \epi . \ewv \sqcap \eBi,$
    $   \ezv \sqsupseteq \exists \epi . \exv \sqcap \eEi \}$ (cf. \Cref{ex:cwafail}).

  \item
    $\{ \exists \epm . \eyv \equiv \ewv,$
    $   \exists \epm . \ezv \equiv \exv,$ 
    $   \exists \epm . \top \equiv (\ewv \sqcap \exists \epm . \eyv) \sqcup (\exv \sqcap \exists \epm . \ezv) \}$,
    because, e.g., role name $\epm$ in the extended graph is defined by the variables concepts that it occurs with.
    Similarly, the following axioms for inverse role names are included:
    $\{ \exists \epmi . \ewv \equiv \eyv,$
    $   \exists \epmi . \exv \equiv \ezv,$
    $   \exists \epmi . \top \equiv (\eyv \sqcap \exists \epmi . \ewv) \sqcup (\ezv \sqcap \exists \epmi . \exv) \}$.

  \item
    $\{ \exists \epo . \ezv \equiv \eyv,$
    $   \exists \epo . \top \equiv \eyv \sqcap \exists \epo . \ezv \}$ and
    $\{ \exists \epoi . \eyv \equiv \ezv,$ 
    $   \exists \epoi . \top \equiv \ezv \sqcap \exists \epoi . \eyv \}$,
    with analogous reasoning as the previous case.
  \end{enumerate}
\end{example}

Note the additional condition in the second part of \Cref{def:cwa} (3.) where we require $\vcg(q)$ to be acyclic.
In the following example (\Cref{ex:cwafail}), we will motivate why this condition is required and then define \Cref{lemma:vcg} with respect to this case.

\begin{example}\label{ex:cwafail}
  Consider the pattern $\sccqpattern = \{\atomP{x}{y}{\eri}, \atomP{y}{z}{\eri}, \atomP{x}{z}{\epi}\}$ of a query $\sccq = \sccqformal$, and the graph $G$ in \Cref{fig:fail}.
  \begin{figure}[t]
  \centering
    \begin{tikzpicture}[
    every label/.style={inner sep=0.0em}
]
    \node[new node] (a) {$a_1$};
    \node[new node] (b) [right = 0.8cm of a]  {$a_2$};
    \node[new node] (c) [right = 0.8cm of b]  {$a_3$};
    \node[new node] (d) [right = 0.8cm of c]  {$a_4$};
    \path[->]
        (a) edge [bend left=20] node [above] {$\eri$} (b)
        (b) edge [bend left=20] node [below] {$\eri$} (a)
        (b) edge [bend left=20] node [above] {$\eri$} (c)
        (c) edge [bend left=20] node [below] {$\eri$} (b)
        (a) edge [bend left=50] node [above] {$\epi$} (c)
        (c) edge [bend left=50] node [below] {$\epi$} (a)
        (c) edge [bend left=20] node [above] {$\eri$} (d)
        (d) edge [bend left=20] node [below] {$\eri$} (c);
\end{tikzpicture}
    \caption{Input graph $G$ for \Cref{ex:cwafail}.}
    \label{fig:fail}
  \end{figure}
  Note, that $\vcg(\sccqpattern)$ is cyclic, since $(\atomP{x}{y}{\eri}, \atomP{y}{z}{\eri})$,  $(\atomP{y}{z}{\eri}, \atomP{x}{z}{\epi})$ as well as $(\atomP{x}{z}{\epi}, \atomP{x}{y}{\eri})$ each share variables.
  
  Evaluating $q$ on $G$ results in mappings $\mu_1 = \{x \mapsto a_1, y \mapsto a_2, z \mapsto a_3\}$ and 
  $\mu_2 = \{x \mapsto a_3, y \mapsto a_2, z \mapsto a_1\}$.
  Thus, the variable concepts are defined as $\exv = \{a_1, a_3\}$, $\eyv = \{a_2\}$ and $\ezv = \{a_3, a_1\}$.
  Note, that $y \mapsto a_4$ is not in any result mapping for query $q$ on graph $G$ (and $a_4 \not\in \eyv$).
  However, $\{a_4\} \sqsubseteq \exists \eri^- . \exv \sqcap \exists \eri . \ezv$.
  Therefore, $\eyv \not\sqsupseteq \exists \eri^- . \exv \sqcap \exists \eri . \ezv$, so we can not include this axiom.
\end{example}

Intuitively, an acyclic graph $\vcg(P)$ allows for separating the pattern $\sccqpattern$ (given, as an example, variable $x$ and concept name $A$) into patterns $P_l$, $\{x:A\}$, and $P_r$, where $P_l$ shares at most variable $x$ with $P_r$.
In these cases, the implicit dependencies between bindings for variables that cause issues as demonstrated for $x$ and $z$ in \Cref{ex:cwafail} do not occur.

\begin{lemma}\label{lemma:vcg}
  Let $\sccq = \sccqformal$ be a query such that $\vcg(\sccqpattern)$ is acyclic.
  Let $G$ be a graph, and let $x$ be a variable corring in $\sccqpattern$. 
  Then
  \[
        \vconcept{x} \sqsupseteq \textstyle 
          {\bigsqcap_{\atomC{x}{A} \in P}} A 
          \sqcap {\bigsqcap_{\atomP{x}{u}{p} \in P}} \exists p.\Concept{u}
          \sqcap {\bigsqcap_{\atomP{u}{x}{p} \in P}} \exists p^-.\Concept{u}.
  \]
\end{lemma}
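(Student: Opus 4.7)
The plan is, given an individual $a \in \Delta^{\Int_G}$ satisfying the right-hand side, to construct explicitly a valuation $\mu$ with $\mu(x) = a$ and $\mu(P) \subseteq G$, which by the definition of $\vconcept{x}$ witnesses $a \in \vconcept{x}^{\Int_G}$.

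First, I would \emph{extract witnesses} from each conjunct of the right-hand side. The conjunct $A$ for an atom $\atomC{x}{A} \in P$ gives $a \in A^{\Int_G}$ directly, so $\mu(\atomC{x}{A}) \in G$. Each conjunct $\exists p.\Concept{u}$ from an atom $\atomP{x}{u}{p} \in P$ yields some $b_u$ with $(a, b_u) \in p^{\Int_G}$ and $b_u \in \Concept{u}^{\Int_G}$; the $\exists p^-.\Concept{u}$ case is symmetric. If $u$ is an individual name, then $\Concept{u} = \{u\}$ forces $b_u = u$; if $u$ is a variable, then $b_u \in \vconcept{u}^{\Int_G}$, and by definition of $\vconcept{u}$ there exists a valuation $\nu_u$ with $\nu_u(u) = b_u$ and $\nu_u(P) \subseteq G$.

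Next, I would use acyclicity of $\vcg(P)$ to \emph{decompose} the pattern. Let $P_x \subseteq P$ be the atoms mentioning $x$. I would prove that each connected component of $\vcg(P \setminus P_x)$ is attached to $P_x$ through at most one \emph{gateway variable} $u \in \var(P_x) \setminus \{x\}$: if two distinct gateways $u_1, u_2$ were present, their attachment paths together with the edges inside $P_x$ (all carrying the shared variable $x$) would yield a cycle in $\vcg(P)$, contradicting acyclicity. As a corollary, distinct components have pairwise disjoint variable sets.

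Finally, I would \emph{assemble} $\mu$ by setting $\mu(x) = a$, $\mu(u) = b_u$ for each gateway variable $u$ (or individual $u$, vacuously), and $\mu(v) = \nu_u(v)$ for every remaining variable $v$ occurring in the component attached via gateway $u$. Variable-disjointness across components makes $\mu$ well-defined. Atoms in $P_x$ map into $G$ by the witness extraction above; atoms in a component attached via $u$ map into $G$ because $\nu_u(P) \subseteq G$ and $\mu$ agrees with $\nu_u$ on every variable appearing in that component (including $u$, where $\mu(u) = b_u = \nu_u(u)$). Hence $\mu(P) \subseteq G$ with $\mu(x) = a$, establishing $a \in \vconcept{x}^{\Int_G}$.

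The main obstacle is the decomposition step: turning the informal intuition ``acyclicity prevents two branches from re-meeting'' into a crisp lemma that each component of $\vcg(P \setminus P_x)$ shares a unique gateway variable with $P_x$ and is variable-disjoint from the other components. Once this structural claim is in hand, the witness extraction, the assembly of $\mu$, and the verification $\mu(P) \subseteq G$ are essentially bookkeeping, mirroring the intuition developed around \Cref{ex:cwafail} where cyclicity is exactly what breaks the argument.
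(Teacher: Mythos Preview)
Your approach is essentially the same as the paper's: both exploit acyclicity of $\vcg(P)$ to split $P$ around $x$ into variable-disjoint pieces and then glue together full valuations witnessing membership of each neighbour in its variable concept. The paper phrases this as a proof by contradiction and treats only a prototypical two-neighbour situation (variables $y,z$ adjacent to $x$, combining restrictions $\mu_1^y$ and $\mu_2^z$ with $x \mapsto c$), whereas you give the direct construction for arbitrarily many neighbours and state explicitly the structural lemma (unique gateway per component of $\vcg(P \setminus P_x)$) that the paper leaves implicit in its ``WLOG'' setup. Your version is the cleaner of the two.

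One small point to tighten: you allow components of $\vcg(P \setminus P_x)$ with \emph{no} gateway (``at most one''), but your assembly of $\mu$ only assigns values to variables in components that do have a gateway $u$ (via $\nu_u$). Variables in gateway-free components---equivalently, components of $\vcg(P)$ disconnected from $x$---are left undefined. The fix is immediate: take any one of the $\nu_u$ (each is a valuation of all of $P$) and use it on those components; but you should say so, and also cover the degenerate case where $x$ has no variable neighbours at all and hence no $\nu_u$ exists. A second subtlety worth flagging in your decomposition step: if two distinct atoms in $P_x$ share the \emph{same} neighbour variable $u$ (e.g.\ both $\atomP{x}{u}{p}$ and $\atomP{x}{u}{r}$ in $P$), your witness extraction may produce different $b_u$'s from the two conjuncts, and your assembled $\mu$ needs a single value at $u$; the paper's proof glosses over this as well.
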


Given \Cref{lemma:vcg}, the following proposition holds.

\begin{proposition}\label{prop:cwa}
  For every extended graph $\graphext$ of a \sccqname $\sccq$, it holds that $\shaclvalid{\graphext}{\CWA(q)}$, if either $q$ does not include
  any individual names, or the output graph is guaranteed to be non-empty.
\end{proposition}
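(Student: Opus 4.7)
The plan is to show that each of the five families of axioms listed in \Cref{def:cwa} holds on the canonical interpretation $\Int_{\graphext}$ of an arbitrary extended graph $\graphext = \graphin \cup \dot{G}_{\mathrm{med}} \cup \graphvar \cup \ddot{G}_{\mathrm{out}}$; by \Cref{prop:equivalent-formalisms} this entails proof-validity, which is what $\shaclvalid{\graphext}{\CWA(q)}$ requires. For each clause of \Cref{def:cwa}, the strategy is to unfold the semantics of the relevant concept expression on both sides and to recognize the defining equation of $\graphmed$, $\graphvar$, $\graphout$, or of nominals $\{a\}$.

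The first clause is the pivotal warm-up: an element $a$ lies in $\dot{A}^{\Int_{\graphext}}$ iff $\atomC{a}{A} \in \graphmed$ iff there exists an atom $\atomC{u}{A} \in \sccqpattern$ and a valuation $\mu$ with $\mu(\sccqpattern) \subseteq \graphin$ and $\mu(u) = a$; this latter condition splits into $a \in \Concept{u}^{\Int_{\graphext}}$ (by definition of $\graphvar$ when $u$ is a variable, and by $\Concept{u} = \{u\}$ when $u$ is an individual name) together with $\atomC{a}{A} \in \graphin$, i.e., $a \in A^{\Int_{\graphext}}$. The fourth clause follows the same pattern for role atoms of $\sccqpattern$, with the slightly more involved equivalence for $\exists \dot{p}.\top$ obtained by summing over all target choices $v$ and forcing, via $\exists \dot{p}.\Concept{v}$, that a concrete witness actually exists in $\graphmed$. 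The third clause is handled by appealing to \Cref{lemma:vcg} for the $\sqsupseteq$-direction; the $\sqsubseteq$-direction is immediate: if $a \in \vconcept{x}^{\Int_{\graphext}}$, then fixing the witnessing valuation $\mu$ forces $\mu(\atomC{x}{A}) \in \graphin$ and $\mu(\atomP{x}{u}{p}) \in \graphin$ for every relevant atom, yielding each conjunct on the right.

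Clauses two and five are where the side condition enters. For clause two, $\ddot{A}^{\Int_{\graphext}}$ consists of those $a$ such that $\atomC{a}{A} \in \graphout$, equivalently, of those $a$ for which some valuation $\mu$ with $\mu(\sccqpattern)\subseteq\graphin$ and an atom $\atomC{u}{A}\in \sccqtemplate$ yield $\mu(u) = a$. The forward direction packages this witness into $a \in \Concept{u}^{\Int_{\graphext}}$. The backward direction, however, requires that if $u$ is an individual name $a$ appearing in $\sccqtemplate$, then at least one valuation succeeds, so that $\atomC{a}{A}$ is actually produced; otherwise, $\Concept{a} = \{a\}$ would populate $\ddot{A}$ even though $\graphout$ is empty. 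This is exactly what is ruled out by the hypothesis that either $q$ contains no individual names, or $\graphout \neq \emptyset$. The analogous argument applies to role atoms in clause five.

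The main obstacle is therefore not any single algebraic manipulation but the disciplined case split on whether each $u$ or $v$ occurring in an atom is a variable or an individual name, together with verifying that the side condition precisely closes the gap in clauses two and five. Because $\Concept{u}$ is defined uniformly as $\{u\}$ or $\vconcept{x}$, these cases are fused in the statement of $\CWA(q)$, so the proof-bookkeeping, rather than any deep reasoning, is what must be carried out carefully.
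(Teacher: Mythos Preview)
Your overall approach matches the paper's: unfold the canonical-interpretation semantics clause by clause, and appeal to \Cref{lemma:vcg} for the $\sqsupseteq$ direction of clause~3. However, you mislocate where the side condition is needed. You claim it enters only in clauses~2 and~5, but it is equally required in clauses~1 and~4. Your biconditional for clause~1 --- ``there exists an atom $\atomC{u}{A} \in \sccqpattern$ and a valuation $\mu$ with $\mu(\sccqpattern) \subseteq \graphin$ and $\mu(u) = a$'' splits into ``$a \in \Concept{u}^{\Int_{\graphext}}$ together with $\atomC{a}{A} \in \graphin$'' --- fails in the backward direction when $u$ is an individual name: then $\Concept{u} = \{u\}$, and $a \in \{u\}^{\Int_{\graphext}}$ merely says $a = u$, which carries no information about any valuation succeeding. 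Concretely, take $P = \{\atomC{a}{A}, \atomC{x}{B}\}$ and $\graphin = \{\atomC{a}{A}\}$: no valuation matches $P$, so $\dot{A}^{\Int_{\graphext}} = \emptyset$, yet $(A \sqcap \{a\})^{\Int_{\graphext}} = \{a\}$, and the equivalence of clause~1 breaks.

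The paper handles this by assuming upfront, for all of cases~1--3, that at least one valuation $\mu$ with $\mu(P) \subseteq \graphin$ exists (equivalently $\graphmed \neq \emptyset$), which is exactly what the proposition's side condition supplies; the individual-name branch of case~$1_{\sqsupseteq}$ explicitly invokes this non-emptiness. The same remark applies verbatim to the role-atom axioms of clause~4, which you treat ``by the same pattern''. Your diagnosis for clauses~2 and~5 is correct, but the fix is that the side condition must be threaded through \emph{every} clause in which an individual name can occur as $u$ or $v$, not just the template clauses.
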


Note the additional condition in \Cref{prop:cwa}: If the output graph is empty and the query includes individual names, then $\graphext$ may not be valid with respect to $\CWA(q)$, since
the constructed axioms may include individual names that are not guaranteed to exist.
This could be remedied by not allowing individual names in queries; however, since Simple SHACL shapes do not allow individual names, these axioms do not impact soundness of the method.

\subsection{Axiomatizations for Query Subpatterns}
\label{ss:components}

We refer to a pattern $P' \subseteq \sccqpattern$ as a \emph{component} of the pattern $\sccqpattern$, if $\vcg(P')$ 
(see also~\Cref{def:vcg}) is a connected subgraph of $\vcg(\sccqpattern)$ 
and there exists no $P''$ such that $P' \subset P''$ and $\vcg(P'')$ is a connected subgraph of $\vcg(\sccqpattern)$.

\begin{example}\label{ex:homomorphism}
  Query $\eqonen$ (\Cref{ex:qone}) has components $\{\atomP{\ew}{\ey}{\epi}, \atomC{\ey}{\eBi}\}$ and $\{\atomP{\ex}{\ez}{\epi}, \atomC{\ez}{\eEi}\}$.
  The CWA encoding (\Cref{ex:cwa}) does not entail $\eyv \sqsubseteq \ezv$, even though this axiom is both valid in all extended graphs, 
  and required for inferring, e.g., the result shape $\eEon \sqsubseteq \eBon$.
\end{example}

\Cref{ex:homomorphism} shows that the $\CWA$ encoding alone is not sufficient for inferring all subsumptions between variable concepts.
If we could find a homomorphism between two components of the query pattern, we would know that the valuations of one component are a subset of the valuations of the other component (modulo variable names), and thus infer subsumptions between variable concepts.

\begin{definition}[Component Map]\label{def:cmap}
  For components $P_1$ and $P_2$ of $P$, every function $h:\var(P_1) \to \IndividualNames \cup \Variables$ such that $\himage{P_1} \subseteq P_2$ 
  is called a \emph{component map} on $\sccqpattern$, where we write $\himage{P_1}$ to mean substitution of each variable $x$ in $P_1$ by $h(x)$.
\end{definition}
\begin{definition}[Component Map Axioms]\label{def:cmapa}
  The set of axioms inferred from a component map $h$ on $P$, denoted $\MapAxioms_h(P)$, contains axiom $\Concept{h(x)} \sqsubseteq \vconcept{x}$ 
  for every variable $x$ in the domain of $h$. 
  The union of all sets $\MapAxioms_h(\sccqpattern)$ of a graph pattern $P$ is called $\MapAxioms(\sccqpattern)$.
\end{definition}
\begin{example}
  Consider two components $P_1 = \{\atomP{x}{y}{p}\}$ and $P_2 = \{\atomP{z}{z}{p},\atomC{z}{A}\}$. 
  Then we can define the mapping $h(x) = z$ and $h(y) = z$, such that $\himage{P_1} \subseteq P_2$. 
  Therefore, we can construct the axioms $\vconcept{z} \sqsubseteq \vconcept{x}$ and $\vconcept{z} \sqsubseteq \vconcept{y}$ valid on $\graphext$.
\end{example}
\begin{proposition}\label{prop:map}
  For every extended graph $\graphext$ of a \sccqname $q = \sccqformal$, it holds that $\shaclvalid{\graphext}{\MapAxioms(\sccqpattern)}$.
\end{proposition}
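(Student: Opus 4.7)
The plan is to reduce the claim, via \Cref{prop:equivalent-formalisms}, to verifying that the canonical interpretation $\Int_{\graphext}$ satisfies every axiom $\Concept{h(x)} \sqsubseteq \vconcept{x}$ contributed by a component map $h : \var(P_1) \to \IndividualNames \cup \Variables$ with $\himage{P_1} \subseteq P_2$ and $x \in \var(P_1)$. I would fix such an axiom and an arbitrary $a \in \Concept{h(x)}^{\Int_{\graphext}}$, then aim to exhibit a valuation $\mu'$ of $P$ into $\graphin$ with $\mu'(x) = a$, which by \Cref{def:extended-graph} places $a$ in $\vconcept{x}^{\Int_{\graphext}}$.

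First I would extract a \emph{base} valuation $\mu$ with $\mu(P) \subseteq \graphin$ and $\mu(h(x)) = a$: if $h(x) = z$ is a variable, then $a \in \vconcept{z}^{\Int_{\graphext}}$ directly yields, via the construction of $\graphvar$, a $\mu$ with $\mu(z) = a$; if $h(x) = b$ is an individual, then $a = b$ and any matching valuation suffices because $\mu(b) = b$ by definition. The central construction is then to reroute $\mu$ through $h$: set $\mu'(v) = \mu(h(v))$ for $v \in \var(P_1)$ and $\mu'(v) = \mu(v)$ otherwise. Because $P_1$ is a maximal connected component of $\vcg(P)$, the variable sets $\var(P_1)$ and $\var(P \setminus P_1)$ are disjoint, so $\mu'$ is unambiguously defined. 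I would then verify that $\mu'(P_1) = \mu(\himage{P_1}) \subseteq \mu(P_2) \subseteq \mu(P) \subseteq \graphin$, using $\himage{P_1} \subseteq P_2 \subseteq P$, while $\mu'(P \setminus P_1) = \mu(P \setminus P_1) \subseteq \graphin$ by the disjointness just noted; together these give $\mu'(P) \subseteq \graphin$. Since $\mu'(x) = \mu(h(x)) = a$, it follows that $\atomC{a}{\vconcept{x}} \in \graphvar \subseteq \graphext$, as desired.

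The delicate point I expect to be the main obstacle is the case where $h$ renames a variable in $\var(P_1)$ to an individual $b$ but no valuation of $P$ into $\graphin$ exists at all. Then the base step has nothing to start from, yet $\{b\}^{\Int_{\graphext}}$ is nonempty while $\vconcept{x}^{\Int_{\graphext}}$ is empty, so the axiom would fail. This mirrors the side condition attached to \Cref{prop:cwa}, and I would resolve it either by restricting attention to queries whose output is guaranteed to be nonempty, or by filtering out from $\MapAxioms(\sccqpattern)$ those axioms contributed by component maps that rename variables to individuals.
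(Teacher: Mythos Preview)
Your argument is correct and follows essentially the same route as the paper: extend the component map $h$ to all of $\var(P)$ by the identity on variables outside $P_1$, compose with a matching valuation $\mu$, and observe that the resulting $\mu'$ still maps $P$ into $\graphin$ with $\mu'(x)=a$. The paper phrases this as ``$h'(P)\subseteq P$, hence $\mu(h'(P))\subseteq\mu(P)\subseteq\graphin$'', while you split $P$ into $P_1$ and $P\setminus P_1$ and invoke the maximality of the component to get variable disjointness; these are the same computation.

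You are in fact more careful than the paper in one respect. The paper's proof silently specializes to $h:\var(P_1)\to\var(P_2)$ (it writes ``$x$ and $y$ be two variables in $P_1$ and $P_2$'' with $h(x)=y$) and never treats the case $h(x)\in\IndividualNames$ that \Cref{def:cmap} permits. The edge case you isolate---$h(x)=b$ an individual name, yet no valuation of $P$ into $\graphin$ exists---is genuine: then $\{b\}^{\Int_{\graphext}}=\{b\}$ while $\vconcept{x}^{\Int_{\graphext}}=\emptyset$, and the axiom $\{b\}\sqsubseteq\vconcept{x}$ fails. This is exactly the phenomenon the paper guards against in \Cref{prop:cwa} via the side condition ``$q$ contains no individual names, or the output is guaranteed non-empty'', but no such side condition is stated for \Cref{prop:map}. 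So your diagnosis is accurate: either that side condition should be imported here, or $\MapAxioms$ should be restricted to component maps with variable image. This is a gap in the paper's statement and proof, not in yours.
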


\subsection{Extending Query Patterns via Constraints}
\label{ss:extend}

The basic mapping $\MapAxioms(\sccqpattern)$ is not sufficient for inferring certain crucial variable concept subsumptions, as \Cref{ex:extends} shows.

\begin{example}\label{ex:extends}
  Consider components $P_1 = \{\atomP{\ex}{\ez}{\epi}, \atomC{\ez}{\eEi}\}$ and $P_2 = \{\atomP{\ew}{\ey}{\epi}, \atomC{\ey}{\eBi}\}$ of query $\eqonen$ (\Cref{ex:qone}).
  Here, we can not find a mapping $h$ satisfying \Cref{def:cmap}.
  However, we know based on $\eS$ that $\esthreei$ (\Cref{ex:sone}).
  We can utilize this knowledge to extend component $P_2$, adding the pattern $\atomC{\ey}{\eEi}$ which does not alter the queries results.
  Now we can find the mapping $h(\ex) = \ew, h(\ez) = \ey$ such that $\himage{P_1} \subseteq P_2$.
\end{example}

Intuitively, by extending a component as illustrated in \Cref{ex:extends}, we reveal a subsumption relationship that was implicit in the input shapes.
For the same reason, the extended component is not more restrictive than the original one.
We now show how this approach can be generalized.

\begin{definition}[Target Variables]
  A variable $x$ is \emph{target variable} for a shape $\targetquery \sqsubseteq \phi$ in an atomic pattern $t$ if and only if either
  \begin{enumerate}
    \item $t = \atomC{x}{A}$ and $\targetquery = A$,
    \item $t = \atomP{x}{y}{p}$ and $\targetquery = \exists p.\top$, or
    \item $t = \atomP{y}{x}{p}$ and $\targetquery = \exists p^-.\top$.
\end{enumerate}
\end{definition}

\newcommand{\MaxExt}{\operatorname{MaxExt}}
\newcommand{\Ext}{\operatorname{Ext}}

\begin{definition}[Extension]
  The \emph{extension} $\Ext(x, \phi)$ of a variable $x$ with respect to a shape constraint $\phi$ and component $P_i$ is the set of atoms defined below, where $x_0$ is a fresh variable.
  \begin{align*}
    \Ext(x, A) &= \{\atomC{x}{A}\},\\
    \Ext(x, \exists p^-.A) &= \{\atomP{x_0}{x}{p}, \atomC{x_0}{A}\},\\
    \Ext(x, \forall p.A) &= \{\atomC{y}{A} \mid \atomP{x}{y}{p} \in P_i \},\\
    \Ext(x, \exists p.A) &= \{\atomP{x}{x_0}{p}, \atomC{x_0}{A}\},\ \text{and}\\
    \Ext(x, \forall p^-.A) &= \{\atomC{y}{A} \mid \atomP{y}{x}{p} \in P_i \}.
  \end{align*}
\end{definition}

Since new atoms are added to the pattern, they can be targets of input shapes, too.
The recursive extension is bound by the maximum degree and diameter of the connectivity graph $\vcg(P)$ of the query pattern $P$ (\Cref{def:vcg}).

\begin{definition}[Bound extension]\label{def:bound-extension}
  Let $P_i$ be a component of a query pattern $P$, $x$ be a variable in $\var(P_i)$, $\shapesin$ be a finite set of shapes, and $P_i^x$ be a pattern that results from adding iteratively atoms $\Ext(u, s)$ to $P_i$, where $s \in \shapesin$, $u$ is a target variable for $s$, and either $u = x$ or $u \notin \var(P_i)$. Then, $P_i^x$ is a \emph{bound $x$-extension} of $P_i$ using $\shapesin$ if and only if the followings conditions are satisfied:
  \begin{enumerate}
  \item
    the maximum degree of $\vcg(P_i^x)$ is not bigger than the maximum degree of $\vcg(P)$,
  \item
    the diameter of graph $\vcg(P_i^x \setminus P_i)$ is not longer than the maximum diameter of the components of $\vcg(P)$.
  \end{enumerate}
\end{definition}

\begin{definition}[Maximum Extension]\label{def:max-extension}
  Given a component $P_i$ of pattern $P$, a variable $x \in \var(P_i)$, and a finite set of Simple SHACL shapes $\shapesin$, $\MaxExt_x(P_i, \shapesin)$ is the maximum bound $x$-extension for $P_i$ using $\shapesin$.
  The \emph{maximum extension} for $P_i$ using $\shapesin$, denoted $\MaxExt(P_i, \shapesin)$, is the pattern $\bigcup_{x \in \var(P_i)}\MaxExt_x(P_i, \shapesin)$.
\end{definition}

Intuitively, \Cref{def:bound-extension} and \Cref{def:max-extension} ensure that an extended component is finite, but still allows for all possible mappings with another component:
Since we are only interested in finding axioms involving names in $\graphext$, we must use at least one such name in the mapping.
Since the other mapping component is a subset of $P$, the mapping can then, in the worst case, only extend with respect to the maximum degree and diameter of $P$.

The maximum extension thus allows for finding all axioms of interest via component maps $h$ from $P_1$ to $\MaxExt(P_2, \shapesin)$, where $P_1$ and $P_2$ are components of $P$.

\begin{definition}[Extended Component Map Axioms]
  The set of \emph{extended component map axioms} of a pattern $P$, and a set of shapes $\shapesin$, denoted $\MapAxiomsSin(P)$ 
  is the set that includes an axiom $\Concept{u} \sqsubseteq \vconcept{x}$ if and only if there is a pair of components $P_1$ and $P_2$ of $P$,
  and a component map $h$ from $P_1$ to  $\MaxExt(P_2, \shapesin)$ such that $h(x)=u$ and $u$ is a variable or an individual name occurring in $P_2$.
\end{definition}

\begin{proposition}\label{prop:mapsin}
  For every extended graph $\graphext$ of a \sccqname $\sccq = \sccqformal$ and set of input shapes $\shapesin$, 
  it holds that $\shaclvalid{\graphext}{\MapAxiomsSin(P)}$.
\end{proposition}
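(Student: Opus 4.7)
The plan is to show that every axiom $\Concept{u} \sqsubseteq \vconcept{x}$ in $\MapAxiomsSin(P)$ holds on every extended graph $\graphext$. Fix such an axiom; by definition there are components $P_1, P_2$ of $P$ and a component map $h$ from $P_1$ to $\MaxExt(P_2, \shapesin)$ with $\himage{x} = u$, where $u$ is a variable or individual name occurring in $P_2$ itself (and not among the fresh variables introduced by the extension). The goal is to show $\Concept{u}^{\graphext} \subseteq \vconcept{x}^{\graphext}$.

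The key auxiliary step is an \emph{extension lemma}: for every valuation $\mu_2$ of $P_2$ with $\mu_2(P_2) \subseteq \graphin$, there exists an extension $\mu_2'$ of $\mu_2$ to the fresh variables of $\MaxExt(P_2, \shapesin) \setminus P_2$ such that $\mu_2'(\MaxExt(P_2, \shapesin)) \subseteq \graphin$. I would prove this by induction on the iterative construction of $\MaxExt(P_2, \shapesin)$, case-splitting on the shape constraint form. For a block $\Ext(x', \exists p.A)$ added because $x'$ is a target variable for some $\psi \sqsubseteq \exists p.A$ in $\shapesin$, the fact that $x'$ is a target variable in the current pattern implies $\mu_2(x') \in \psi^{\graphin}$ (e.g., if $\psi = A'$ then $\atomC{x'}{A'}$ is present and $\mu_2$ maps it into $\graphin$); then $\shaclvalid{\graphin}{\shapesin}$ yields an individual $b \in \IndividualNames$ witnessing $(\mu_2(x'), b) \in p^{\graphin}$ and $b \in A^{\graphin}$, which we assign to the fresh variable $x_0$. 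For $\Ext(x', \forall p.A)$ no fresh variables are introduced: the added atoms $\atomC{y}{A}$ for existing $\atomP{x'}{y}{p}$ are satisfied by $\mu_2$ directly, since $\mu_2(x') \in (\forall p.A)^{\graphin}$ by shape validity. The concept case $\Ext(x', A)$ is immediate.

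With the lemma established, pick any $a \in \Concept{u}^{\graphext}$. If $u$ is a variable, the definition of $\graphvar$ gives a valuation $\mu$ of $P$ with $\mu(P) \subseteq \graphin$ and $\mu(u) = a$; if $u$ is the individual $a$, pick any valuation $\mu$ of $P$ (such a valuation exists whenever the relevant nominal is non-empty in $\graphext$, which is guaranteed under the same mild non-emptiness assumption used in \Cref{prop:cwa}). Restrict to $\mu_2 := \mu|_{\var(P_2)}$, apply the extension lemma to obtain $\mu_2'$, and define $\mu'$ on $P$ by setting $\mu'|_{\var(P_1)} := \mu_2' \circ h$ and $\mu'|_{\var(P_i)} := \mu|_{\var(P_i)}$ on every other component $P_i$. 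Since distinct components of $\vcg(P)$ are variable-disjoint, $\mu'$ is well-defined, and $\mu'(P) \subseteq \graphin$ because $\mu'(P_1) = \mu_2'(\himage{P_1}) \subseteq \mu_2'(\MaxExt(P_2, \shapesin)) \subseteq \graphin$ and the remaining components are satisfied by the restriction of $\mu$. Finally $\mu'(x) = \mu_2'(\himage{x}) = \mu_2'(u) = \mu_2(u) = a$, so $\atomC{a}{\vconcept{x}} \in \graphvar$ and thus $a \in \vconcept{x}^{\graphext}$.

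The main obstacle is the extension lemma, since fresh variables introduced by one block can themselves become targets of subsequent blocks (the atom $\atomC{x_0}{A}$ added by an $\exists$-extension may trigger further extensions at $x_0$). The witness $b$ chosen in $\graphin$ must therefore not only satisfy $A$ but also support all downstream extensions rooted at $x_0$. This works because \Cref{def:bound-extension} caps the recursion by the degree and diameter of $\vcg(P)$, keeping the construction finite, and because at every step shape validity in $\graphin$ directly supplies the required witnesses; the induction then proceeds along the finite recursive structure of $\MaxExt(P_2, \shapesin)$.
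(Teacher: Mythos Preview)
Your proposal is correct and follows essentially the same approach as the paper. The paper's proof reduces to showing that the extension is sound (every valuation of $P_2$ extends to a valuation of $\MaxExt(P_2,\shapesin)$ agreeing on the original variables) by the same case analysis on the five constraint forms, and then defers to the proof of \Cref{prop:map}; your extension lemma is exactly that reduction, and your explicit construction of $\mu'$ simply unfolds the argument of \Cref{prop:map} in the extended setting rather than citing it.
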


\subsection{Axiomatizations for Role Hierachies}
\label{ss:properties} 

Not only variable concepts form hierarchies that are not entailed by the axioms included this far.
We finally infer axioms representing additional role hierarchies, that are determined from the query (\Cref{def:sppa}).

\begin{definition}[Role Hierarchy Axioms]\label{def:sppa}
  The \emph{role hierarchy axioms} of a query $q = (H \gets P)$ are the set of axioms, denoted $\PropertyAxioms(q)$, that include:
  \begin{enumerate}
  \item 
    for each role name $p \in \sccqpattern$, the axiom $\dot{p} \sqsubseteq p$,
  \item 
    for each role name $p \in \sccqpattern$, the axiom $p \sqsubseteq \dot{p}$, 
    if all atoms with role name $p$ occurring in $P$ have the form $\atomP{x}{y}{p}$ where variables $x$ and $y$ occur in no other atom in $\sccqpattern$ and $x \neq y$,
  \item
    for each pair of role names $p,r$ with $\atomP{x}{y}{p} \in \sccqpattern$ and either $\atomP{x}{y}{r} \in \sccqtemplate$ or $\atomP{y}{x}{r} \in \sccqtemplate$
    \begin{enumerate}
      \item the axiom $\dot{p} \sqsubseteq \ddot{r}$ (if $\atomP{x}{y}{r} \in \sccqtemplate$) 
            or the axiom $\dot{p} \sqsubseteq \ddot{r}^-$ (if $\atomP{y}{x}{r} \in \sccqtemplate$),
            if $P$ does not contain any other atoms with role name $p$, and
      \item the axiom $\ddot{r} \sqsubseteq \dot{p}$ (if $\atomP{x}{y}{r} \in \sccqtemplate$) 
            or the axiom $\ddot{r}^- \sqsubseteq \dot{p}$ (if $\atomP{y}{x}{r} \in \sccqtemplate$),
            if $H$ does not contain any other atoms with role name $r$.
    \end{enumerate}
  \end{enumerate}
\end{definition}

Trivially, for any role name $p \in \sccqpattern$, the axiom $\dot{p} \sqsubseteq p$ holds, since, by definition, $\graphmed \subseteq \graphin$.
The inverse axiom $p \sqsubseteq \dot{p}$ holds, if the role name is unconstrained in pattern $\sccqpattern$.
Role hierarchy axioms between $p \in \sccqpattern$ and $r \in \sccqtemplate$, that is axioms $\ddot{r} \sqsubseteq \dot{p}$ and $\dot{p} \sqsubseteq \ddot{r}$, hold,
if there are no further restrictions on $r$ and $p$, respectively.

\begin{example}\label{ex:role}
    Consider the input shape $\eAi \sqsubseteq \exists p.\eAi$ and the query $\eqtwon = \eqtwo$.
    The axioms presented prior to \Cref{def:sppa} do not entail the shape $\eAo \sqsubseteq \exists \epo.\eAo$, 
    even though $\eAon \sqsubseteq \exists \epon.\eAon$ should apply to the output graph: 
    After all, we simply copy all instances of $\eAi$ and the entirety of $\epi$.
    If we include, however, axioms $\epi \sqsubseteq \epm$ and $\epm \sqsubseteq \epi$,
    the new set of axioms does indeed entail $\eAo \sqsubseteq \exists \epo.\eAo$, as expected.
    We can include these axioms -- in this case -- since we simply copy $p$ in its entirety, or, formally,
    the variables in the only atomic pattern including the role name $p$ are not further constrained.
\end{example}

\begin{proposition}\label{prop:properties}
  For every extended graph $\graphext$ of a \sccqname $\sccq$, it holds that $\shaclvalid{\graphext}{\PropertyAxioms(q)}$.
\end{proposition}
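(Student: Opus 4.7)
The plan is to verify that each axiom in $\PropertyAxioms(q)$ holds under the canonical interpretation of $\graphext$, proceeding by case analysis on the three types in \Cref{def:sppa}. Under $\Int_{\graphext}$, the non-dotted name $p$ denotes the triples with role $p$ in $\graphin$, the dotted name $\dot{p}$ those in $\graphmed$ (after stripping dots), and $\ddot{p}$ those in $\graphout$. For each axiom I would translate the subsumption into a containment of triple sets and prove it directly from the evaluation semantics of \Cref{def:query-graphs}.

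For type~(1), $\dot{p} \sqsubseteq p$: the containment $\graphmed \subseteq \graphin$ is immediate from $\graphmed = \bigcup_{\mu(P) \subseteq \graphin} \mu(P)$, so every $p$-triple in $\graphmed$ already lies in $\graphin$. For type~(2), $p \sqsubseteq \dot{p}$: given $\atomP{a}{b}{p} \in \graphin$, I would construct a witness valuation $\mu$ with $\mu(x) = a$ and $\mu(y) = b$, extending arbitrarily on the remaining variables; the condition that $x$ and $y$ appear in no other atom of $P$, together with $x \neq y$, ensures that this assignment is compatible with any valuation of $P \setminus \{\atomP{x}{y}{p}\}$ into $\graphin$, thereby placing $\atomP{a}{b}{p}$ in $\graphmed$.

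For type~(3a), $\dot{p} \sqsubseteq \ddot{r}$: starting from $\atomP{a}{b}{p} \in \graphmed$, the uniqueness of $\atomP{x}{y}{p}$ as the only $p$-atom in $P$ forces every witnessing valuation $\mu$ with $\mu(P) \subseteq \graphin$ to satisfy $\mu(x) = a$ and $\mu(y) = b$; applying $\mu$ to the template atom $\atomP{x}{y}{r} \in H$ then yields $\atomP{a}{b}{r} \in \graphout$. For type~(3b), $\ddot{r} \sqsubseteq \dot{p}$: starting from $\atomP{a}{b}{r} \in \graphout$, the uniqueness of $\atomP{x}{y}{r}$ in $H$ means the triple must originate from some valuation with $\mu(x) = a$, $\mu(y) = b$, and $\mu(P) \subseteq \graphin$; since $\atomP{x}{y}{p} \in P$, this witnesses $\atomP{a}{b}{p} \in \graphmed$. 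The inverse-role variants involving $\atomP{y}{x}{r}$ are handled symmetrically, swapping the direction of $\ddot{r}$ for $\ddot{r}^-$ throughout.

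The main obstacle will be the witness construction in type~(2): if the remaining atoms of $P$ admitted no valuation into $\graphin$, then $\graphmed$ would be empty and the implication could fail despite $\graphin$ containing $p$-triples. The argument must therefore lean on the isolation condition to decouple the choice of $\mu(x), \mu(y)$ from the valuations of the other variables, so that any valuation of the remainder combines with $(a,b)$ into a total valuation of $P$. The remaining cases are transparent reductions to the query evaluation semantics, but together they exhaust every clause of \Cref{def:sppa}, yielding $\shaclvalid{\graphext}{\PropertyAxioms(q)}$.
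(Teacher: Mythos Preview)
Your proposal takes essentially the same approach as the paper, proceeding by case analysis on \Cref{def:sppa} and using $\graphmed \subseteq \graphin$ for case~(1), a witness-valuation construction for case~(2), and tracing the witnessing valuation through $P$ and $H$ for case~(3). The concern you flag about case~(2)---that the remaining atoms of $P$ might admit no valuation into $\graphin$, leaving $\graphmed$ empty while $\graphin$ still contains $p$-triples---is a genuine subtlety that the paper's own proof also elides, so your analysis there is in fact more careful than the paper's.
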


\section{Related Work}\label{sec:related}

The problem of automatically inferring SHACL (or ShEx~\cite{DBLP:conf/i-semantics/PrudhommeauxGS14}) shapes from various inputs has been studied before.
Most commonly it has been considered in the context of constructing shapes from concrete instance data, based on summaries of statistical information over graphs~\cite{DBLP:conf/semweb/SpahiuMP18, DBLP:journals/kbs/Fernandez-Alvarez22, DBLP:journals/pvldb/RabbaniLH23}, or more involved machine learning techniques~\cite{DBLP:conf/sac/Mihindukulasooriya18, DBLP:journals/semweb/OmranTMH23, DBLP:conf/icdt/GrozLSW22}.
Some approaches combine such methods with tools for manual exploration and adaptation of inferred schemata~\cite{DBLP:conf/semweb/BonevaDFG19}.
Our approach, on the other hand, allows the construction of valid shapes from only input shapes and a given query, without the need to consider (or indeed provide) any concrete instance data.

Our work is based in the correspondence of SHACL and description logics, inspired by~\citet{DBLP:conf/lpnmr/BogaertsJB22}.
This correspondence has been investigated before.
Astrea~\cite{DBLP:conf/esws/CimminoFG20} produces SHACL shapes from OWL ontologies by providing a mapping relating patterns of ontology constructs (i.e., language constructs including a specific usage context) with equivalent patterns of SHACL constructs validating them.
Similarly,~\citet{DBLP:conf/semweb/PanditOL18a} explore the usage of ontology design patterns for the generation of SHACL shapes.

Inference of constraints, as well as SHACL shapes, from other data formalisms has been studied as well.
\citet{DBLP:conf/aaai/CalvaneseFPSS14} and \citet{DBLP:conf/www/SequedaAM12} consider inference of RDFS and OWL, respectively, from direct mappings~\cite{arenas2012direct} between relational data and RDF.
Similarly, \citet{DBLP:conf/semweb/ThapaG21} consider inferences of SHACL shapes from direct mappings, while RML2SHACL~\cite{DBLP:conf/kcap/DelvaSOALD21} allows the translation of RML rules to SHACL shapes.
These approaches differ from our approach, in that the input is restricted to a direct mapping or RML mapping from relational data, whereas in our case, the input is defined by an arbitrary query pattern imposing additional constraints, as well as constraints explicitly given as input shapes.

More generally, the inference of constraints over views of relational databases has been studied in the past~\cite{DBLP:journals/tods/KlugP82, DBLP:journals/pvldb/FanMHLW08, DBLP:conf/sigmod/Stonebraker75, DBLP:journals/tods/JacobsAK82}.
While these approaches are similar to our work, they face severe problems: In~\cite{DBLP:journals/tods/JacobsAK82}, general first-order formulas are considered as constaints.
Even though these are more expressive than our constraints, the presented approach is not feasible in practice. As a result, other approaches restrict constraints, most commonly to functional or join dependencies, e.g.,~\cite{DBLP:journals/tods/KlugP82, DBLP:journals/pvldb/FanMHLW08}. These approaches can be considered complementary to our approach, since our approach lacks cardinalities, which are required to express functional dependencies in SHACL; on the other hand, functional dependencies can not express crucial typing constraints for knowledge graphs that are supported by our approach.

Finally, \citet{DBLP:conf/semweb/ThapaG22} consider mappings of relational data to RDF.
In particular, the work focuses on including SQL integrity constraints (keys, uniqueness and not-null constraints) in the translation to SHACL constraints, allowing for a limited number of property constraints mapped from integrity constraints.

\section{Concluding Remarks}
\label{sec:conclusion}

We have presented an algorithm for constructing a set of shapes characterizing the possible output graphs of CONSTRUCT queries, where the input graphs of these queries can be constrained by a set of shapes as well.
The shapes are expressed in a subset of SHACL, whereas the queries are expressed in a subset of SPARQL.
This enables the inference of shapes over result graphs of data processing pipelines (i.e., compositions of CONSTRUCT queries), which can be used both for validation purposes when working with these result graphs, and informatively, aiding developers directly.

The algorithm decides for the finite set of candidate shapes, whether they are entailed by a set of description-logic axioms valid on the union of graphs involved in the query operation.
We prove soundness of this algorithm, and provide an implementation.

\paragraph{Limitations}
(1) The output shapes computed by our approach are sound, but incomplete.
Consider the problem $\sccq_3 =\{\atomP{x}{y}{p}, \atomC{z}{A}\} \gets \{\atomP{x}{y}{p}, \atomC{z}{A}, \atomP{z}{w}{p}\}$ and $\shapesin = \{A \sqsubseteq \exists p.A\}$, where the input shape would apply to the output, but we can not infer it.
This and similar problems could perhaps be remedied by extending the inference of role hierachy axioms to also consider input shapes.

(2) Our approach is limited to a subset of SHACL and SPARQL.
In the \wheretolook, we show how to extend the approach to arbitrary $\DLogics$ constraints.
Intuitively, this extension is possible because the propositions presented in this paper are not restricted to Simple SHACL (consider, in particular, \Cref{prop:reduction-extended-graph}).

\paragraph{Future Work}
In order to extend the approach to queries involving generic patterns (e.g., $\atomP{x}{y}{z}$ or $\atomC{x}{z}$), an expansion to non-generic queries may be possible, since all relevant role and concept names are known from template and input shapes.

While the application of our approach to entire data processing pipelines is straightforward, there are interesting empirical questions regarding the properties of results shapes, e.g., depending on the nature of input shapes or number of processing steps left as future work.

\begin{acks}
  This work was partially funded by the Deutsche Forschungsgemeinschaft (DFG) under COFFEE -- STA 572\_15-2, and the DFG Germany's Excellence Strategy -- EXC 2120/1 -- 390831618.
\end{acks}

\bibliographystyle{ACM-Reference-Format}
\bibliography{paper}

\reportorpaper{\
  \appendix
  \section{Structure of the Appendix}
\label{a:structure}

This appendix is structured as follows.
In \Cref{a:implementation} we give details about our implementation, and a feasibility experiment investigating runtime performance.
\Cref{a:examples} gives extended versions of the running examples used throughout the paper.
\Cref{a:proofs} contains the full proofs for all propositions from the main paper.
Finally, \Cref{a:extension} details how the method from the main paper can be extended for more general types of SHACL shapes, 
and \Cref{a:proofsext} gives proofs related to these extensions.

  \reportorpaper{\section{Implementation Runtime Evaluation}}{\section{Appendix}}
\label{a:implementation}

\subsection{Implementation Overview}

We implemented \Cref{alg:outshapes}, relying on a straightforward translation of \Cref{alg:main} to Scala for validation with respect to a single candidate shape, and a generator for candidates based on the syntax of Simple SHACL.
For reasoning tasks, our implementation supports any OWL API\footnotemark[3] reasoner.
In particular, we rely on the HermiT~\cite{hermit} reasoner as a default.

Our implementation features tools for generative exploration regarding query and vocabulary size, a setup for performance evaluation and results (see also the remainder of this section), a test suite, and the examples from this paper in mechanized form.
To this end, the implementation features a command-line application for parsing and inferring result shapes for examples (e.g., those supplied in the project repository), taking SCCQ and (Simple) SHACL shapes as input, both in DL and JSON-LD syntax.
The implementation also features a library for applying the shapes-to-shapes method, supporting both Simple SHACL and $\DLogics$-based shapes as input, and returning inferred axioms in internal datastructures that can then be utilized in a wider range of reasoning tasks.
Additionally, we also include a tutorial giving a more intuitive and hands-on introduction to our method.
The implementation~\cite{darus-3977_2024} is available under a free software license on GitHub\footnotemark[4].

\footnotetext[3]{\url{https://github.com/owlcs/owlapi}}
\footnotetext[4]{\url{https://github.com/softlang/s2s}}

\subsection{Evaluation: Feasibility}

We show feasibility of our method and implementation through the following evaluation of runtime performance.
This evaluation is based on randomly generated problems (i.e., sets of input shapes as well as queries).
Thus, the experiment discussed here shows basic feasibility of our method with synthetic data, though results on real-world data may differ.

As performance depends largely on the reasoning task (see also below), results can differ based on the reasoner implementation used. 
Multiple reasoners are available with our implementation, and their respective optimization strategies may differ.
Some reasoner implementations or optimization strategies are not deterministic; 
Thus, as a simple optimization, our implementation can abort runs with a set timeout and retry computing results, in order to avoid unlucky models for non-deterministic reasoner optimization strategies (this is reported in the results).

\paragraph{Experimental Setup}

We refer to the project setup (in particular, the specification in \texttt{build.sbt}) with respect to versions of the respective software, et. al.
Beyond that, we run experiments with Microsoft JDK build \texttt{openjdk 17.0.7 2023-04-18 LTS} on Windows 10 Pro (Version 10.0.19045), on commodity hardware (Intel i5-6600K @ 3.5GHz, 16GB RAM).

We define the following three sample configurations and give the number of atomic patterns per query (for template and pattern each) as well as the number of input shapes:

\begin{itemize}
    \item \texttt{SMALL} 1-2 templates and patterns, 1-2 shapes.
    \item \texttt{MEDIUM} 5-7 templates and patterns, 5-7 shapes.
    \item \texttt{LARGE} 11-13 templates and patterns, 11-13 shapes
\end{itemize}

As a basis for these scenarios, we refer to the following real-world query datasets (logs), where most queries (more than $90\%$) have fewer than 6 or 7 patterns~\cite{BGK2018,DBLP:journals/vldb/BonifatiMT20}, relating to our \texttt{MEDIUM} configuration.
More than half include only on pattern, relating to our \texttt{SMALL} configuration.
Our \texttt{LARGE} configuration covers outliers of very large queries (less than $1\%$ of real-world queries).

For all samples, we draw fresh variables (per pattern) with a probability of $0.5$ and fresh concepts or role names with a probability of $0.8$, and sample property versus concept atomic patterns with a ratio of $0.3$.
We provide the full details on all parameters used for sampling with the implementation source code.

Note, that we generate shapes from the \emph{vocabulary of the query}.
Thus, the number of input shapes given here is not comparable to the size of usual sets of SHACL shapes in real-world datasets. 
That is, the sets of 1.5/6/12 shapes constrain the relatively small vocabulary of an input query rather tightly.
We do not know of any empirical data on the average number of shapes in the query (pattern) vocabulary, i.e., that apply to a particular query, thus we estimate the numbers as given above.

We run and measure $5.000$ samples each for the three given configurations, using a fixed seed for the random generator, and measure execution time for a single run of the algorithm per sample, after first running $100$ additional samples as warmup.
This experiment uses the HermiT\cite{hermit} reasoner.

\paragraph{Results of the Experiment}

A summary of results is given in Table~1.
The full output with input shapes, input queries, fine-grained execution metrics, as well as output shapes is included with the project source code as a \texttt{CSV} file and a summary report.
Both the full output as well as the summary can be generated by executing the \texttt{profile} main method (see the implementation documentation for details on how to re-run the experiment as-is, or modify it across various dimensions).

\begin{table}
    \label{tab:perf-results}
    \caption{Results (average and median execution time in milliseconds without timeouts, the number of timeouts (limit: 10 minutes), as well as percentage of processing time spent on reasoning) for \texttt{SMALL}, \texttt{MEDIUM} and \texttt{LARGE} configurations.}
\begin{center}
\begin{tabular}{ l r r r r } 
    \toprule
    Configuration & Average & Median & T/O & Reasoning\\
    \midrule
    \texttt{SMALL} & 3 & 0 & 0 & 38,42\%\\ 
    \texttt{MEDIUM} & 40 & 20 & 0 & 87,11\%\\ 
    \texttt{LARGE} & 693 & 243 & 20 & 97,66\%\\ 
    \bottomrule
\end{tabular}
\end{center}
\end{table}

\paragraph{Interpretation}

We show in this experiment the basic feasibility of our method, with average and median execution times for even very large samples of less than one second. 
For the largest samples, few ($0.4\%$) samples time out, with a set timeout of 10 minutes.
We hypothesize that this is due to the reasoner sometimes choosing an unlucky model, where reasoning takes a very long time; 
though this could also perhaps be caused by bugs in the reasoner implementation.
Indeed, the majority of time is spent on reasoning for larger configuration (see Table~1) and by detailed inspection of the full log, this holds true for timeouts as well.

  \reportorpaper{\section{Extended Examples}}{\subsection{Extended Examples}}
\label{a:examples}

In this section, we extend upon the running examples.
We first provide additional details on the running example incorporated in the body of the paper, including the example using concrete SHACL and SPARQL syntax.
Next, we extend upon the running example by giving additional example queries.

\paragraph{Implementation}

All examples from the main paper and from this section are also provided in mechanized form with the implementation.
To this end, the implementation contains \texttt{.shacl} (as well \texttt{.json}) and \texttt{.sparql} files, where example shapes are included in formal description logics and JSON-LD syntax, and SCCQ queries in concrete SPARQL syntax.
We refer to the documentation (\texttt{README.md}) for more details on running example instances, and obtaining different kinds of outputs.

\paragraph{\texorpdfstring{Running Example: Full Set of Shapes $\shapesout$}{Running Example: Full Set of Output Shapes}}

The full set of output shapes from \Cref{ex:formalrunning} is given below.
Note, that some shapes (such as tautologies and shapes trivially entailed by other shapes) are omitted.

\begin{example}\label{ex:fullrunning}
    Full output shapes for $\eqonen = \eqone$ (\Cref{ex:qone}) and the set of input shapes $\eS = \{\esone, \estwo, \esthree\}$ (\Cref{ex:sone}), as first introduced in \Cref{ex:formalrunning}.
    \begin{align*}
    &\eSout = \{\\
        &\eBon \sqsubseteq \forall \epoin.\eBon,
        &&\eBon \sqsubseteq \forall \epoin.\eEon,
        &&&\eBon \sqsubseteq \forall \epon.\eBon,\\
        &\eBon \sqsubseteq \exists \epoin.\eBon,
        &&\eBon \sqsubseteq \exists \epoin.\eEon,
        &&&\eEon \sqsubseteq \eBon,\\
        &\eEon \sqsubseteq \forall \epoin.\eBon,
        &&\eEon \sqsubseteq \forall \epoin.\eEon,
        &&&\eEon \sqsubseteq \forall \epon.\eBon,\\
        &\eEon \sqsubseteq \exists \epoin.\eBon,
        &&\eEon \sqsubseteq \exists \epoin.\eEon,
        &&&\eEon \sqsubseteq \exists \epon.\eBon,\\
        &\exists \epoin.\top \sqsubseteq \eBon,
        &&\exists \epoin.\top \sqsubseteq \forall \epoin.\eBon,
        &&&\exists \epoin.\top \sqsubseteq \forall \epoin.\eEon,\\
        &\exists \epoin.\top \sqsubseteq \forall \epon.\eBon,
        &&\exists \epoin.\top \sqsubseteq \exists \epoin.\eBon,
        &&&\exists \epoin.\top \sqsubseteq \exists \epoin.\eEon,\\
        &\exists \epon.\top \sqsubseteq \eBon,
        &&\exists \epon.\top \sqsubseteq \eEon,
        &&&\exists \epon.\top \sqsubseteq \forall \epoin.\eBon,\\
        &\exists \epon.\top \sqsubseteq \forall \epoin.\eEon,
        &&\exists \epon.\top \sqsubseteq \forall \epon.\eBon,
        &&&\exists \epon.\top \sqsubseteq \exists \epoin.\eBon,\\
        &\exists \epon.\top \sqsubseteq \exists \epoin.\eEon,
        &&\exists \epon.\top \sqsubseteq \exists \epon.\eBon\ \}
    \end{align*}
\end{example}

\paragraph{Running Example: Concrete Syntax}

We next give the running example (e.g., \Cref{ex:fullrunning}) in concrete SPARQL and SHACL (Turtle) syntax.
We assume the default prefix \texttt{:} for the example domain (unspecified), and prefix \texttt{sh:} for SHACL (i.e., bound to \url{http://www.w3.org/ns/shacl#}).

\begin{figure}
\begin{lstlisting}[language=SHACL]
:s1 a sh:NodeShape ;
    sh:targetClass :A ;
    sh:property [ 
        sh:path :p ;
        sh:qualifiedMinCount 1 ;
        sh:qualifiedValueShape [ 
            sh:class :B
        ] 
    ] .

:s2 a sh:NodeShape ;
    sh:targetSubjectsOf :r ;
    sh:class :B .

:s3 a sh:NodeShape ;
    sh:targetClass :B ;
    sh:class :E .
\end{lstlisting}
\label{fig:sh:one}
\caption{Shapes $\esonen$ ($\esone$), $\estwon$ ($\estwo$), and $\esthreen$ ($\esthree$) using a concrete SHACL syntax (Turtle).}
\end{figure}

\begin{figure}
\begin{lstlisting}[language=SPARQL]
CONSTRUCT {
  ?y a :E .
  ?y :p ?z .
  ?z a :B
} WHERE {
  ?w :p ?y .
  ?y a :B .
  ?x :p ?z .
  ?z a :E
}
\end{lstlisting}
\label{fig:q:one}
\caption{Example query $\eqonen = \eqone$ in concrete SPARQL syntax.}
\end{figure}

\paragraph{Additional Examples}

We now give additional examples problem instances, that is, queries and sets of input shapes, and (a subset of) the corresponding output shapes.
For the full output, as well as all intermediate components, i.e., the inferred axioms, we refer to the implementation, which renders full internal details via the \texttt{--debug} flag.
All examples are included with the implementation.

\anotherexample{4}{
    \sccquery{
        \atomC{\ex}{\eBon},
        \atomC{\ey}{\eAon},
    }{
        \atomC{\ex}{\eAi},
        \atomC{\ey}{\eBi},
    }
}{
    \eAi \sqsubseteq \eBi
}{
    \eBon \sqsubseteq \eAon
}{
    Query $q_4$ is a simple example, demonstrating how our method maintains subsumption relationships through renaming of concepts, in this simple case swapping the names $A$ and $B$.
    A core mechanism allowing this is the subsumption between the variable concept for query variables $\texttt{?x}$ and $\texttt{?y}$ entailed by our inferred axioms, which holds on all extended graphs for $q_4$ and $S_4$.
}

\anotherexample{5}{
    \sccquery{
        \atomC{\ex}{\eBon},
        \atomC{\ey}{\eAon},
    }{
        \atomC{\ex}{\eAi},
        \atomP{\ex}{\ey}{\epi},
        \atomC{\ey}{\eBi},
    }
}{
    \eBi \sqsubseteq \eAi, \eBi \sqsubseteq \exists \epi . \eBi
}{
    \eAon \sqsubseteq \eBon
}{
    Here, we continue with another example with the same, simple template as in the previous example.
    This simple template serves to demonstrate the consequences of variable concept subsumption between the variables $\texttt{?y}$ and $\texttt{?x}$ directly, as the output shape $\eAon \sqsubseteq \eBon$.
    This subsumption relationship results from the mapping step discussed in \Cref{sec:algorithm}: Since we know for all bindings of $\ey$ in the query, that both the pattern $\atomC{\ey}{\eAi}$ is always satisfied (since $\eBi \sqsubseteq \eAi$) and the same for $\atomP{\ey}{z}{\epi}, \atomC{z}{\eBi}$ (for some fresh variable $z$, since $\eBi \sqsubseteq \exists \epi . \eBi$), we can obtain a mapping resulting in subsumption $\vconcept{\ey} \sqsubseteq \vconcept{\ex}$.
}

\anotherexample{6}{
    \sccquery{
        \atomC{\ex}{\eAon},
        \atomC{\ey}{\eBon},
        \atomP{\ex}{\ey}{\epon}
    }{
        \atomC{\ex}{\eAi},
        \atomC{\ey}{\eBi},
    }
}{}{
    \eAon \sqsubseteq \forall \epoin.\eAon,
    \eAon \sqsubseteq \forall \epon.\eBon,
    \eAon \sqsubseteq \exists \epon.\eBon,
    \eBon \sqsubseteq \forall \epoin.\eAon,
    \eBon \sqsubseteq \forall \epon.\eBon,
    \eBon \sqsubseteq \exists \epoin.\eAon,
    \exists\epoin.\top \sqsubseteq \eBon,
    \exists\epoin.\top \sqsubseteq \forall \epoin.\eAon,
    \exists\epoin.\top \sqsubseteq \forall \epon.\eBon,
    \exists\epoin.\top \sqsubseteq \exists \epoin.\eAon,
    \exists\epon.\top \sqsubseteq \eAon,
    \exists\epon.\top \sqsubseteq \forall \epoin.\eAon,\\
    \exists\epon.\top \sqsubseteq \forall \epon.\eBon,
    \exists\epon.\top \sqsubseteq \exists \epon.\eBon,
}{%
    With this example, we demonstrate inference of shapes from the query (template) itself, without any given input shapes, resulting only from the closure assumptions used in the method.
    The query pattern simply introduces the variables $\texttt{?x}$ and $\texttt{?y}$ without any further context (arbitrary names $\eAi$ and $\eBi$).
    In the template, we introduce the additional role name $\epon$ between these two variables.
}

  \section{Proofs}
\label{a:proofs}

In this section, we present the full proofs for \Cref{prop:equivalent-formalisms} through \Cref{prop:properties}, and introduce \Cref{theorem:np} (and its proof) as well as \Cref{prop:nm} (and its proof).

\subsection{\texorpdfstring{Proof for \Cref{prop:equivalent-formalisms}}{Proof for Proposition 1}}
\label{proofs:equivalent-formalisms}

In order to prove \Cref{prop:equivalent-formalisms} we need to show that every model of the validation knowledge base of a graph is isomorphic to the canonical model of the graph.
To this end, we introduce the following lemma.

\begin{lemma}
  \label{lemma:isomorphic-models}
  Let $G$ be a graph, $\Int_G$ the canonical interpretation of $G$, and $(\TBox_G, G)$ the validation knowledge base of $G$. Then, all models $\Int$ of $(\TBox_G, G)$ are isomorphic to $\Int_G$.
\end{lemma}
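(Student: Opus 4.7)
The plan is to exhibit a natural isomorphism $f : \Delta^\Int \to \Delta^{\Int_G} = \IndividualNames$ that is forced by the three assumption groups in $\TBox_G$: DCA gives surjectivity onto a set of named elements, UNA gives injectivity (so $f$ is really a bijection between domains), and the CWA axioms pin down the concept and role extensions so that $f$ becomes structure-preserving.

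First I would define $f$ by $f(d) = a$ where $a$ is chosen so that $a^\Int = d$. Well-definedness has two parts. For existence, the DCA axiom $\top \equiv \bigsqcup_{a \in \IndividualNames}\{a\}$ together with $\top^\Int = \Delta^\Int$ forces every $d \in \Delta^\Int$ to lie in $\{a\}^\Int$ for some $a$, i.e.\ $d = a^\Int$. For uniqueness, the UNA axioms $\{a\}\sqcap\{b\} \equiv \bot$ (for $a \neq b$) ensure that no $d$ can be $a^\Int$ for two distinct names. Surjectivity of $f$ onto $\IndividualNames$ is immediate because $f(a^\Int) = a$, and injectivity is again UNA.

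Next I would verify that $f$ preserves the interpretation of every concept name and role name. For a concept name $A$, the CWA axiom $A \equiv \bigsqcup_{\atomC{a}{A}\in G}\{a\}$ forces $A^\Int = \{\,a^\Int \mid \atomC{a}{A} \in G\,\}$, which $f$ maps bijectively onto $\{\,a \mid \atomC{a}{A} \in G\,\} = A^{\Int_G}$. For a role name $p$, I would argue in both directions using the CWA axioms $\exists p.\{a\} \equiv \bigsqcup_{\atomP{b}{a}{p} \in G}\{b\}$. Given $(d_1,d_2) \in p^\Int$, choose $a,b$ with $d_1 = a^\Int$, $d_2 = b^\Int$ (from DCA/UNA); then $d_1 \in (\exists p.\{b\})^\Int$, so by the axiom and UNA we obtain $\atomP{a}{b}{p} \in G$, hence $(f(d_1), f(d_2)) \in p^{\Int_G}$. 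Conversely, any $\atomP{a}{b}{p} \in G$ is an assertion of $\ABox = G$, so $(a^\Int, b^\Int) \in p^\Int$ in every model.

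The main obstacle is really only notational: carefully juggling the three axiom families to rule out "spurious" domain elements and "spurious" role edges. DCA kills extra elements, UNA keeps names from collapsing, and the CWA axioms simultaneously close the concept and role extensions exactly to what $G$ asserts; once these three roles are isolated, the isomorphism essentially writes itself. One minor subtlety to flag is that the CWA axioms for roles are stated via $\exists p.\{a\}$ and $\exists p^-.\{a\}$ rather than directly on $p^\Int$, so the backward direction (that no unasserted pair lies in $p^\Int$) must be obtained by picking a name for the second coordinate via DCA before invoking the axiom; the inverse-role axiom is available symmetrically if one prefers to pivot on the first coordinate instead.
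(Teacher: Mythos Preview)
Your proposal is correct and follows essentially the same approach as the paper: both construct a bijection between $\Delta^\Int$ and $\Delta^{\Int_G}$ using DCA for surjectivity, UNA for injectivity, and the CWA axioms for preservation of concept and role extensions. The only cosmetic difference is that the paper defines the isomorphism in the opposite direction, $f:\Delta^{\Int_G}\to\Delta^\Int$ with $f(a)=a^\Int$, whereas you define its inverse; your treatment of the role case is in fact slightly more explicit than the paper's about why no unasserted pair can appear in $p^\Int$.
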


\begin{proof}
  The fact that $\Int$ and $\Int_G$ are isomorphic follows from the existence of a function $f:\Delta^{\Int_G} \to \Delta^\Int$ satisfying the following properties:
  \begin{description}
  \item[P.1] Function $f$ is bijective.
  \item[P.2]
    $f(a^{\Int_G}) = a^\Int$ for every $a \in \IndividualNames$.
  \item[P.3]
    $\{f(x) \mid x \in A^{\Int_G}\} = A^\Int$ for every $A \in \ConceptNames$.
  \item[P.4]
    $\{(f(x), f(y)) \mid (x, y) \in r^{\Int_G}\} =  r^\Int$ for every $r \in \RoleNames$.
  \end{description}
  We next prove properties \textbf{P.1} to \textbf{P.4} for function $f$.
  \begin{description}
  \item[Proof for P.1:]
    Let $f:\Delta^{\Int_G} \to \Delta^\Int$ be the function defined as $f(a)=a^\Int$, for every individual name $a\in \IndividualNames$. 
    Function $f$ is well-defined because, by definition of $\Int_G$, $\Delta^{\Int_G} = \IndividualNames$. 
    To show that function $f$ is bijective, it suffices to prove that $f$ is injective and surjective:
    \begin{description}
    \item[Surjective:] 
      The domain closure assumption axioms in the knowledge base $(\TBox_G, G)$ imply that $\Delta^\Int = \bigcup_{a \in \IndividualNames}\{a^\Int\}$.
      Then, for every element $e \in \Delta^\Int$, there exists an individual name $\{a\}$ such that $e \in \{a\}^\Int$. 
      That is, $f(a) = e$. 
      Hence, $f$ is surjective.
    \item[Injective:] 
      The unique-name assumption axioms in the knowledge base $(\TBox_G, G)$ imply that $\Int \models \{b\} \sqcap \{a\} \equiv \bot$ for every pair of distinct individual names $a$ and $b$. 
      That is, $f(a) \neq f(b)$.
      Hence, $f$ is injective.
    \end{description}
  \item[Proof for P.2:]
    Let $a \in \IndividualNames$ be an arbitrary individual name.
    By definition of $\Int_G$, $a^{\Int_G} = a$.
    By definition of $f$, $f(a) = a^\Int$.
    Hence, combining both identities, we obtain the identity $f(a^{\Int_G}) = a^\Int$.
  \item[Proof for P.3:]
    Let $A \in \ConceptNames$ be an arbitrary concept name. 
    By definition of $\Int_G$, $A^{\Int_G} = \{a \mid \atomC{a}{A} \in G\}$.
    The closed-world assumption axioms in the knowledge base $(\TBox_G, G)$ imply that $A^{\Int} = \bigcup_{\atomC{a}{A} \in G}\{a^{\Int}\}$.
    That is, $A^\Int = \{a^\Int \mid \atomC{a}{A} \in G\}$.
    Since $f(a) = a^\Int$, we conclude that $\{f(a) \mid a \in A^{\Int_G}\} = A^\Int$.
  \item[Proof for P.4:]
    Let $r \in \RoleNames$ be an arbitrary role name.
    By definition of $\Int_G$, $r^{\Int_G} = \{(a,b) \mid \atomP{a}{b}{r} \in G\}$.
    The closed-world assumption axioms in the knowledge base $(\TBox_G, G)$ imply that $(\exists r.\{b\})^\Int = \bigcup_{\atomP{a}{b}{r} \in G}\{a^\Int\}$.
    That is, $r^\Int = \{(a^\Int, b^\Int) \mid \atomP{a}{b}{r} \in G\}\}$.
    Since $f(a) = a^\Int$ and $f(b) = b^\Int$, we conclude that $\{(f(a), f(b)) \mid (a,b) \in r^{\Int_G}\} = r^\Int$.
  \end{description}
  Hence, we have proved the lemma.
\end{proof}

\begin{proof}[Proof of \Cref{prop:equivalent-formalisms}]
  This proof follows from \Cref{lemma:isomorphic-models}, which states that $(\TBox_G, G)$ has a unique model up to isomorphism, namely $\Int_G$;
  thus for every set $\Sigma$ of $\DLogics$ axioms, $\Int_G \models \Sigma$ if and only if $\Int_G \models (\TBox_G \cup \Sigma, G)$.
  That is, $\Int_G \models S$ if and only if $(\TBox_G \cup S, G)$ is consistent.
  Hence, statements (i) and (ii) are equivalent.
  Similarly, statements (ii) are (iii) are equivalent because $(\TBox_G, G)$ has a unique model up to isomorphism.
  In general, given two sets of axioms $\Sigma_1$ and $\Sigma_2$, the consistence of $(\Sigma_1, G)$ and $(\Sigma_2, G)$ does not imply the consistency of $(\Sigma_1 \cup \Sigma_2, G)$
  because the sets models of $(\Sigma_1, G)$ and $(\Sigma_2, G)$ can be non-empty and disjoint.
  However, in this case the implication is true because $(\TBox_G, G)$ admits a single model up to isomorphism.
\end{proof}

\subsection{\texorpdfstring{Proof for \Cref{prop:relevant-shapes}}{Proof for Proposition 2}}
\label{proof:relevant-shapes}

In order to prove \Cref{prop:relevant-shapes}, we show by contraposition that if a given shape $s$ does not satisfy the conditions of the proposition, then it is irrelevant.

To this end, we introduce two lemmas, relating the structure of a concept expression $C$ with the vocabulary of $C$ and of a given graph $G$.
First, we consider concept names and existential quantification.

\begin{lemma}\label{prop:concept-empty}
  Let $C$ be a concept description defined as follows:
  \[
    C ::= A \mid \exists p.\top \mid \exists p^-.\top \mid
    \exists p.A \mid \exists p^-.A\;,
  \]
  where $A$ is a concept name, and $p$ is a role name. Let $G$ be a Simple RDF graph, and $(\TBox_G, G)$ be the validation knowledge base of $G$. Then, $\voc(C) \not\subseteq \voc(G)$ implies $(\TBox_G, G) \models C \equiv \bot$.
\end{lemma}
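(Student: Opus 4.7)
The plan is to proceed by case analysis on the grammar of $C$, showing in each case that if some name in $\voc(C)$ is missing from $\voc(G)$, then the corresponding CWA axiom of $\TBox_G$ collapses to $\bot$. The guiding intuition is that the CWA axioms of \Cref{def:validation-rdf-semantics} tightly constrain the extension of each concept and role name by a disjunction of nominals over the assertions in $G$, and when no such assertions exist the disjunction is empty.

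First I would handle the base case $C = A$ directly: since $A \notin \voc(G)$, no assertion $\atomC{a}{A}$ is in $G$, so the CWA axiom $A \equiv \bigsqcup_{\atomC{a}{A}\in G}\{a\}$ reduces to $A \equiv \bot$. For $C = \exists p.\top$ with $p \notin \voc(G)$, the plan is to combine the DCA and the CWA: the DCA yields $\top \equiv \bigsqcup_{a \in \IndividualNames}\{a\}$, and distributing the existential over the union in any model of $(\TBox_G, G)$ gives $\exists p.\top \equiv \bigsqcup_{a \in \IndividualNames} \exists p.\{a\}$; the CWA axiom for $p$ then forces $\exists p.\{a\} \equiv \bot$ for every $a$, since no role assertion for $p$ occurs in $G$. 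The case $C = \exists p^-.\top$ is symmetric, using the corresponding inverse CWA axiom.

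The remaining two cases $C = \exists p.A$ and $C = \exists p^-.A$ reduce to the previous ones: if $A \notin \voc(G)$, then the first case gives $A \equiv \bot$, whence $\exists \rho.A \sqsubseteq \exists \rho.\bot \equiv \bot$ for $\rho \in \{p, p^-\}$; if instead $p \notin \voc(G)$, then $\exists \rho.A \sqsubseteq \exists \rho.\top$ and the previous case yields $\exists \rho.\top \equiv \bot$.

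The main subtlety I anticipate is the step distributing $\exists p$ over the (possibly infinite) disjunction $\bigsqcup_{a \in \IndividualNames}\{a\}$ arising from the DCA; here I would either appeal to $\IndividualNames$ being finite by the assumption at the start of \Cref{sec:foundations}, or invoke \Cref{prop:equivalent-formalisms} to argue semantically via the canonical interpretation $\Int_G$, in which $p^{\Int_G} = \emptyset$ whenever $p \notin \voc(G)$, so $(\exists p.\top)^{\Int_G} = \emptyset$. This model-theoretic detour is the cleanest way to avoid fiddly manipulation of the CWA disjunctions.
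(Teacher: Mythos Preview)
Your proposal is correct. The paper's proof, however, takes a different and somewhat shorter route: it argues by contraposition, assuming there is an individual $a^\Int \in C^\Int$ for some model $\Int$ of $(\TBox_G,G)$, and then uses \Cref{lemma:isomorphic-models} (every such model is isomorphic to the canonical interpretation $\Int_G$) to read off directly that the relevant concept or role assertions must occur in $G$, hence $\voc(C) \subseteq \voc(G)$. In other words, the paper goes model-theoretic from the start and never manipulates the CWA disjunctions explicitly. Your primary line of attack is proof-theoretic, working directly with the empty disjunctions in the CWA axioms and using DCA plus distributivity for the $\exists p.\top$ case; this is a bit more hands-on but has the virtue of making transparent exactly which axioms of $\TBox_G$ do the work. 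You already identified the model-theoretic detour via \Cref{prop:equivalent-formalisms} as the cleanest way to handle the distributivity step, and that detour is essentially the paper's entire argument; had you led with it (and phrased it contrapositively), you would have matched the paper almost verbatim.
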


\begin{proof}
  Let $\Int$ be a model of $(\TBox_G, G)$. We will prove this lemma by proving the contraposition: the existence of an individual $a^\Int\in C^\Int$ implies that $\voc(C) \subseteq \voc(G)$. By \Cref{lemma:isomorphic-models}, every model $\Int$ of $(\TBox_G, G)$ is isomorphic to the canonical model of $G$. The proof follows case by case:
  \begin{enumerate}
  \item If $C$ is $A$ then, $\atomC{a}{A} \in G$. Hence, $\voc(C) \subseteq \voc(G)$.
  \item If $C$ is $\exists p.\top$ or $\exists p^-.\top$, then there is an individual name $b$ such that $\atomP{a}{b}{p} \in G$ or $\atomP{b}{a}{p} \in G$. Hence, $\voc(C) \subseteq \voc(G)$.
  \item If $C$ is $\exists p.A$ or $\exists p^-.A$, then there is an individual name $b$ such that $\atomC{b}{A} \in G$, and $\atomP{a}{b}{p} \in G$ or $\atomP{b}{a}{p} \in G$. Hence, $\voc(C) \subseteq \voc(G)$.
  \end{enumerate}
  Hence, we prove the lemma by contraposition.
\end{proof}

Next, we consider universal quantification.

\begin{lemma}\label{prop:forall-empty}
  Let $C$ be a concept description defined as follows:
  \[
    C ::= \forall p.A \mid \forall p^-.A\;,
  \]
  where $A$ is a concept name, and $p$ is a role name. Let $G$ be a Simple RDF graph, and $(\TBox_G, G)$ be the validation knowledge base of $G$. Then the following holds.
  \begin{enumerate}
  \item
    If $p \notin \voc(G)$ then $(\TBox_G, G) \models C \equiv \top$.
  \item
    If $A \notin \voc(G)$ and $C$ is $\forall p.A$ then \\ $(\TBox_G, G) \models C \equiv \neg(\exists p.\top)$.
  \item
    If $A \notin \voc(G)$ and $C$ is $\forall p^-.A$ then \\ $(\TBox_G, G) \models C \equiv \neg(\exists p^-.\top)$.
  \end{enumerate}
\end{lemma}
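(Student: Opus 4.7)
The plan is to mirror the strategy used for the proof of Lemma~\ref{prop:concept-empty}: appeal to Lemma~\ref{lemma:isomorphic-models}, so that it suffices to verify the three claimed equivalences in the canonical interpretation $\Int_G$, since every model of $(\TBox_G, G)$ is isomorphic to $\Int_G$. This reduces the semantic question about all models to a computation in the single structure $\Int_G$ whose concept and role extensions are fixed by Definition~\ref{def:canonical-interpretation}.

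For claim (1), I would first note that if $p \notin \voc(G)$, then there is no triple in $G$ with role name $p$, so by the definition of $\Int_G$ we have $p^{\Int_G} = \emptyset$. Consequently, for every $d \in \Delta^{\Int_G}$ there is no $e$ with $(d,e) \in p^{\Int_G}$ and no $e$ with $(e,d) \in p^{\Int_G}$, so both $(\forall p.A)^{\Int_G}$ and $(\forall p^-.A)^{\Int_G}$ are satisfied vacuously and coincide with $\Delta^{\Int_G} = \top^{\Int_G}$.

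For claims (2) and (3), the key observation is that $A \notin \voc(G)$ forces $A^{\Int_G} = \emptyset$ by Definition~\ref{def:canonical-interpretation}. Unfolding the semantics of $\forall p.A$, the set $(\forall p.A)^{\Int_G}$ becomes $\{d \mid \text{for all } e,\ (d,e) \in p^{\Int_G} \Rightarrow e \in \emptyset\}$, which simplifies to $\{d \mid \text{there is no } e \text{ with } (d,e) \in p^{\Int_G}\}$, i.e.\ the complement of $(\exists p.\top)^{\Int_G}$. That is precisely $(\neg \exists p.\top)^{\Int_G}$. Claim (3) is the symmetric argument using $p^-$ in place of $p$.

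The only conceptual subtlety is justifying the reduction to the canonical interpretation: in a generic model of $(\TBox_G, G)$ the extension $p^\Int$ need not be empty just because $p \notin \voc(G)$, and likewise for $A$. Lemma~\ref{lemma:isomorphic-models} does the essential work here, since it guarantees that the DCA, UNA, and CWA axioms of $\TBox_G$ pin down every model up to isomorphism, and the CWA axioms in particular close off the extensions of concept and role names to exactly those triples occurring in $G$. Once this reduction is in place, the three case computations above are routine.
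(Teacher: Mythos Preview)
Your proposal is correct and follows essentially the same approach as the paper: both reduce to the canonical interpretation via Lemma~\ref{lemma:isomorphic-models} and then read off that $p^{\Int_G}=\emptyset$ (resp.\ $A^{\Int_G}=\emptyset$) forces the claimed equivalences. The only cosmetic difference is that the paper phrases the case analysis through the duality $\forall p.A \equiv \neg\exists p.\neg A$ in an arbitrary model $\Int$ (using the isomorphism to $\Int_G$), whereas you unfold the $\forall$-semantics directly in $\Int_G$; the underlying argument is identical.
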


\begin{proof}
  Let $\Int$ be a model of the validation knowledge base of graph $G$.
  We prove this lemma using the equivalencies $\forall p.A \equiv \neg\exists p.\neg A$ and $\forall p^-.A \equiv \neg\exists p^-.\neg A$.
  \begin{enumerate}
  \item
    If $p \notin \voc(G)$ then $p^\Int$ is empty, since $\Int$ is isomorphic to the canonical interpretation of $G$ (\Cref{lemma:isomorphic-models}).
    Thus, every element in the domain $\Delta^\Int$ belongs to concepts $\neg\exists p.\neg A$ and $\neg\exists p^-.\neg A$. Hence, $\Int \models C \equiv \top$.
  \item
    If $A \notin \voc(G)$ then $A^\Int$ is empty, since $\Int$ is isomorphic to the canonical interpretation of $G$ (\Cref{lemma:isomorphic-models}).
    Then, $\Int \models \neg A \equiv \top$.
    Hence,
    \begin{enumerate}
    \item
      If $C$ is $\forall p.A$, then $\Int \models C \equiv \neg(\exists p.\top)$.
    \item
      If $C$ is $\forall p^-.A$, then $\Int \models C \equiv \neg(\exists p^-.\top)$.
    \end{enumerate}
  \end{enumerate}
  Since $\Int$ is an arbitrary model of $(\TBox_G, G)$, we conclude this proof.
\end{proof}

\begin{proof}[Proof of \Cref{prop:relevant-shapes}]
  Let $\psi \sqsubseteq \phi$ be a Simple SHACL shape, $q$ be a \sccqname, and $G$ be a Simple RDF graph with $\voc(G) \subseteq \voc(q)$, and $(\TBox_G, G)$ be the validation knowledge base of graph $G$. We have the following disjoint cases:
  \begin{enumerate}
  \item
    Case $\voc(\psi) \not\subseteq \voc(q)$.  Then, by \Cref{prop:concept-empty}, $(\TBox_G, G) \models \psi \equiv \bot$ (since $\psi$ is, per definition, restricted to one of the cases covered in the lemma). Hence, shape $\psi \sqsubseteq \phi$ is not relevant (\Cref{def:irrelevant-shape}).
  \item
    Case $\voc(\psi) \subseteq \voc(q)$ and $\phi$ has the form $\forall p.A$ or $\forall p^-.A$. We have the following subcases:
    \begin{enumerate}
    \item
      Case $p \notin \voc(q)$. Then, by \Cref{prop:forall-empty}, $(\TBox_G, G) \models \phi \equiv \top$. Hence, shape $\psi \sqsubseteq \phi$ is not relevant.
    \item
      Case $p \in \voc(q)$ and $A \not\in \voc(G)$. Then, by \Cref{prop:concept-empty}, $(\TBox_G, G) \models \neg A \equiv \top$. We have the following subcases:
      \begin{enumerate}
      \item
        Case $\phi$ is $\forall p.A$, then $(\TBox_G, G) \models \phi \equiv \neg(\exists p.\top)$.
      \item
        Case $\phi$ is $\forall p^-.A$, then $(\TBox_G, G) \models \phi \equiv \neg(\exists p^-.\top)$.
      \end{enumerate}
    \end{enumerate}
  \item 
    Case $\voc(\psi) \subseteq \voc(q)$ and $\voc(\phi) \not\subseteq \voc(q)$ and $\phi$ has not the form $\forall p.A$ or $\forall p^-.A$.
    Then, $\phi$ has one of the forms covered in \Cref{prop:concept-empty}, and by this lemma, $(\TBox_G, G) \models \phi \equiv \bot$.
    Therefore, the shape is not relevant.
  \item
    Case $\voc(\psi) \subseteq \voc(q)$ and $\voc(\phi) \subseteq \voc(q)$. We have the following subcases:
    \begin{enumerate}
    \item
      Shape $\psi \sqsubseteq \phi$ has the form $A \sqsubseteq A$. Then, the shape is not relevant because it is a tautology.
    \item
      Shape $\psi \sqsubseteq \phi$ has not the form $A \sqsubseteq A$.
    \end{enumerate}
  \end{enumerate}
  Hence, we have shown that in all cases, except for those mentioned in \Cref{prop:relevant-shapes} (2.b.i, 2.b.ii, and 4.b), the shapes are not relevant.
  Hence, for all relevant shapes, the properties in \Cref{prop:relevant-shapes} hold.
\end{proof}

\subsection{\texorpdfstring{Proof for \Cref{prop:reduction-extended-graph}}{Proof for Proposition 3}}
\label{proof:reduction-extended-graph}

\begin{proof}
  We prove first the second case of \Cref{prop:reduction-extended-graph}, namely $\shaclvalid{\graphmed}{\{\varphi\}}$ if and only if $\shaclvalid{\graphext}{\{\dot{\varphi}\}}$.
  The proofs for the other two cases work exactly analogously, since all three subgraphs $\graphin$, $\graphmed$ and $\graphout$ form distinct namespaces.

  \newcommand{\IntExt}{\Int_{\mathrm{ext}}}
  \newcommand{\IntMed}{\Int_{\mathrm{med}}}
  \newcommand{\IntIn}{\Int_{\mathrm{in}}}
  \newcommand{\IntOut}{\Int_{\mathrm{out}}}
  \newcommand{\dgraphmed}{\dot{G}_{\mathrm{med}}}
  \newcommand{\dgraphout}{\ddot{G}_{\mathrm{out}}}
  \newcommand{\dgraphin}{G_{\mathrm{in}}}

  Let $\IntExt$ and $\IntMed$ be the canonical models of $\graphmed$ and $\graphext$, respectively.
  To prove this case, it suffices to show that for every axiom $\varphi$ not including any names with dots (e.g., $\dot{A}$, $\ddot{A}$, $\dot{p}$, or $\ddot{p}$), and every concept expression $C$ occurring in $\varphi$, $C^{\IntExt} = C^{\IntMed}$.
  Indeed, if this is true for every arbitrary concept expression $C$ in $\varphi$, then for every such axiom $\varphi$, $\shaclvalid{\graphmed}{\{\varphi\}}$ if and only $\shaclvalid{\graphext}{\{\dot{\varphi}\}}$.
  By construction of $\graphext$, for every concept assertion $\atomC{a}{A} \in \graphmed$, $\atomC{a}{\dot{A}} \in \graphext$ if and only if $\atomC{a}{\dot{A}} \in \dot{G}_{\mathrm{med}}$, and for every role assertion $\atomP{a}{b}{p} \in \graphmed$, $\atomP{a}{b}{\dot{p}} \in \graphext$ if and only if $\atomP{a}{b}{\dot{p}} \in \dot{G}_{\mathrm{med}}$.
  That is, for every concept name $A \in \ConceptNames$ and role name $p \in \RoleNames$, it holds that $A^{\IntMed} = \dot{A}^{\IntExt}$ and $p^{\IntMed} = \dot{p}^{\IntExt}$.
  Hence, for every concept description $C$, $C^{\IntMed} = \dot{C}^{\IntExt}$.

  We give below, for the sake of completeness, the remaining two cases, case 1. and case 3.

  \begin{itemize}
    \item Case 1. 
    By construction of $\graphext$, for every concept assertion $\atomC{a}{A} \in \graphin$, $\atomC{a}{A} \in \graphext$ if and only if $\atomC{a}{A} \in G_{\mathrm{in}}$, and for every role assertion $\atomP{a}{b}{p} \in \graphin$, $\atomP{a}{b}{p} \in \graphext$ if and only if $\atomP{a}{b}{p} \in G_{\mathrm{in}}$.
    That is, for every concept name $A \in \ConceptNames$ and role name $p \in \RoleNames$, it holds that $A^{\IntIn} = A^{\IntExt}$ and $p^{\IntIn} = p^{\IntExt}$.
    Hence, for every concept description $C$, $C^{\IntIn} = C^{\IntExt}$.

    \item Case 3.
    By construction of $\graphext$, for every concept assertion $\atomC{a}{A} \in \graphout$, $\atomC{a}{\ddot{A}} \in \graphext$ if and only if $\atomC{a}{\ddot{A}} \in \ddot{G}_{\mathrm{out}}$, and for every role assertion $\atomP{a}{b}{p} \in \graphout$, $\atomP{a}{b}{\ddot{p}} \in \graphext$ if and only if $\atomP{a}{b}{\ddot{p}} \in \ddot{G}_{\mathrm{out}}$.
    That is, for every concept name $A \in \ConceptNames$ and role name $p \in \RoleNames$, it holds that $A^{\IntOut} = \ddot{A}^{\IntExt}$ and $p^{\IntOut} = \ddot{p}^{\IntExt}$.
    Hence, for every concept description $C$, $C^{\IntOut} = \ddot{C}^{\IntExt}$.
  \end{itemize}
\end{proof}

\subsection[Proof for Corollary 1]{Proof for \Cref{cor:reduction-algo1}}

\begin{proof}
  The proof for \Cref{cor:reduction-algo1} follows immediately from case 3 of \Cref{prop:reduction-extended-graph}.
  Let $\sccq$ be a \sccqname, $\Sigma$ a set of $\DLogics$ axioms such that $\shaclvalid{\graphext}{\Sigma}$ for every extended graph $\graphext$ of $\sccq$, and $s$ a Simple SHACL shape such that $\Sigma \models \ddot{s}$.
  Then also $\shaclvalid{\graphext}{\{\ddot{s}\}}$, from which by case 3 of \Cref{prop:reduction-extended-graph} follows immediately that $\shaclvalid{\graphout}{s}$.
\end{proof}

\subsection{\texorpdfstring{Proof for \Cref{prop:una}}{Proof for Proposition 4}}
\label{proof:una}

\begin{proof}
  By construction, each axiom in $\UNA(q)$ is also an axiom in the validation knowledge base of the graph $\graphext$ (see \Cref{def:validation-rdf-semantics}).
\end{proof}

\subsection{\texorpdfstring{Proof for \Cref{prop:cwa}}{Proof for Proposition 5}}
\label{proof:cwa}

We first prove \Cref{lemma:vcg}.

\begin{proof}
  We prove \Cref{lemma:vcg} by contradiction.
  Let $\sccq = \sccqformal$ be a query such that $\vcg(\sccqpattern)$ is acyclic.
  Let $G$ be a graph, and let $x$ be a variable corring in $\sccqpattern$. 
  Let $C$ be a concept defined as
  \[
      C \equiv \textstyle 
          {\bigsqcap_{\atomC{x}{A} \in P}} A 
          \sqcap {\bigsqcap_{\atomP{x}{u}{p} \in P}} \exists p.\Concept{u}
          \sqcap {\bigsqcap_{\atomP{u}{x}{p} \in P}} \exists p^-.\Concept{u}.
  \]
  and assume that there is an individual name $c$ in $G$ such that $\atomC{c}{C}$
  is valid in $\graphext$, but $\atomC{c}{\vconcept{x}}$ is not valid in $\graphext$.

  Without loss of generality, assume that $\sccqpattern$ includes a single concept assertion including the variable $x$, namely $\atomC{x}{A}$,
  and no role assertions $\atomP{x}{d}{r}$ where $d$ is an individual name.
  Indeed, if there are serveral atoms $\atomC{x}{A_1}, \ldots, \atomC{x}{A_n}$ and $\atomP{x}{d_1}{r_1}, \ldots, \atomP{x}{d_m}{r_m}$ we can define 
  $A \equiv A_1 \sqcap \ldots \sqcap A_m \sqcap \exists r_1.\{d_1\} \sqcap \ldots \sqcap \exists r_m.\{d_m\}$.
  Without loss of generality, assume that $\graphext$ includes the graph 
  \[\{ 
    \atomC{a}{V_y}, \atomP{a}{c}{r},
    \atomC{c}{A}, \atomP{c}{b}{s},
    \atomC{b}{V_z}
  \}\]
  Let $\Omega$ be the set of all mappings $\mu$ such that $\mu(P) \subseteq G$.

  Then, by definition there exist the mappings $\mu_1, \mu_2 \in \Omega$ such that $\mu_1(y) = a$ and $\mu_2(z) = b$,
  but there not exists the mapping $\mu\in\Omega$ such that $\mu(x) = c$.
  Then, $\mu_1(x) \neq c$ and $\mu_2(x) \neq c$.

  Let $P_y$ be the part of pattern $\sccqpattern$ which \emph{connects} with variables $y$ and $x$, but not $z$.
  Let $P_z$ be the part of pattern $\sccqpattern$ which \emph{connects} with variables $z$ and $x$, but not $y$.
  Let 
  \begin{align*}
    \mu_1^y &= \mu_1\big\vert_{\var(P_y)\setminus\{x\}}, & \mu_1^z &= \mu_1\big\vert_{\var(P_z)\setminus\{x\}},\\
    \mu_2^y &= \mu_2\big\vert_{\var(P_y)\setminus\{x\}}, & \mu_2^y &= \mu_2\big\vert_{\var(P_z)\setminus\{x\}}.
  \end{align*}
  Then, $\mu_1 = \mu_1^y \cup \{x \mapsto \mu_1(x)\} \cup \mu_1^z$ and $\mu_2 = \mu_2^y \cup \{x \mapsto \mu_2(x)\} \cup \mu_2^z$.
  Since $\mu_1^y$ and $\mu_2^z$ share no variables, $\mu_3 = \mu_1^y \cup \{x \mapsto c\} \cup \mu_2^z$ is a mapping.

  By the definition of the semantics of \sccqname, $\mu_3 \in \Omega$.
  Then $\atomC{c}{\vconcept{x}}$ is valid in $\graphext$.
  This contradicts the initial assumptions, from which we conclude $\vconcept{x} \sqsupseteq C$.
\end{proof}

\noindent
We now continue with the proof for \Cref{prop:cwa}.

\begin{proof}
  We prove \Cref{prop:cwa} by showing the validity of this proposition for the cases 1 through 5 in Definition~\ref{def:cwa}. 
  We divide the proof in two groups: First, for cases 1, 2, and 3, and then for cases 4 and 5.

  \paragraph{Cases 1, 2, and 3.}
  For cases 1 through 3, we divide the proof in two parts each, one for either inclusion (i.e., $\sqsubseteq$ and $\sqsupseteq$).
  To show an inclusion $A \sqsubseteq B$ in $\graphext$, we will assume that there exists at least one valuation $\mu$ such that $\mu(P) \subseteq \graphin$, and then prove that inclusion $\{a\} \sqsubseteq A$ implies inclusion $\{a\} \sqsubseteq B$ for every individual name occurring in $\graphext$.

  \begin{enumerate}

    \item[$1_{\sqsubseteq}$] $\dot{A} \sqsubseteq A \sqcap \bigsqcup_{\atomC{u}{A}\in P} \Concept{u}$.
    Let $a$ be an arbitrary individual name such that $\{a\} \sqsubseteq \dot{A}$ is valid in $\graphext$.
    Then $\atomC{a}{\dot{A}} \in \graphext$, so 
    $\atomC{a}{A} \in \graphmed$. 
    Then $\atomC{a}{A} \in \graphin$ and there is an atom $\atomC{v}{A} \in P$ where $v$ is either individual name $a$ or a variable $x$.
    If $v$ is $a$, then $\{a\} \sqsubseteq \Concept{v}$ is trivially valid in $\graphext$. 
    Otherwise $v$ is $x$ and by construction, $\atomC{a}{\vconcept{x}} \in \graphext$.
    Then, $\{a\} \sqsubseteq \Concept{v}$ is valid in $\graphext$, so $\{a\} \sqsubseteq \bigsqcup_{\atomC{u}{A}\in P} \Concept{u}$ is also valid in $\graphext$.
    Similarly, since $\graphin \subseteq \graphext$, $\atomC{a}{A} \in \graphext$, so $\{a\} \sqsubseteq A$ is valid in $\graphext$. 
    Therefore, $\{a\} \sqsubseteq A \sqcap \bigsqcup_{\atomC{u}{A}\in P} \Concept{u}$ is valid in $\graphext$.

    \item[$1_{\sqsupseteq}$] $\dot{A} \sqsupseteq A \sqcap \bigsqcup_{\atomC{u}{A}\in P} \Concept{u}$.
    Let $a$ be an arbitrary individual name such that $\{a\} \sqsubseteq A \sqcap \bigsqcup_{\atomC{u}{A} \in P} \Concept{u}$ is valid in $\graphext$. 
    By construction, for every individual name $b$, if $\atomC{b}{A} \in \graphext$ then $\atomC{b}{A} \in \graphin$. 
    Since $\atomC{a}{A} \in \graphext$, $\atomC{a}{A} \in \graphin$ holds. 
    By definition, $\{a\} \sqsubseteq \Concept{u}$ is valid in $\graphext$ for at least one atom $\atomC{u}{A} \in P$. 
    If $u$ is an individual name, then $u$ is $a$, and $\atomC{a}{A} \in P$. 
    If $\graphmed$ is not empty, then $\atomC{a}{A} \in \graphmed$. 
    Otherwise, if $u$ is a variable $x$ then $\Concept{u}$ is the variable concept $\vconcept{x}$, and $\atomC{a}{\vconcept{x}} \in \graphext$. 
    By definition, $a$ is an instance of variable $x$, and thus $\atomC{a}{A} \in \graphmed$. 
    Hence, $\atomC{a}{A} \in \graphmed$ in all possible cases (when $u$ is an individual name or when $u$ is a variable). 
    By construction, $\atomC{a}{\dot{A}} \in \graphext$ and therefore $\{a\} \sqsubseteq \dot{A}$.

    \item[$2_{\sqsubseteq}$] $\ddot{A} \sqsubseteq \bigsqcup_{\atomC{u}{A}\in H} \Concept{u}$.
    Let $a$ be an arbitrary individual name such that $\{a\} \sqsubseteq \ddot{A}$ is valid in $\graphext$. 
    By construction, $\atomC{a}{A} \in \graphout$ so there is an atom $\atomC{v}{A} \in H$ such that $\atomC{a}{A}$ is an instance of pattern $\atomC{v}{A}$, and $v$ is either concept name $a$ or a variable $x$. 
    If $v$ is $a$ then $\{a\} \sqsubseteq \Concept{v}$ is trivially valid in $\graphext$. 
    Otherwise $v$ is $x$ and since $a$ is an instance of $x$, it holds that $\atomC{a}{\vconcept{x}} \in \graphext$, so $\{a\} \in \Concept{v}$ is valid in $\graphext$. 
    Therefore, $\{a\} \sqsubseteq \bigsqcup_{\atomC{u}{A}\in H} \Concept{u}$ is valid in $\graphext$.

    \item[$2_{\sqsupseteq}$] $\ddot{A} \sqsupseteq \bigsqcup_{\atomC{u}{A}\in H} \Concept{u}$.
    Let $a$ be an arbitrary individual name such that $\{a\} \sqsubseteq \bigsqcup_{\atomC{u}{A} \in H} \Concept{u}$ is valid in $\graphext$.
    Then there is at least one atom $\atomC{v}{A} \in H$ such that $\{a\} \sqsubseteq \Concept{v}$ is valid in $\graphext$.
    If $v$ is $a$, then $\atomC{a}{A} \in \graphout$, and thus $\atomC{a}{\ddot{A}} \in \graphext$.
    Otherwise $v$ is a variable $x$, and $\atomC{a}{\vconcept{x}} \in \graphext$.
    By the definition of variable concepts, $\atomC{a}{A} \in \graphout$, so $\atomC{a}{\ddot{A}} \in \graphext$.
    Therefore, $\{a\} \sqsubseteq \ddot{A}$ is valid in $\graphout$.

    \item[$3_{\sqsubseteq}$] \emph{Variable ($\sqsubseteq$).} 
      Let $a$ be an arbitrary individual name such that $\{a\} \sqsubseteq \vconcept{x}$ is valid in $\graphext$.
      We show separately for each operand $k$ in the intersection, that $\{a\} \sqsubseteq k$, assuming that the respective component is defined, below.
      \begin{enumerate}
        \item For $k = {\bigsqcap_{\atomC{x}{A} \in \sccqpattern}} A$:  
          If $\{a\} \sqsubseteq \vconcept{x}$, then by definition $a$ is an instance of variable $x$ in $\sccqpattern$, i.e., $a \in \mu(x)$.
          Then for each concept name $A$ occurring in an atomic pattern of the form $\atomC{x}{A} \in \sccqpattern$ there must be $\atomC{a}{A} \in \graphin$ 
          (since otherwise $a \not\in \mu(x)$), so also $\atomC{a}{A} \in \graphext$ for each such $A$.
          Therefore, $\{a\} \sqsubseteq k$.
        \item For $k = {\bigsqcap_{\atomP{x}{u}{p} \in\sccqpattern}} \exists p.\Concept{u}$:
          If $\{a\} \sqsubseteq \vconcept{x}$, then by definition $a$ is an instance of variable $x$ in $\sccqpattern$, i.e., $a \in \mu(x)$.
          Then for each property name $p$ occurring in an atomic pattern of the form $\atomP{x}{u}{p} \in \sccqpattern$, one of two cases applies:
          If $u$ is an individual name, then there must be $\atomP{a}{u}{p} \in \graphin$, so also $\atomP{a}{u}{p} \in \graphext$ for such $p$.
          If $u$ is a variable name, then there must be $\atomP{a}{b}{p} \in \graphin$, so also $\atomP{a}{b}{p} \in \graphext$, 
          and also $b \in \mu(u)$ (since otherwise $a \not\in \mu(x)$).
          Therefore, $\{a\} \sqsubseteq k$.
        \item For $k = {\bigsqcap_{\atomP{u}{x}{p} \in\sccqpattern}} \exists p^-.\Concept{u}$: Analogous to the previous case.
      \end{enumerate}
      If at least one component $k$ is defined, then it follows that 
      \[ \{a\} \sqsubseteq {\bigsqcap_{x:A \in\sccqpattern}} A \sqcap {\bigsqcap_{(x,u):p \in\sccqpattern}} \exists p.\Concept{u} \sqcap {\bigsqcap_{(u,x):p \in\sccqpattern}} \exists p^-.\Concept{u}.\]
      We know, that at least one component $k$ must be defined, since otherwise concept $\vconcept{x}$ would not be defined,
      as there must exists either $\atomC{x}{A} \in\sccqpattern$ for some concept name $A$, 
      or $\atomP{x}{u}{p} \in\sccqpattern$ (or $\atomP{x}{u}{p} \in\sccqpattern$ respectively) for some property $p$, if $x \in \var(P)$.
      Then, at least one of the components $k$ must be defined as well, and we prove this case.

    \item[$3_{\sqsupseteq}$] \emph{Variable ($\sqsupseteq$).} 
      The inverse case follows directly from \Cref{lemma:vcg}.

  \end{enumerate}

  This concludes the proof of cases 1., 2., and 3.
  We next consider cases 4. and 5.

  \paragraph{Cases 4 and 5.}
  Since the proofs of these two cases are similar, we exemplify them proving the equivalency:
  \[
    \exists \dot{p}.\Concept{u} \equiv \textstyle \bigsqcup_{\atomP{v}{u}{p} \in P}\Concept{v}
  \]
  Let $\Int$ be the canonical model of $\graphext$.
  By definition of the validation knowledge base of a graph, $a^\Int \in (\exists \dot{p}.\Concept{u})^\Int$ if and only if there exists an individual name $b$ such that $\atomP{a}{b}{\dot{p}} \in \graphext$ and $(a^\Int,b^\Int) \in p^I$. 
  By construction, $\atomP{a}{b}{\dot{p}} \in \graphext$ if and only if there exists an atom $\atomP{v}{u}{p} \in P$ where $v$ is the individual name $a$ or a variable $x$, and $u$ is the individual name $b$ or a variable $y$ (and thus $a^I \in \Concept{v}^\Int$). 
  Thus, $a^\Int \in (\exists \dot{p}.\Concept{u})^\Int$ if and only if $a^\Int \in \bigcup_{\atomP{v}{u}{p} \in P} \Concept{v}^\Int$. 
  Hence, $\exists \dot{p}.\Concept{u} \equiv \textstyle \bigsqcup_{\atomP{v}{u}{p} \in P}\Concept{v}$.

  Similarly, we exemplify the proof of the remaining axioms of the following form, using one of these axioms:
  \[
    \exists \dot{p}.\top \equiv {\bigsqcup_{\atomP{u}{v}{p} \in P}} \Concept{u} \sqcap \exists \dot{p}.\Concept{v}
  \]
  By definition of the validation knowledge base of a graph, $a^\Int \in (\exists \dot{p} . \top)^\Int$ if and only if there exists an individual name $b$ such that $\atomP{a}{b}{\dot{p}} \in \graphext$ and $(a^\Int, b^\Int) \in p^\Int$.
  By construction, $\atomP{a}{b}{\dot{p}} \in \graphext$ if and only if there exists an atom $\atomP{u}{v}{p} \in P$ where $u$ is the individual name $a$ or a variable $x$ and $v$ is the individual name $b$ or a variable $y$.
  Then, $a^\Int \in \Concept{u}^\Int$ and $a^\Int \in (\exists \dot{p}.\Concept{v})^\Int$, so $a^\Int \in (\exists \dot{p}.\top)^\Int$ if and only if $a^\Int \in {\bigcup_{\atomP{u}{v}{p} \in P}} \Concept{u}^\Int \cap (\exists \dot{p}.\Concept{v})^\Int$.

  \paragraph{Conclusion}

  Finally, given the proofs for the individual cases listed above, we prove this proposition.
\end{proof}

\subsection{\texorpdfstring{Proof for \Cref{prop:map}}{Proof for Proposition 6}}
\label{proof:map}

We begin with the following utility definition.

\begin{definition}
  \label{def:ins}
  For every individual name or variable $u$ we define the set of individual names $\ins(u)$ as follows:
  \begin{align*}
    \ins(u) &=
      \left\{
      \begin{array}{ll}
        \{a\} & \text{ if $u$ is ind. name $a$},\\
        \{a \mid a \text{ instance of }x\} & \text{ if $u$ is a variable $x$}.
      \end{array}
      \right.
  \end{align*}
\end{definition}

In order to prove Proposition~\ref{prop:map}, we now define the following lemma, stating a relation between $\ins(u)$ (Definition~\ref{def:ins}) and $\Concept{u}$ (Definition~\ref{def:ct}).

\begin{lemma}\label{lemma:ins-concept}
  Let $\sccqformal$ be a \sccqname, $\graphext$ an extended graph for the query, and $u$ and $v$ two individual names or variables occurring in the query.
  Then, $\ins(v) \subseteq \ins(u)$ if and only if the inclusion $\Concept{v} \sqsubseteq \Concept{u}$ is valid for graph $\graphext$.
\end{lemma}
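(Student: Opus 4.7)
The plan is to reduce the lemma to a direct comparison of two sets of individual names by showing that, in the canonical interpretation of $\graphext$, the extension of the concept $\Concept{u}$ coincides with the set $\ins(u)$ for every individual name or variable $u$ occurring in the query. Once this identification is established, the stated equivalence is immediate from the definition of validity of a concept inclusion in a graph.

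First I would consider the two cases that define $\Concept{u}$ and $\ins(u)$ separately. If $u$ is an individual name $a$, then by \Cref{def:ct} we have $\Concept{u} = \{a\}$, whose extension under the canonical interpretation of any graph in which $a$ occurs is exactly $\{a\}$, which is also $\ins(u)$ by \Cref{def:ins}. If $u$ is a variable $x$, then $\Concept{u} = \vconcept{x}$, and by the construction of $\graphvar$ in \Cref{def:extended-graph} the assertion $\atomC{a}{\vconcept{x}}$ belongs to $\graphext$ if and only if there is a valuation $\mu$ with $\mu(P)\subseteq\graphin$ and $\mu(x)=a$, i.e. $a$ is an instance of $x$. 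Since $\graphvar\subseteq\graphext$ and no other assertion in $\graphext$ affects the concept name $\vconcept{x}$, the canonical extension of $\vconcept{x}$ is exactly $\ins(x) = \ins(u)$.

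Second, I would invoke \Cref{prop:equivalent-formalisms} to conclude that the inclusion $\Concept{v} \sqsubseteq \Concept{u}$ is valid for $\graphext$ if and only if $\Concept{v}^{\Int_{\graphext}} \subseteq \Concept{u}^{\Int_{\graphext}}$ in the canonical interpretation $\Int_{\graphext}$. By the identification established in the previous paragraph, this is equivalent to $\ins(v)\subseteq\ins(u)$, proving both directions of the lemma.

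The only delicate point is making sure that the identification $\Concept{u}^{\Int_{\graphext}} = \ins(u)$ is not perturbed by the other components of $\graphext$; that is, that neither $\graphin$, $\dot{G}_{\mathrm{med}}$, nor $\ddot{G}_{\mathrm{out}}$ contain assertions using the symbol $\vconcept{x}$ or the nominal $\{a\}$ beyond those already accounted for. This is immediate from the syntactic separation of namespaces used in \Cref{def:extended-graph} (fresh dotted symbols for intermediate and output components, and $V_x$ reserved for variable concepts), so no further technical work is required beyond unfolding the definitions.
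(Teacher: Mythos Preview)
Your proposal is correct and follows essentially the same approach as the paper: both arguments reduce the claim to the observation that the extension of $\Concept{u}$ in (a model of) the validation knowledge base of $\graphext$ coincides with $\ins(u)$, after which the equivalence is immediate. The paper's version phrases this via the bijection $\cdot^\Int$ between $\ins(u)$ and $\Concept{u}^\Int$ for an arbitrary model $\Int$ of the validation knowledge base, whereas you work directly with the canonical interpretation and spell out the case analysis and the namespace-separation argument more explicitly; these are equivalent in light of \Cref{lemma:isomorphic-models} and \Cref{prop:equivalent-formalisms}.
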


\begin{proof}
  Let $I$ be an interpretation of the validation knowledge base of graph $\graphext$.
  By definition, for every individual $b \in \top^I$ there exists a unique individual name $a$ in the graph such that $a^I=b$.
  Let $u$ and $v$ be two individual or variable names occurring in pattern $P$. 
  Then, $\cdot^I$ defines a bijection between sets $\ins(u)$ and $\Concept{u}^I$, and a bijection between sets $\ins(v)$ and $\Concept{v}^I$. 
  Thus, $\ins(v) \subseteq \ins(u)$ if and only if $\Concept{v}^I \subseteq \Concept{u}^I$. 
  Hence, $\ins(v) \subseteq \ins(u)$ if and only if $\Concept{v} \sqsubseteq \Concept{u}$.
\end{proof}

\begin{proof}[Proof for \Cref{prop:map}]
  Let $P_1$ and $P_2$ be components of graph pattern $P$, the function $h : \var(P_1) \to \var(P_2)$ be a component map, and $x$ and $y$ be two variables in $P_1$ and $P_2$, respectively, such that $h(x) = y$.
  According to Lemma~\ref{lemma:ins-concept}, to prove that $\vconcept{y} \sqsubseteq \vconcept{x}$ is valid in graph $\graphext$ it suffices to prove $a \in \ins(y)$ implies $a \in \ins(x)$ for every individual name $a$.

  Let $a$ be an individual name in $\ins(y)$. 
  Then, there exists a valuation $\mu$ such that $\mu(P) \subseteq \graphin$. 
  Let $h': \var(P) \to \var(P)$ be the function that extends $h$ for the variables in $P$ that are not in the domain of $h$ as follows:
  \[
    h'(z) = \left\{
      \begin{array}{ll}
        z & \text{if } z \notin \dom(h),\\
        h(z) & \text{if } z \in \dom(h).
      \end{array}
    \right.
  \]
  By construction, $h'(P) \subseteq P$. 
  Applying $\mu$ to both sides of the inclusion we get $\mu(h'(P)) \subseteq \mu(P)$. 
  By transitivity, $\mu(h'(P)) \subseteq \graphin$. 
  That is, $\mu'(P) \subseteq \graphin$ where $\mu'$ is the composite valuation $h'\mu$. 
  Since $\mu'(x) = a$, we conclude that $a \in \ins(x)$.
\end{proof}

\subsection{\texorpdfstring{Proof for \Cref{prop:mapsin}}{Proof for Proposition 7}}
\label{proof:mapsin}

\begin{proof}
    To prove Proposition~\ref{prop:mapsin}, it suffices to show that the extension approach is sound, i.e., that both the extended and non-extended components are equivalent with respect to the bindings for all actual query variables, since then the proof for Proposition~\ref{prop:map} applies.

    Consider the variable $x$ as a target of shape $s = \targetquery \sqsubseteq \phi $.
    Then, the following extensions $\operatorname{extp}(x, \phi)$ are permitted, depending on $\phi$:

    \begin{enumerate}
        \item $\phi = A$ and $\{\atomC{x}{A}\}$. 
        For any input graph $\graphin$ it holds that $\shaclvalid{\graphin}{\{\targetquery \sqsubseteq A\}}$.
        Then $\forall a \in \mu(x)\ :\ \atomC{a}{A} \in \graphin$, since $x$ is a target of $s$.
        Therefore, pattern $\atomC{x}{A}$ is satisfied for all $\graphin$.
        \item $\phi = \exists p.A$ and $\{\atomP{x}{x_0}{p}, \atomC{x_0}{A}\}$. 
        For every input graph $\graphin$, $\shaclvalid{\graphin}{\{\targetquery \sqsubseteq \exists p.A\}}$.
        Then for all $a \in \mu(x)$, $\atomP{a}{b}{p},\atomC{b}{A} \in \graphin$, since $x$ is a target of $s$.
        Therefore, patterns $\atomP{x}{x_0}{p}$ and $\atomC{x_0}{A}$ are satisfied for all $\graphin$.
        \item $\phi = \exists p^-.A$ and $\{\atomP{x_0}{x}{p}, \atomC{x_0}{A}\}$. This case is similar to the previous case.
        \item $\phi = \forall p.A$ and $\{\atomC{y}{A} \mid \atomP{x}{y}{p} \in P_{\text{ext}} \}$. 
        For every input graph $\graphin$, $\shaclvalid{\graphin}{\{\targetquery \sqsubseteq \forall p.A\}}$.
        Then, for all $a \in \mu(x)$, $\atomP{a}{b}{p} \in \graphin$ implies $\atomC{b}{A} \in \graphin$, since $x$ is a target of $s$.
        Therefore, for any pattern $\atomP{x}{y}{p} \in P_{\text{ext}}$, $\atomC{y}{A}$ is satisfied for all $\graphin$.
        \item $\phi = \forall p^-.A$ and $\{\atomC{y}{A} \mid \atomP{y}{x}{p} \in P_{\text{ext}} \}$.  This case is similar to the previous case.
    \end{enumerate}
\end{proof}

\subsection{\texorpdfstring{Proof for \Cref{prop:properties}}{Proof for Proposition 8}}
\label{proof:properties}

We separately prove the two components (1) and (2) of \Cref{def:sppa} involved in \Cref{prop:properties}.
To this end, we write $\shaclvalid{\graphext}{\PropertyAxioms_1(\sccq)}$ and $\shaclvalid{\graphext}{\PropertyAxioms_2(\sccq)}$, where $\sccq = \sccqformal$ is a \sccqname.
  
\begin{proof}[Proof for $\shaclvalid{\graphext}{\PropertyAxioms_1(\sccq)}$.]
  For an arbitrary property name $p \in \voc(\sccqpattern)$, the axiom $\dot{p} \sqsubseteq p$ is always true since $\graphmed \subseteq \graphin$, by definition.
  When the pattern only contains $\atomP{x}{y}{p}$ such that $x$ and $y$ do not occur in any other atomic patterns in $\sccqpattern$ (i.e., $x$ and $y$ are otherwise unrestricted), then for any $\atomP{a}{b}{p} \in \graphin$, $\atomP{a}{b}{\dot{p}} \in \graphmed$.
  Therefore $p \sqsubseteq \dot{p}$.
\end{proof}

\begin{proof}[Proof for $\shaclvalid{\graphext}{\PropertyAxioms_2(\sccq)}$]
  Let $p \in \voc(\sccqpattern)$ and $r \in \voc(\sccqtemplate)$, such that $\sccqpattern$ contains the atomic pattern $\atomP{x}{y}{p}$ and $\sccqtemplate$ contains $\atomP{x}{y}{r}$, and neither $\sccqtemplate$ nor $\sccqpattern$ contains any other atomic patterns involving $x$ or $y$, and $p$ or $r$, respectively.
  Then, for any $\atomP{a}{b}{\dot{p}} \in \graphmed$ we construct $\atomP{a}{b}{\ddot{r}} \in \graphout$, therefore, $\dot{p} \sqsubseteq \ddot{r}$.
  In addition, since $r$ does not occur again in $\sccqtemplate$, $\dot{p} \equiv \ddot{r}$, i.e. also $\ddot{r} \sqsubseteq \dot{p}$.
\end{proof}

\subsection{Theorem 1 and Proof}
\label{proof:t1}

\begin{theorem}
  \label{theorem:np}
  Problem \problemSE is NP-hard.
\end{theorem}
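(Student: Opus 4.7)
The plan is to prove NP-hardness by polynomial-time reduction from the classical NP-complete problem of conjunctive query (CQ) containment: given two CQs $Q_1$ and $Q_2$ over a common signature with a shared tuple of free variables, decide whether $Q_1(D) \subseteq Q_2(D)$ holds for every database $D$. This is a natural source because it shares the universal quantification over input graphs that appears in \problemSE, and by the Chandra-Merlin theorem, it is equivalent to the existence of a homomorphism between two query patterns, which matches closely the combinatorial core that the component-map machinery of \Cref{sec:algorithm} must navigate.

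Given such an instance $(Q_1, Q_2)$, I would build a \sccqname $q = \sccqformal$ by letting $\sccqpattern$ be the disjoint union of the pattern of $Q_1$ with a variable-renamed copy of the pattern of $Q_2$, and letting $\sccqtemplate$ contain atoms $\atomC{x}{A}$ for each free variable $x$ of $Q_1$ together with $\atomC{x'}{B}$ for each corresponding renamed free variable $x'$ of $Q_2$, where $A$ and $B$ are fresh concept names. I would set $\shapesin = \emptyset$ so that every graph is a valid input, and set the candidate shape to $s = A \sqsubseteq B$. The intuition is that on any $\graphin$ in which both subpatterns match, the output contains $\atomC{a}{A}$ for each tuple $a \in Q_1(\graphin)$ and $\atomC{b}{B}$ for each tuple $b \in Q_2(\graphin)$, so $s$ holds on the output exactly when $Q_1(\graphin) \subseteq Q_2(\graphin)$, and requiring this for all $\graphin$ precisely captures $Q_1 \subseteq Q_2$.

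The main obstacle is handling the Cartesian-product semantics of the combined pattern: when only one of the two components has a match in $\graphin$, the entire pattern fails, no template instance is produced at all, and $s$ is vacuously satisfied even on graphs that would witness $Q_1 \not\subseteq Q_2$. I would resolve this by augmenting the pattern with a small gadget, for instance fresh role or concept atoms involving a dedicated constant that always gives a trivial match to the $Q_2$-side whenever the $Q_1$-side matches (and vice versa), so that the shape $s$ becomes sensitive to containment on every relevant $\graphin$. Since the construction has size polynomial in $|Q_1| + |Q_2|$ and $s$ is relevant for $q$ by construction, the reduction establishes that $\problemSE(\shapesin, q, s) = \textsc{yes}$ if and only if $Q_1 \subseteq Q_2$, yielding NP-hardness of \problemSE.
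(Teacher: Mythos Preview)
Your overall strategy (reduce from CQ containment with $\shapesin = \emptyset$) is the same as the paper's, which reduces from the simple graph entailment problem of Gutierrez et al.\ --- the identical homomorphism core --- using a template that assigns a distinct concept $A_u$ to each variable and checks shapes $A_u \sqsubseteq A_x$. However, two steps in your version are genuinely incomplete.

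First, lumping all free variables of $Q_1$ into one concept $A$ (and those of $Q_2$ into $B$) destroys the tuple structure: on any output graph, the extension of $A$ is the \emph{union} of all coordinate projections of $Q_1$'s answer relation, so $A \sqsubseteq B$ can hold even when $Q_1 \not\subseteq Q_2$ (take $Q_1(x,y)$ with body $R(x,y)$ versus $Q_2(x,y)$ with body $R(y,x)$). This is repairable by restricting to unary CQs, whose containment is still NP-hard, but that restriction must be stated.

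Second, the gadget you sketch for the ``only one side matches'' obstacle does not work as described. With $\shapesin = \emptyset$ every graph is a valid input, so adding atoms involving a dedicated constant to the pattern cannot \emph{force} a match: the input graph is free to omit that constant, and then the gadget component itself fails to match. The correct fix is not a syntactic gadget in $q$ but an argument about the witness graph: if $Q_1 \not\subseteq Q_2$ with witness $(D,a)$, take as $\graphin$ the disjoint union of $D$ with a frozen copy of $Q_2$'s body on fresh individuals; then both components match, $a$ lands in $A$, and (assuming $Q_2$ is connected, which one may assume without loss of NP-hardness) $a$ does not land in $B$. Without this argument the backward direction of your reduction is unproved.
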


\begin{proof}
  We next show that the simple graph entailment problem described by Gutierrez et al.~\cite{DBLP:journals/jcss/GutierrezHMP11} (called SGE in what follows) 
  can be reduced to problem \problemSE.
  Problem SGE is equivalent to deciding if for a pattern $P$ consisting of two components $P_1$ and $P_2$ there is a component map $h$ from $P_1$ to $P_2$.
  Let $\shapesin$ be an empty set, and $q$ be the \sccqname $H \gets P$ where $H$ contains an atom $\atomC{u}{A_u}$ for each variable or individual name in $P$.
  By \Cref{prop:map}, there exists such a mapping $h$ if and only if $\problemSE(\shapesin, q, A_u \sqsubseteq A_x)=\textsc{yes}$,
  for each pair $(x,u)$ where $h(x) = u$ and $x$ is a variable in $P_1$ and $u$ is a variable in $P_2$.
  Since the number of pairs $(x,u)$ is quadratic on the size of $P$, we have shown a reduction from problem SGE to problem \problemSE.
  Since SGE is NP-hard, problem \problemSE is also NP-hard.
\end{proof}

\subsection{Proposition 9 and Proof}
\label{proof:mn}

\begin{proposition}\label{prop:nm}
  If $\voc(q)$ contains $n$ concept names, and $m$ role names, then we need to iterate over $n + 2m$ target queries, and $n + 4nm + 2m$ shape constraints, and return $(n + 2m)(n + 4nm + 2m) - n$ many relevant shapes.
\end{proposition}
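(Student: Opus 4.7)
The plan is to proceed by enumeration: first count the target queries, then the shape constraints, then take the product of these two counts and subtract the shapes in the enumeration that are not relevant in the sense of \Cref{def:irrelevant-shape}. Throughout, the syntactic definition of Simple SHACL (\Cref{def:siple-shacl-syntax}) and the characterization of relevant shapes in \Cref{prop:relevant-shapes} are the only tools I need.

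First, I would count target queries. By \Cref{def:siple-shacl-syntax}, a target query has the form $A$ or $\exists \rho.\top$ with $\rho \in \{p, p^-\}$. Restricting to $\voc(q)$, this yields $n$ targets of the form $A$, $m$ targets of the form $\exists p.\top$, and $m$ targets of the form $\exists p^-.\top$, giving $n + 2m$ targets in total.

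Next I would count shape constraints. A constraint has the form $A$, $\exists \rho.A$, or $\forall \rho.A$. For constraints whose vocabulary lies entirely in $\voc(q)$, the count is $n + 2mn + 2mn$ (one block for concept names, one for the two directions of $\exists \rho.A$, one for the two directions of $\forall \rho.A$). On top of this, \Cref{prop:relevant-shapes} condition (ii) forces us to also retain one representative shape constraint $\forall \rho.A$ per role direction with a single fresh concept name $A \notin \voc(q)$, contributing $2m$ further constraints. Summing yields $n + 4mn + 2m$ shape constraints.

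Combining, the naive product of targets and constraints is $(n + 2m)(n + 4nm + 2m)$. To finish I would argue that within the enumeration, the only non-relevant shapes are the $n$ tautologies $A \sqsubseteq A$ for $A \in \voc(q)$, each of which has no violating graph $G_-$ and hence fails \Cref{def:irrelevant-shape}. For every remaining combination $\psi \sqsubseteq \phi$ in the enumeration, I would exhibit a satisfying graph $G_+$ on $\voc(q)$ with non-vacuous target together with a violating graph $G_-$ on $\voc(q)$; this is routine for each of the four constraint shapes (concept, existential, universal-in-vocabulary, universal-with-fresh-$A$) and each of the two target shapes, but is the main obstacle, since one has to be careful for the $\forall \rho.A$ case with fresh $A$ (the violating graph needs an appropriate $\rho$-successor). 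Subtracting the $n$ tautologies from the product gives exactly the claimed count $(n + 2m)(n + 4nm + 2m) - n$.
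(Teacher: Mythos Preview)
Your proposal is correct and mirrors the paper's own proof: count the $n+2m$ targets, count the $n+4nm+2m$ constraints (including the $2m$ proxy universal constraints from \Cref{prop:relevant-shapes}(ii)), take the product, and subtract the $n$ tautologies $A \sqsubseteq A$. The paper's proof stops at identifying the tautologies as the subtrahend, whereas you additionally plan to verify relevance of every remaining combination via explicit $G_+$ and $G_-$; that extra check is sound but more than the paper actually does.
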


\begin{proof}
  We have $n$ possible target queries with a concept name ($A$ for each $A\in\voc(q)$), and $2m$ with a role name ($\exists p.\top$ and $\exists p^-.\top$ for each $p\in\voc(q)$).
  Similarly, we have $n$ possible shape constraints including only a concept name, and $4nm$ possible shape constraints including a concept name and a role name ($\exists p.A$, $\exists p^-.A$, $\forall p.A$, and $\forall p^-.A$ for each $A\in\voc(q)$ and $p\in\voc(q)$).
  We also have $2m$ representatives for families of the form $\forall p.B$ and $\forall p^-.B$ for each $p\in\voc(q)$ and for some proxy concept name $B\not\in\voc(q)$.
  The subtrahend in the number of relevant shapes indicates the number of tautologies of the form $A \sqsubseteq A$ for all $A\in\voc(q)$.
\end{proof}

  \section{Extending the Method}
\label{a:extension}

In this section, we show how our approach can be extended to arbitrary $\DLogics$ axioms, and thus a much more extensive subset of SHACL.
To this end, we first define $\DLogics$ SHACL shapes as follows.

\begin{definition}[$\DLogics$ SHACL Syntax]%
  \label{def:glogics-shacl-syntax}
  A $\DLogics$ \emph{SHACL shape} is an $\DLogics$ axiom $\targetquery \sqsubseteq \phi$ such that the concept expressions $\phi$ is an arbitrary $\DLogics$ concept description, and $\targetquery$ is defined by:
  \begin{equation*}
    \begin{aligned}[t]
    \targetquery & \Coloneqq A \mid \exists p.\top \mid \exists p^-.\top 
    \end{aligned}
  \end{equation*}

  A $\DLogics$ \emph{SHACL schema} $\allshapes$ is an $\DLogics$ TBox that consists of a finite set of $\DLogics$ SHACL shapes.
\end{definition}

\begin{definition}[$\DLogics$ SHACL Semantics]
  A graph $G$ is \emph{valid} for a set $S$ of $\DLogics$ SHACL shapes, denoted $\shaclvalid{G}{S}$, if and only if $G$ is proof-valid according to $S$.
\end{definition}

We omit explicitly redefining the remainder of the main paper in terms of $\DLogics$ SHACL shapes, for the sake of simplicity, since definitions do not substantially change (as we argue in the next subsection).
Instead, we instruct the reader to consider $\shapesin$, $\shapescandidates$ and $\shapesout$ (and other sets of shapes) as a set of $\DLogics$ SHACL shapes for the remainder of this section, and with respect to prior definitions.

We first consider soundness.
In the remainder of this section, we present further notes on extending the method, first considering and extended axiom inferences then additional features beyond $\DLogics$ axioms.
Finally, we remark how to extend the implementation.

\subsection{Soundness}

In this subsection, we argue for the soundness of our presented approach for more general $\DLogics$ SHACL shapes.
Indeed, we show that soundness of the method introduced in the main body of the paper is not affected by more general $\DLogics$ input shapes or shape candidates.
We revisit each proposition from the main body of the paper and consider, whether the proposition or its proof need to be adapted.

\begin{enumerate}
    \item \Cref{prop:equivalent-formalisms} is independent of the subset of SHACL, so the same proof applies.
    \item \Cref{prop:relevant-shapes} must be extended for the extended set of SHACL shapes to demonstrate that the method is useful for $\DLogics$ SHACL shapes (i.e., there is a meaningful finite set of candidates),
      though this does not effect soundness, as a set of candidates could also be determined by some heuristic. See \Cref{a:proofsext} for the extended proof. 
      Indeed, not generating the full set of shapes is likely desireable in most use cases, anyways.
    \item \Cref{prop:reduction-extended-graph} (and \Cref{cor:reduction-algo1}) were already proven for arbitrary $\DLogics$ axioms, and thus apply in the context of $\DLogics$ SHACL shapes as well.
    \item For \Cref{prop:una}, neither its definition, nor the definition of the UNA for a simple RDF graph, depend on the subset of SHACL.
    \item Similarly, for \Cref{prop:cwa} the proof is independent of the subset of SHACL and still applies as well.
    \item The proof for \Cref{prop:map} does not depend on the subset of SHACL shapes as well, and thus the proposition holds.
    \item \Cref{prop:mapsin} does involve the set of input shapes. Here, we need to decide whether we extend the approach. As per the argument in the following subsection, we consider this extension to be future work, and limit expansion to the subset of Simple SHACL shapes. Then, the proof applies and the proposition holds. Note, that this is only a minor restriction, since extending the query with respect to $\DLogics$ SHACL shapes would be limited to shapes expressible as \sccqname anyways, which essentially means that we would need to include intersection of constraints in shapes as two sets of extension patterns, which is a trivial extension to the method we present.
    \item Finally, for \Cref{prop:properties}, while this step of the approach does depend on the input shapes, it only considers the role names in the vocabulary of the input shapes. Neither the method itself, nor the proof of this proposition, depend on the types of constraints expressed as input shapes, but rather are about the query patterns used to restrict these specific role names. Thus, the proposition (and its proof) still apply without modification.
\end{enumerate}

Thus, the method remains sound for $\DLogics$ SHACL shapes.
We only need to show that there is a sensible finite set of candidates for $\DLogics$ SHACL as well (see \Cref{a:proofsext} for the proof), 
in order to demonstrate that the algorithm can obtain a useful and finite set of result shapes.

For the sake of completeness, we mention that \Cref{prop:nm} (count of candidate shapes) does no longer apply for the extended method.
However, \Cref{prop:nm} is not required for soundness; indeed, a similar proposition could be formulated for the count of candidates for the extended method.

Finally, \Cref{theorem:np} (NP-hardness) clearly holds for the extended method, since the original problem can be trivially reduced to the extended problem by restricting the set of shapes to Simple SHACL.

\subsection{Extended Axiom Inference}

In the previous section we show that our approach is sound for $\DLogics$ SHACL.
This extension also makes the approach more powerful in multiple ways.
The set of input shapes can be extended to $\DLogics$ SHACL, thus, more expressive constraints are considered to hold on the input graph.
Similarly, the set of candidates (and thus output shapes) also includes more general shapes.

The remaining question to consider is whether or not more axioms can now be inferred in order to further improve the method:
The axioms inferred for the UNA-encoding and CWA-encoding are independent of the subset of SHACL, but rather depend on the query language (and indeed graph model) used.
Similarly, the axioms inferred as subsumptions between query variables (\emph{mappings}) again depend on the query language, not on the subset of SHACL; however, the extension approach utilizing input shapes does depend on the set of input shapes.
Here, additional rules could be added for extending the approach;
we leave this as future work, as this is not a substantial addition to the method.

\subsection{\texorpdfstring{SHACL Features Beyond $\DLogics$}{SHACL Features Beyond ALCHOI}}

We consider in passing our intuition on whether our approach can be extended for additional SHACL features beyond $\DLogics$.

\begin{itemize}
  \item Node target queries. Node target queries where omitted for the sake of simplicity, since inferring such shapes based on the query template would not be very productive. We believe this is a trivial extension to our method, if a use case were to require such shapes.
  \item Qualified number restrictions. Using an underlying DL with support for qualified number restrictions, we believe that there would not be an issue with supporting them. However, we omit them, since we think that there are only exceedingly rare circumstances, where meaningful qualified number restriction (other than existential and universal quantification) could indeed be inferred for SPARQL CONSTRUCT queries.
  The particular restrictions to consider likely form a finite set informed by the query template.
  \item Non-cyclic shape references. Non-cyclic shape references are syntactic sugar and can be resolved by substitution. Thus, our method essentially supports non-cyclic shape references already.
  \item Cyclic (i.e., recursive) shape references are not supported. For recursive SHACL shapes, sets of results shapes would no longer be independent, and thus, our filtering method not applicable. 
    However, we think that only \emph{validating} a given set of shapes that include recursive shape references over the axioms inferred by our method should be possible.
  \item SHACL features validating literal values. We omit literals for the sake of simplicity; from \sccqname queries, no interesting constraints on literal values could be inferred.
  Literal value constraints that occur in the input shapes could perhaps be maintained through an encoding via some utility concept definition.
\end{itemize}

\subsection{Practicality and Implementation}

In order to efficiently explore the finite (see also the next section), but much larger search space of candidate $\DLogics$ SHACL shapes, we suggest the following approaches.
Importantly, one can notice that full exploration of the candidate space is rarely required.
Indeed, a subset of result shapes entailing all other shapes is most sensible for the majority of use cases, no matter whether the use case is informing users (including redundant shapes would not be necessary, but rather confusing), suggesting shapes for some data set (e.g., data integration use case), or for validation in a programming language context, or any other automatic processing of result shapes, where a minimal set entailing a larger set would generally suffice. 
Indeed, in such automatic cases one may not need to instantiate any shapes, but instead rely only on the set of axioms, which already can be used to check for entailment of individual shapes to the extend required by such a use case.

In order to reduce the set of candidates, one can reduce the syntax by relying on the set semantics of the set of result shapes.
For example, union and intersection of constraints is not required on the top-level of a constraint, since both can be reconstructed from entailment in the result set (e.g., if both $\psi_1 \sqsubseteq \phi_1$ and $\psi_1 \sqsubseteq \phi_2$ are result shapes, trivially, $\psi_1 \sqsubseteq \phi_1 \sqcap \phi_2$ is as well. 
This holds similarly for union and some types of quantification.

Secondly, one can systematically cover the search space with a breath-first search strategy, where immediate candidates are validated, before constructing more complex shapes.
For example, if $\psi_1 \sqsubseteq \phi_1$ is not a result shape, then for target $\psi_1$ we do not need to validate any intersection involving $\phi_1$.

\paragraph{Our Implementation} Our implementation allows for arbitrary $\DLogics$ axioms as input.
However, when obtaining result shapes, only Simple SHACL shapes are enumerated.
Thus, in order to construct a larger set of result shapes, the API must be used to obtain the inferred axioms, which can then be used to search an arbitrary set of shapes candidates, 
constructed by some heuristic suitable to the use case.

  \balance
  \section{Proofs for the Extension}
\label{a:proofsext}

We show here, that a finite set of candidate shapes can be constructed for the extended method.
To this end, we revise \Cref{prop:relevant-shapes} as \Cref{prop:relevant-shapes-ext} for $\DLogics$ shapes.

\begin{proposition}\label{prop:relevant-shapes-ext}
  If a $\DLogics$ SHACL shape $s = \psi \sqsubseteq \phi$ is relevant for a \sccqname $q$, then $\voc(s) \subseteq \voc(q)$.
\end{proposition}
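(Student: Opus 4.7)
The plan is to establish the contrapositive: assume $\voc(s) \not\subseteq \voc(q)$ and show that no pair $(G_+, G_-)$ can witness relevance. I split on where the stray names live. The target grammar $\psi \Coloneqq A \mid \exists p.\top \mid \exists p^-.\top$ is unchanged from the Simple SHACL case, so if $\voc(\psi) \not\subseteq \voc(q)$ the proof of \Cref{prop:concept-empty} transfers verbatim: the CWA part of $\TBox_G$ forces $\psi^{\Int_G} = \emptyset$ on every canonical model $\Int_G$ with $\voc(G) \subseteq \voc(q)$, ruling out any $G_+$ witnessing a non-empty target.

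The substantive case is $\voc(\psi) \subseteq \voc(q)$ together with $\voc(\phi) \not\subseteq \voc(q)$, where $\phi$ is now an arbitrary $\DLogics$ description and the finite syntactic case split of \Cref{prop:relevant-shapes} no longer applies. I plan to prove by structural induction on $\phi$ the following key lemma: on $\Int_G$ for any $G$ with $\voc(G) \subseteq \voc(q)$, every maximal subterm of $\phi$ whose concept and role names all lie outside $\voc(q)$ evaluates to $\bot^{\Int_G}$ or $\top^{\Int_G}$. The base facts are already in place: $A^{\Int_G} = \emptyset$ for external $A$ by \Cref{prop:concept-empty}, and $p^{\Int_G} = \emptyset$ for external $p$, which collapses $\exists p.C \equiv \bot$ and $\forall p.C \equiv \top$ in the style of \Cref{prop:forall-empty} (with symmetric facts for $p^-$). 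The step cases for $\sqcap, \sqcup, \neg, \exists \rho.\_$, and $\forall \rho.\_$ are routine propagations of these values.

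Because $\DLogics$ is closed under $\sqcap, \sqcup, \neg$, the lemma yields a syntactic rewriting $\phi \rightsquigarrow \phi^\star$ with $\voc(\phi^\star) \subseteq \voc(q)$ and $\phi^{\Int_G} = (\phi^\star)^{\Int_G}$ on every graph over $\voc(q)$. Hence $s = \psi \sqsubseteq \phi$ and $s^\star := \psi \sqsubseteq \phi^\star$ are validated by exactly the same graphs with $\voc(G) \subseteq \voc(q)$, so any relevance witness for the one is a relevance witness for the other.

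The main obstacle I would flag to a careful reader is that the literal contrapositive is delicate for shapes such as $\psi \sqsubseteq \forall p.A$ with $p \in \voc(q)$ but $A \notin \voc(q)$: such a shape genuinely admits both $G_+$ (no outgoing $p$-edge on the target) and $G_-$ (a target with a $p$-successor), and is therefore relevant in the unmodified sense, yet $A \in \voc(s)$. The proposition is thus intended up to the $\DLogics$-equivalence $s \equiv s^\star$ established by the rewriting — equivalently, every relevance class has a canonical representative with $\voc(s) \subseteq \voc(q)$. With this reading, candidate enumeration over shapes with vocabulary in $\voc(q)$ remains complete for the extended method, which is exactly the use \Cref{prop:relevant-shapes-ext} is invoked for.
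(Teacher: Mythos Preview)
Your approach mirrors the paper's closely: both establish a reduction lemma showing that any $\DLogics$ concept description with names outside $\voc(q)$ is equivalent, over every graph $G$ with $\voc(G)\subseteq\voc(q)$, to a description whose vocabulary lies in $\voc(q)$ (or to $\top$/$\bot$). The paper factors this into two lemmas --- first for quantifier-free DNF descriptions, then adding $\exists\rho.C$ by induction on nesting --- whereas you run a single structural induction; the content is the same, and your propagation of the base facts through $\sqcap,\sqcup,\neg,\exists,\forall$ is exactly what the paper's two lemmas accomplish.

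Your flagged obstacle is well taken and in fact points at an imprecision the paper's own proof shares. For the case $\voc(\psi)\subseteq\voc(q)$ and $\voc(\phi)\not\subseteq\voc(q)$, the paper writes ``without loss of generality, assume that the concept description $\phi$ is fully reduced'' and then argues on the reduced form --- but replacing $\phi$ by an equivalent $\phi^\star$ changes $\voc(s)$, so this is not a WLOG for the literal syntactic statement. Your example $\psi\sqsubseteq\forall p.A$ with $p\in\voc(q)$, $A\notin\voc(q)$ is indeed relevant in the sense of \Cref{def:irrelevant-shape} yet has $\voc(s)\not\subseteq\voc(q)$; this is precisely the family that \Cref{prop:relevant-shapes} had to carve out as condition~(ii). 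Your reading --- the proposition holds up to the equivalence $s\equiv s^\star$ furnished by the rewriting, so every relevance class has a representative with vocabulary in $\voc(q)$ --- is exactly what both arguments actually establish, and it is what is needed for the intended application (a finite candidate set). In that sense your proposal is correct and more explicit about what is really being proved than the paper's version.
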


We only consider constraints, i.e., the right hand side $\phi$ of a $\DLogics$ SHACL shape $\targetquery \sqsubseteq \phi$, since for $\targetquery$, we already show that the set of target queries is finite, given a finite vocabulary of a query $\sccq$ (Proof of \Cref{prop:relevant-shapes}, \Cref{proof:relevant-shapes}).

Without loss of generality, we assume that all constraints are in disjunctive normal form, without (syntactical) duplications and with components sorted according to some total order (e.g., by syntactic construct and then alphanumerically by role, concept or individual names). 
Thus, patterns such as $A \sqcap A$ do not occur, and $B \sqcap A$ is considered equal to $A \sqcap B$.
Furthermore, we omit $\forall p.C$, since it is equivalent $\neg\exists p.\neg C$.

We define the following lemmas.
The first one (\Cref{prop:empty-larger}) intuitively means, that for each concept description defined according to the grammar presented in the lemma, if the vocabulary of this description is not a subset of the vocabulary of some graph $G$, then the result is either equivalent to $\top$ or $\bot$, or the concept description can be simplified, such that the resulting concept description is in the vocabulary of $G$, or equivalent to $\top$ or $\bot$.

\begin{lemma}\label{prop:empty-larger}
  Let $G$ be a Simple RDF graph, $(\TBox_G, G)$ the validation knowledge base of $G$, and $C_1$ a concept description defined by the following grammar
  \begin{align} 
    C_1 & ::= C_2 \sqcup C_1 \mid C_2\\
    C_2 & ::= C_3 \sqcap C_2 \mid C_3\\
    C_3 & ::= \neg C_4 \mid C_4 \\
    C_4 & ::= \top \mid \bot \mid A \mid \{a\}
  \end{align}
  where $A$ is a concept name and $a$ an individual name.
  Then, $\voc(C_1)\not\subseteq \voc(G)$ implies one of the following cases:
  \begin{enumerate}
    \item $(\TBox_G, G) \models C_1 \equiv \top$ or $(\TBox_G, G) \models C_1 \equiv \bot$
    \item There exists a concept description $(\TBox_G, G) \models C_1' \equiv C_1$, such that either $\voc(C_1')\subseteq\voc(G)$, or $\voc(C_1')\not\subseteq \voc(G)$ and $(\TBox_G, G) \models C_1 \equiv \top$ or $(\TBox_G, G) \models C_1 \equiv \bot$.
  \end{enumerate}
\end{lemma}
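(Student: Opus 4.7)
The plan is to rewrite $C_1$ into an equivalent $C_1'$ by first eliminating every literal whose concept name lies outside $\voc(G)$ and then normalizing the resulting Boolean combination. The semantic input is already at hand: by \Cref{prop:concept-empty}, for every concept name $A \notin \voc(G)$ we have $(\TBox_G, G) \models A \equiv \bot$, and hence also $(\TBox_G, G) \models \neg A \equiv \top$. Observe moreover that by \Cref{def:shape-voc}, $\voc(\{a\}) = \emptyset$, so nominals never contribute to $\voc(C_1)$; the hypothesis $\voc(C_1) \not\subseteq \voc(G)$ therefore forces at least one concept name $A \notin \voc(G)$ to actually occur somewhere in $C_1$.

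I would first define a substitution $r$ on $C_4$-terms by $r(A) = \bot$ for $A \notin \voc(G)$, $r(\top)=\top$, $r(\bot)=\bot$, and the identity otherwise, lift $r$ to $C_3$ using $\neg\bot \equiv \top$ and $\neg\top \equiv \bot$, and extend it homomorphically through $C_2$ and $C_1$. The result is a concept $C_1^0$ equivalent to $C_1$ modulo $(\TBox_G, G)$ that contains no concept name outside $\voc(G)$ and still fits inside the given grammar (up to the degenerate presence of $\top$ and $\bot$). Exhaustively applying the absorption identities $\bot \sqcap X \equiv \bot$, $\top \sqcap X \equiv X$, $\bot \sqcup X \equiv X$, and $\top \sqcup X \equiv \top$ then yields a $C_1'$ which either collapses to $\top$ or to $\bot$ outright---giving conclusion~1---or has lost all occurrences of $\top$ and $\bot$, in which case every surviving literal is of the form $A$ or $\neg A$ with $A \in \voc(G)$, or $\{a\}$ or $\neg\{a\}$, so that $\voc(C_1') \subseteq \voc(G)$ and conclusion~2 holds. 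Termination of the normalization is immediate since each rewrite strictly decreases syntactic size.

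The only real obstacle is syntactic bookkeeping: keeping $C_1'$ inside the prescribed DNF grammar while interleaving substitution with absorption, and handling mixed intermediate forms such as $\top \sqcap A$ or $\bot \sqcup \{a\}$. I would organize the argument as a structural induction on the four levels $C_4, C_3, C_2, C_1$, processing each level only after the one below has been fully rewritten and normalized. This keeps the invariants local, makes termination transparent, and reduces the proof of equivalence to a routine appeal to the Boolean identities together with \Cref{prop:concept-empty}, rather than an ad hoc case analysis on the overall shape of $C_1$.
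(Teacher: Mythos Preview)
Your proposal is correct and follows essentially the same strategy as the paper's proof: a level-by-level case analysis on $C_4, C_3, C_2, C_1$, replacing out-of-vocabulary concept names by $\bot$ and then applying the Boolean absorption identities to propagate $\top$/$\bot$ upward through $\sqcap$ and $\sqcup$. Your treatment of nominals is in fact more careful than the paper's own argument, which claims that $\{a\}^\Int$ is empty when $a \notin \voc(G)$; you correctly observe instead that $\voc(\{a\}) = \emptyset$ by \Cref{def:shape-voc}, so nominals never witness the hypothesis and are simply passed through by the substitution.
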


\begin{proof}
  Let $\Int$ be a model of $(\TBox_G, G)$.
  Note, that according to \Cref{lemma:isomorphic-models}, every model $\Int$ of $(\TBox_G, G)$ is isomorphic to the canonical model of $G$.

  We first consider the two trivial cases for $C_4$.

  \begin{enumerate}
    \item If $C_4$ is $\top$, then trivially $(\TBox_G, G) \models \top \equiv \top$.
    \item If $C_4$ is $\bot$, then trivially $(\TBox_G, G) \models \bot \equiv \bot$.
  \end{enumerate}

  We next consider the two remaining cases for $C_4$.

  \begin{enumerate}
    \item If $C_4$ is $A$ and $A \not\in \voc(G)$, then $A^\Int$ is empty.
      Thus, $(\TBox_G, G) \models A \equiv \bot$.
    \item If $C_4$ is $a \not\in \voc(G)$, then $\{a\}^\Int$ is empty. 
      Thus, $(\TBox_G, G) \models \{a\} \equiv \bot$.
    \item If $C_4$ is $\exists p.C_1$, we have the following cases:
  \end{enumerate}

  We next consider the cases for $C_3$.

  \begin{enumerate}
    \item If $C_3$ is $\neg C_4$ and $\voc(C_4)\not\subseteq \voc(G)$, then either $(\TBox_G, G) \models C_4 \equiv \bot$ (and thus $(\TBox_G, G) \models \neg C_4 \equiv \top$), or $(\TBox_G, G) \models C_4 \equiv \top$ (and thus $(\TBox_G, G) \models \neg C_4 \equiv \bot$).
    Thus, $(\TBox_G, G) \models C_3 \equiv \top$ or $(\TBox_G, G) \models C_3 \equiv \bot$, if $\voc(C_3)\not\subseteq \voc(G)$.

    \item If $C_3$ is $C_4$ and $\voc(C_4)\not\subseteq \voc(G)$, then, as previously shown, either $(\TBox_G, G) \models C_4 \equiv \top$ or $(\TBox_G, G) \models C_4 \equiv \bot$ and thus $(\TBox_G, G) \models C_3 \equiv \top$ or $(\TBox_G, G) \models C_3 \equiv \bot$, if $\voc(C_3)\not\subseteq \voc(G)$.

    \item If $C_3$ is $\neg B$ and $\voc(C_3)\not\subseteq \voc(G)$, then 
    $(\TBox_G, G) \models \neg B \equiv \top$, since, by definition, there exists no $\atomC{b}{B} \in G$.
  \end{enumerate}

  We next consider the cases for $C_2$ by induction.

  \begin{enumerate}
    \item If $C_2$ is $C_3$ and $\voc(C_3)\not\subseteq \voc(G)$, then, as previously shown, either $(\TBox_G, G) \models C_3 \equiv \top$ or $(\TBox_G, G) \models C_3 \equiv \bot$ and thus $(\TBox_G, G) \models C_2 \equiv \top$ or $(\TBox_G, G) \models C_2 \equiv \bot$, if $\voc(C_2)\not\subseteq \voc(G)$.

    \item If $C_2$ is $C_3 \sqcap C_2'$ and $\voc(C_3)\not\subseteq \voc(G)$, then, as previously shown, either $(\TBox_G, G) \models C_3 \equiv \top$ or $(\TBox_G, G) \models C_3 \equiv \bot$.

    In the first case, we can reduce the term to $C_2'$ (since $\top \sqcap C \equiv C$), and by induction, either $\voc(C_2')\not\subseteq \voc(G)$ and then $(\TBox_G, G) \models C_2' \equiv \top$ or $(\TBox_G, G) \models C_2' \equiv \bot$, or $\voc(C_2')\subseteq\voc(G)$.

    In the second case, then also $(\TBox_G, G) \models C_2 \equiv \bot$, since ($\bot \sqcap C \equiv \bot$).

    \item If $C_2$ is $C_3 \sqcap C_2'$ and $\voc(C_3) \subseteq \voc(G)$, then, for $C_2'$ if $\voc(C_2') \not\subseteq \voc(G)$ one of the other cases applies recursively.
  \end{enumerate}

  We finally consider the cases for $C_1$ by induction.

  \begin{enumerate}
    \item If $C_1$ is $C_2$ and $\voc(C_2)\not\subseteq \voc(G)$, then, as previously shown, either $(\TBox_G, G) \models C_2 \equiv \top$ or $(\TBox_G, G) \models C_2 \equiv \bot$ and thus $(\TBox_G, G) \models C_1 \equiv \top$ or $(\TBox_G, G) \models C_1 \equiv \bot$, if $\voc(C_1)\not\subseteq \voc(G)$.

    \item If $C_1$ is $C_2 \sqcap C_1'$ and $\voc(C_2)\not\subseteq \voc(G)$, then either, as previously shown, $(\TBox_G, G) \models C_2 \equiv \top$ or $(\TBox_G, G) \models C_2 \equiv \bot$.

    In the first case, then also $(\TBox_G, G) \models C_1 \equiv \top$, since $\top \sqcup C \equiv \top$.

    In the second case, we can reduce the term to $C_1'$ (since $\bot \sqcup C \equiv C$), and by induction, either $\voc(C_1')\not\subseteq \voc(G)$ and then $(\TBox_G, G) \models C_1' \equiv \top$ or $(\TBox_G, G) \models C_1' \equiv \bot$, or $\voc(C_1')\subseteq\voc(G)$.

    \item If $C_1$ is $C_2 \sqcup C_1'$ and $\voc(C_2) \subseteq \voc(G)$, then for $C_1'$ if $\voc(C_1') \not\subseteq \voc(G)$ one of the other cases applies recursively.
  \end{enumerate}

  Hence, we prove the lemma.
\end{proof}

For \Cref{prop:empty-larger-two}, we slightly adapt the allowed concept descriptions by allowing existential quantification for $C_4$, thus, the concept descriptions now cover arbitrary $\DLogics$ concept descriptions in disjunctive normal form.

\begin{lemma}\label{prop:empty-larger-two}
  Let $G$ be a Simple RDF graph, $(\TBox_G, G)$ the validation knowledge base of $G$, and $C_5$ a concept description defined by the following grammar
  \begin{align} 
    C_5 & ::= C_6 \sqcup C_5 \mid C_6\\
    C_6 & ::= C_7 \sqcap C_6 \mid C_7\\
    C_7 & ::= \neg C_8 \mid C_8 \\
    C_8 & ::= \top \mid \bot \mid A \mid \{a\} \mid \exists p.C_5 \mid \exists p^-.C_5
  \end{align}
  where $A$ is a concept name and $a$ an individual name.
  Then, $\voc(C_5)\not\subseteq \voc(G)$ implies one of the following cases:
  \begin{enumerate}
    \item $(\TBox_G, G) \models C_5 \equiv \top$ or $(\TBox_G, G) \models C_5 \equiv \bot$
    \item There exists a concept description $(\TBox_G, G) \models C_5' \equiv C_5$, such that either $\voc(C_5')\subseteq\voc(G)$, or $\voc(C_5')\not\subseteq \voc(G)$ and $(\TBox_G, G) \models C_5 \equiv \top$ or $(\TBox_G, G) \models C_5 \equiv \bot$.
  \end{enumerate}
\end{lemma}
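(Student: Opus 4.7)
The plan is to prove Lemma~\ref{prop:empty-larger-two} by structural induction on the depth of the concept description, reducing to Lemma~\ref{prop:empty-larger} for the base language and handling only the two new productions $\exists p.C_5$ and $\exists p^-.C_5$ in $C_8$. The outer induction on the disjunction/conjunction/negation structure (productions for $C_5$, $C_6$, $C_7$) will be essentially verbatim to the corresponding inductive arguments in the proof of Lemma~\ref{prop:empty-larger}, since those steps use only the Boolean algebra identities $\top \sqcup C \equiv \top$, $\bot \sqcup C \equiv C$, $\top \sqcap C \equiv C$, $\bot \sqcap C \equiv \bot$, and double negation/absorption, none of which depend on the specific form of $C_8$.

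The essential new work is at the $C_8$ level. Let $\Int$ be a model of $(\TBox_G, G)$; by Lemma~\ref{lemma:isomorphic-models}, $\Int$ is isomorphic to the canonical interpretation. I would split the existential case $C_8 = \exists \rho.C_5$ (with $\rho \in \{p, p^-\}$) along two independent conditions. First, if $\rho \notin \voc(G)$, then $\rho^\Int = \emptyset$, so $(\exists \rho.C_5)^\Int = \emptyset$ and hence $(\TBox_G, G) \models C_8 \equiv \bot$; this closes the subcase via option (1) of the lemma. Second, if $\rho \in \voc(G)$ but $\voc(C_5) \not\subseteq \voc(G)$, I invoke the induction hypothesis on $C_5$: either $(\TBox_G, G) \models C_5 \equiv \bot$, in which case $(\TBox_G, G) \models C_8 \equiv \bot$; or $(\TBox_G, G) \models C_5 \equiv \top$, in which case $(\TBox_G, G) \models C_8 \equiv \exists \rho.\top$, whose vocabulary is contained in $\voc(G)$ and so witnesses option (2); or there is a strictly simpler equivalent $C_5'$ with $\voc(C_5') \subseteq \voc(G)$, and then $C_8' \coloneqq \exists \rho.C_5'$ is the required witness for option (2).

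From these cases for $C_8$, the lifted cases for $C_7$ (negation), $C_6$ (conjunction), and $C_5$ (disjunction) follow by exactly the inductive scheme used in the proof of Lemma~\ref{prop:empty-larger}: whenever a sub-description simplifies to $\top$ or $\bot$, absorb it via the Boolean identities; whenever it simplifies to some $C'$ with $\voc(C') \subseteq \voc(G)$, substitute it in place; and recurse on the remaining operands. The normal-form assumption on the excerpt (disjunctive normal form, no syntactic duplication, sorted) ensures the recursion is well-founded and terminates in a concept whose every maximal subterm either lies in $\voc(G)$ or has been collapsed to $\top$ or $\bot$.

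The main obstacle is that the grammar is mutually recursive through $\exists \rho.C_5$, so a naïve induction on a single production does not close. I would therefore phrase the induction uniformly on the syntactic size (number of constructors) of the concept description and prove the statement simultaneously for $C_5$, $C_6$, $C_7$, $C_8$; each production strictly decreases the measure when recursing into immediate subterms, and the simplification steps never increase it. A minor bookkeeping point is that when $C_5 \equiv \top$ is produced inside $\exists \rho.C_5$, the resulting $\exists \rho.\top$ is not syntactically of the form given by the grammar, but it is semantically equivalent to $\exists \rho.\top$ which we can treat as $C_8$ with $C_5 = \top$; alternatively, we extend the grammar implicitly with $\top$ as an allowed argument, as already permitted by the $C_4$/$C_8$ productions.
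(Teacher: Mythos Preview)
Your proposal is correct and follows essentially the same approach as the paper: handle the new $\exists \rho.C_5$ cases at the $C_8$ level by a case split on whether $\rho \in \voc(G)$ and by applying the induction hypothesis to the inner $C_5$, then lift to $C_7,C_6,C_5$ verbatim via the Boolean-algebra arguments from Lemma~\ref{prop:empty-larger}. Your explicit choice of syntactic size as the induction measure and your separate treatment of the sub-case where the hypothesis yields a $C_5'$ with $\voc(C_5')\subseteq\voc(G)$ are minor clarifications of the same argument, not a different route.
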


\begin{proof}
  We prove this property by induction on the structure of $C_5$ (\Cref{prop:empty-larger-two}).
  To this end, we consider first as a base cases the case where in $C_8$ existential quantification is restricted to $\exists p.C_1$ (or $\exists p^-.C_1$, respectively).
  According to \Cref{prop:empty-larger}, then the property under investigation holds for $C_1$.
  Furthermore, we assume without loss of generality, that $C_1$ is fully reduced according to \Cref{prop:empty-larger}.
  Thus, if $\voc(C_1)\not\subseteq\voc(G)$ then either $(\TBox_G, G) \models C_1 \equiv \top$ or $(\TBox_G, G) \models C_1 \equiv \bot$, since otherwise $C_1$ would not be fully reduced according to \Cref{prop:empty-larger}.

  We have the following cases if $C_8$ is $\exists p.C_1$ (cases for $\exists p^-.C_1$ work exactly equivalently):

  \begin{enumerate}
    \item If $p \not\in \voc(G)$, then $p^\Int$ is empty, from which we can follow that $(\TBox_G, G) \models \exists p.C_1 \equiv \bot$.
    Note, that this holds independently from $C_1$.
    \item If $p \in \voc(G)$ and $\voc(C_1) \not\subseteq \voc(G)$, then, by definition, either $(\TBox_G, G) \models C_1 \equiv \top$ or $(\TBox_G, G) \models C_1 \equiv \bot$.
    Thus, we can reduce the expression to $\exists p.\top$ or $\exists p.\bot$, respectively.
  \end{enumerate}

  Since the cases for $C_7$, $C_6$ and $C_5$ depend only on the common property between \Cref{prop:empty-larger-two} and \Cref{prop:empty-larger}, the proofs work exactly analogously to the proofs of $C_3$, $C_2$ and $C_1$ (\Cref{prop:empty-larger}), and are omitted for brevity here.

  Then, by induction, starting form the restricted $C_8$ as the base case, the property follows for arbitrary concept descriptions $C_5$.
  Thus, we prove the lemma.
\end{proof}

Finally, we prove \Cref{prop:relevant-shapes-ext}.

\begin{proof}[Proof of \Cref{prop:relevant-shapes-ext}]
  Let $s = \psi \sqsubseteq \phi$ be a $\DLogics$ SHACL shape, $q$ be a \sccqname, and $G$ be a Simple RDF graph with $\voc(G) \subseteq \voc(q)$, and $(\TBox_G, G)$ be the validation knowledge base of graph $G$. 
  We prove first property (i) of \Cref{prop:relevant-shapes-ext}.
  We have the following disjoint cases:

  \begin{enumerate}
  \item
    Case $\voc(\psi) \not\subseteq \voc(q)$.
    Then, by \Cref{prop:concept-empty}, $(\TBox_G, G) \models \psi \equiv \bot$ (since $\psi$ is, per definition, restricted to one of the cases covered in the lemma). 
    Hence, shape $\psi \sqsubseteq \phi$ is not relevant (\Cref{def:irrelevant-shape}).

  \item Case $\voc(\psi) \subseteq \voc(q)$ and $\voc(\phi) \not\subseteq \voc(q)$. 
  Let us assume, without loss of generality, that the concept description $\phi$ is fully reduced according to \Cref{prop:empty-larger-two}.
  (Note, that if due to reduction $\voc(phi) \not\subseteq\voc(G)$ no longer applies, the next case below would be applicable.)

  Then, according to \Cref{prop:empty-larger-two}, either $(\TBox_G, G) \models C_1 \equiv \top$ or $(\TBox_G, G) \models C_1 \equiv \bot$, in which case the shape is not relevant (\Cref{def:irrelevant-shape}).

  \item Case $\voc(\psi) \subseteq \voc(q)$ and $\voc(\phi) \subseteq \voc(q)$. In this case, property (i) is trivially satisfied.
  \end{enumerate}

  Thus we prove \Cref{prop:relevant-shapes-ext}.
\end{proof}

\begin{corollary}\label{cor:finite}
  As corollary of \Cref{prop:relevant-shapes-ext}, the set of $\DLogics$ shapes over $\voc(q)$ of a query $q$ is \emph{not} finite.
\end{corollary}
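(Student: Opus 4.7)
The plan is to exhibit an explicit infinite family of pairwise non-equivalent $\DLogics$ shapes whose vocabulary is contained in $\voc(q)$, which then immediately witnesses that the set is not finite (regardless of whether one identifies syntactically identical shapes or shapes up to logical equivalence). Throughout I assume the mildly non-trivial case that $\voc(q)$ contains at least one concept name $A$ and at least one role name $p$; degenerate vocabularies are handled at the end.

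First I would define, for each $n \geq 1$, the concept description
\[
  C_n \;\coloneqq\; \underbrace{\exists p.\exists p.\cdots \exists p.}_{n\text{ occurrences of }\exists p}\, A,
\]
and the shape $s_n \coloneqq A \sqsubseteq C_n$. By \Cref{def:glogics-shacl-syntax}, $A$ is a legal target query and $C_n$ is a legal $\DLogics$ constraint, and $\voc(s_n) = \{A, p\} \subseteq \voc(q)$, so each $s_n$ is a $\DLogics$ SHACL shape over $\voc(q)$. Syntactically the $s_n$ are pairwise distinct, which already suffices to show the claim under the syntactic reading.

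To make the proof robust under the semantic reading I would show that the $C_n$ are pairwise non-equivalent, hence so are the $s_n$. For each $n$, I would construct an interpretation $\Int_n$ whose domain is a $p$-chain $a_0, a_1, \dots, a_n$ with $p^{\Int_n} = \{(a_i, a_{i+1}) \mid 0 \leq i < n\}$ and $A^{\Int_n} = \{a_0, a_n\}$. Then a straightforward induction on the nesting depth shows $a_0 \in C_n^{\Int_n}$ but $a_0 \notin C_{n+1}^{\Int_n}$, so $C_n \not\equiv C_{n+1}$. Consequently the shapes $\{s_n\}_{n \geq 1}$ form an infinite set of pairwise non-equivalent shapes in the set of shapes over $\voc(q)$.

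The main obstacle is purely definitional, namely pinning down the intended notion of distinctness and the corner case where $\voc(q)$ lacks a role name. If no role name is available, the chain construction fails, but one can replace it by the observation that nesting of disjunctions with nominals or concept names already yields infinitely many syntactically distinct shapes (e.g., $A \sqsubseteq A \sqcup A$, $A \sqsubseteq A \sqcup A \sqcup A$, \ldots), and for the semantic reading, if additionally $\voc(q)$ contains distinct individual names $a_1, a_2, \ldots$, the shapes $\top \sqsubseteq \neg\{a_1\} \sqcup \cdots \sqcup \neg\{a_k\}$ are pairwise non-equivalent. The usual case with at least one concept and one role name is captured cleanly by the chain argument, which I would present as the main construction.
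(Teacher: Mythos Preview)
Your proposal is correct and rests on the same observation as the paper: the recursive grammar of $\DLogics$ concept descriptions produces infinitely many syntactically distinct shapes over any non-empty finite vocabulary. The paper's own proof is a single sentence---``Follows immediately by inspection of the syntax of a $\DLogics$ concept description over a finite vocabulary''---so your explicit chain family $C_n = \exists p.\cdots\exists p.A$ is precisely an instantiation of that inspection. The additional work you do on semantic non-equivalence via the interpretations $\Int_n$ is not needed for the corollary as stated (shapes are axioms, hence syntactic objects, and the paper clearly reads the set syntactically), but it is a nice strengthening that shows the infinitude survives even modulo logical equivalence.

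Two small quibbles in your corner-case discussion: first, $\top$ is not an admissible target query in $\DLogics$ SHACL (\Cref{def:glogics-shacl-syntax} restricts $\targetquery$ to $A$, $\exists p.\top$, $\exists p^-.\top$), so the fallback family with target $\top$ would need a concept name instead; second, individual names are not part of $\voc(q)$ (\Cref{def:shape-voc}), so appealing to ``distinct individual names in $\voc(q)$'' is off. Neither affects your main construction, and for the syntactic reading your $A \sqsubseteq A \sqcup A \sqcup \cdots$ family already handles the role-free case.
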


\begin{proof}
  Follows immediately by inspection of the syntax of a $\DLogics$ concept description over a finite vocabulary.
\end{proof}

However, we can further restrict the set of candidates to obtain a meaningful, finite set of shapes.
To this end, we first define the quantification \emph{nesting depth} as a property of an $\DLogics$ concept description.

\begin{definition}
  The nesting depth $\nest(C)$ is defined as:
  \begin{align}
    \nest(\exists p.C) &:= 1 + \nest(C)\\
    \nest(\forall p.C) &:= 1 + \nest(C)\\
    \nest(C_1 \sqcap C_2) &:= \max(\nest(C_1), \nest(C_2))\\
    \nest(C_1 \sqcup C_2) &:= \max(\nest(C_1), \nest(C_2))\\
    \nest(C) & := 0 \quad \textit{for all other cases}
  \end{align}
\end{definition}

\begin{example}
  The nesting depth $\nest(\exists p . A)$ is $1 + 0 = 1$.
  The nesting depth $\nest(\forall p.A \sqcap \exists p. (B \sqcup \exists p.C))$ is $\max(1 + 0, 1 + (\max(0, 1 + 0))) = 2$.
\end{example}

Then, we restrict the nesting depth of candidate $\DLogics$ SHACL shapes over the vocabulary $\voc(q)$ of a given query $q$ to the diameter of the variable connectivity graph $\vcg(q)$.

\begin{proposition}\label{cor:finite-two}
  Given a query $q$, the set of relevant $\DLogics$ SHACL shapes over $\voc(q)$ and with finite nesting depth is finite.
\end{proposition}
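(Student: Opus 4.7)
The plan is to proceed by induction on the nesting depth, showing that for any fixed bound $d$, the collection of equivalence classes of $\DLogics$ concept descriptions built over $\voc(q)$ with $\nest(C) \le d$ is finite. Since \Cref{prop:relevant-shapes-ext} constrains both $\voc(\psi)$ and $\voc(\phi)$ of a relevant shape to lie within $\voc(q)$, and since the set of admissible target queries $\psi$ is already finite ($n$ concept-name targets plus $2m$ existential role targets, as in \Cref{prop:nm}), the claim reduces to finiteness on the constraint side.

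First, I would fix a syntactic canonical form for concept descriptions, using disjunctive normal form with conjuncts and disjuncts stored as duplicate-free, totally-ordered sets, and with negation pushed to the leaves; two descriptions with the same canonical form are considered identical. Under this convention the base case $d=0$ consists of Boolean combinations over the finite atom set $\{\top,\bot\}\cup(\voc(q)\cap\ConceptNames)\cup\{\{a\}\mid a\in I_q\}$, where $I_q$ is the finite pool of individual names permitted in relevant shapes. The number of such canonical Boolean combinations is at most $2^{2^{|\text{atoms}|}}$, hence finite.

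For the inductive step at depth $d+1$, any description is, up to normalization, a Boolean combination of atomic concepts and quantified descriptions of the form $\exists \rho.C$ or $\forall \rho.C$, with $\rho\in\{p,p^-\mid p\in \voc(q)\cap\RoleNames\}$ and $\nest(C)\le d$. By the induction hypothesis there are only finitely many such $C$, so combined with the $2m$ role descriptions this yields a finite set of quantified building blocks. Boolean combinations of a finite atom set again produce finitely many canonical forms, completing the induction.

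The main obstacle I anticipate is the treatment of nominals $\{a\}$: the paper's $\voc$ captures only concept and role names, so a priori nothing bounds which individual names may appear in a candidate constraint. I would close this gap by arguing, in the spirit of \Cref{prop:relevant-shapes-ext} (cases using \Cref{prop:empty-larger,prop:empty-larger-two}), that a nominal $\{a\}$ with $a$ outside a fixed finite pool $I_q$ (e.g., the individual names occurring in $q$) collapses to $\bot$ in every validation knowledge base built from graphs with $\voc(G)\subseteq\voc(q)$, and can therefore be eliminated from any canonical form without changing its semantics on the relevant graphs. This justifies restricting the atom set in the base case to a finite $I_q$, after which the induction goes through as described.
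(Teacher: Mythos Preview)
Your argument is correct and is, in essence, a fully worked-out version of the paper's own proof, which reads in its entirety: ``Follows immediately by inspection of the syntax of a $\DLogics$ concept description over a finite vocabulary and \Cref{prop:relevant-shapes-ext}.'' Your induction on the depth bound with DNF canonical representatives is exactly the ``inspection of the syntax'' the paper gestures at, so the two approaches coincide; you have simply made explicit what the paper leaves to the reader.

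Your concern about nominals is well taken and is not something the paper's one-line proof addresses: since $\voc$ (\Cref{def:shape-voc}) tracks only concept and role names, \Cref{prop:relevant-shapes-ext} does not by itself bound the individual names that may occur in $\phi$. Your proposed resolution---reducing foreign nominals to $\bot$ via the simplification machinery of \Cref{prop:empty-larger} and \Cref{prop:empty-larger-two}---is in the same spirit as those lemmas, although note that the collapse $\{a\}^\Int=\emptyset$ holds only for graphs not containing $a$, and graphs with $\voc(G)\subseteq\voc(q)$ may still contain arbitrary individual names. In practice one fixes this by working modulo renaming of individual names outside $\obj(q)$, or by simply excluding nominals from the candidate language (as the implementation effectively does); either way, this is a refinement of the paper's argument rather than a different approach.
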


\begin{proof}
  Follows immediately by inspection of the syntax of a $\DLogics$ concept description over a finite vocabulary and \Cref{prop:relevant-shapes-ext}.
\end{proof}

}{
  \appendix
  \balance

}

\end{document}